\documentclass[journal,twoside,web]{ieeecolor}
\usepackage{generic}
\usepackage{cite}
\usepackage{amsmath,amssymb,amsfonts}
\usepackage{algorithmic}
\usepackage{graphicx}
\usepackage{algorithm,algorithmic}
\usepackage{hyperref}
\usepackage{subcaption}
\hypersetup{hidelinks=true}
\usepackage{textcomp}
\usepackage{array}
\usepackage{tabularx}
\usepackage{booktabs}
\usepackage{multirow,booktabs}
\usepackage{makecell}
\usepackage{comment}

\newtheorem{definition}{Definition}

\newtheorem{theorem}{Theorem}
\newtheorem{lemma}{Lemma}

\newtheorem{remark}{Remark}

\def\BibTeX{{\rm B\kern-.05em{\sc i\kern-.025em b}\kern-.08em
    T\kern-.1667em\lower.7ex\hbox{E}\kern-.125emX}}
\markboth{\hskip25pc IEEE TRANSACTIONS AND JOURNALS TEMPLATE}
{Author \MakeLowercase{\textit{et al.}}: Title}
\begin{document}
 \title{Privacy-Preserving Resilient Vector Consensus}
\author{Bing Liu, Chengcheng Zhao, \IEEEmembership{Member, IEEE}, Li Chai,  \IEEEmembership{Member, IEEE},\\ Peng Cheng, \IEEEmembership{Member, IEEE}, and Jiming Chen, \IEEEmembership{Fellow, IEEE}
\thanks{The conference version of this paper was presented at the
2024 American Control Conference (ACC), July 10–12, 2024, Toronto, Ontario, Canada.}
\thanks{This work was supported in part by the National Natural Science Foundation of China under Grants 62273305, U2441244, and 62293515, in part by the Zhejiang Provincial Natural Science Foundation of China under Grant
LR25F030002, in part by Zhejiang University Special Project of the State Key Laboratory of Industrial Control Technology ICT2025C05,  and in part by the Young Elite Scientist Sponsorship Program by cast of the China Association for Science and Technology under Grant YESS20210158 (Corresponding author: Chengcheng Zhao and Jiming Chen).}
\thanks{Bing Liu, Chengcheng Zhao, Li Chai, and Peng Cheng are with the State Key Laboratory of Industrial Control Technology, Zhejiang University, Hangzhou, Zhejiang 310027, China (email: \{bing\_liu, chengchengzhao, chaili, lunarheart\}@zju.edu.cn).}
\thanks{Jiming Chen is with the State Key Laboratory of Industrial Control Technology, Zhejiang University, Hangzhou, Zhejiang 310027, China,  and
also with the School of Automation, Hangzhou Dianzi University, Hangzhou, Zhejiang 310018, China (email: cjm@zju.edu.cn).}
}
\maketitle

\begin{abstract}
This paper studies privacy-preserving resilient vector consensus in multi-agent systems against faulty agents, where normal agents can achieve consensus within the convex hull of their initial states while protecting state vectors from being disclosed. 
Specifically, we consider a modification of an existing algorithm known as Approximate Distributed Robust Convergence Using Centerpoints (ADRC), i.e., Privacy-Preserving ADRC (PP-ADRC). Under PP-ADRC, each normal agent introduces multivariate Gaussian noise to its state during each iteration. We first provide sufficient conditions to ensure that all normal agents' states can achieve mean square convergence under PP-ADRC. Then, we analyze convergence accuracy from two perspectives, i.e., the Mahalanobis distance of the final value from its expectation and the Hausdorff distance-based alteration of the convex hull caused by noise when only partial dimensions are added with noise. Then, we employ concentrated geo-privacy to characterize privacy preservation and conduct a thorough comparison with differential privacy. Finally, numerical simulations demonstrate the theoretical results.

\end{abstract}

\begin{IEEEkeywords}
Multi-agent systems, resilient vector consensus, geo-privacy
\end{IEEEkeywords}

\section{Introduction}
\label{sec:introduction}
\IEEEPARstart{E}{nabling} multi-agent systems to collaborate effectively in solving complex problems or accomplishing tasks has garnered significant interest in fields such as aviation, robotics, and others \cite{du2017distributed,park2017fault,angrisani2012distributed}. In many application scenarios, such as distributed machine learning, multi-robot systems, and platoon, \emph{vector consensus} is frequently used, aiming to make all agents' state vectors reach agreement under a predefined rule. However, in multi-agent systems, there may be faulty or adversarial agents\footnote{ Note that our primary focus is on attacks at the agent layer, attacks can also occur at the communication layer \cite{he2020almost,he2021secure}. In the subsequent text, both faulty and adversarial agents are collectively referred to as faulty agents.} against the rule to break the consensus of the system or drive the consensus to an unsafe value. To maintain the safe consensus of normal agents, the concept of \emph{resilient vector consensus} has been developed. Resilient vector consensus means that all normal agents can achieve vector consensus where the final value lies within the convex hull of the initial states of normal agents despite the presence of faulty agents. Under resilient vector consensus, many distributed systems can work correctly under complex and changing circumstances. Therefore, the concept of resilient vector consensus is of paramount importance in the field of distributed systems.

\par In most consensus/resilient algorithms, agents directly send their states to their neighbors. However, if the transmitted states are intercepted by a malicious entity, this could lead to the disclosure of agents' sensitive information, resulting in privacy leakage. A typical example is group decision-making \cite{french1981consensus}. For instance, in a voting process, a group of individuals aims to reach a unified opinion using consensus algorithms. Individuals can be reluctant to disclose their true opinions to others due to social pressure, fear of judgment, or other concerns \cite{xia2022asynchronous}. Therefore, it becomes essential to protect the initial state of each participant, ensuring that their opinions remain confidential while still enabling the group to achieve consensus. This highlights the importance of privacy preservation as a critical aspect of resilient vector consensus.

\par Due to the scarcity of communication and computation resources in distributed systems and the high demand of encryption algorithms on these resources, differential privacy has become the primary choice for privacy preservation in consensus algorithms. Typically, in resilient vector consensus, we aim to protect the initial state vectors of the normal agents without the existence of a trusted control center, where local differential privacy (local-DP) \cite{kasiviswanathan2011can} is suitable to be utilized. Different from centralized differential privacy (central-DP), local-DP protects information during transmission from being intercepted, unrelated to differential attacks, and thus does not involve the concept of neighboring inputs. Local-DP also ensures that any pair of inputs generates similar outputs. 
However, the initial state range for each agent can be very large, and requiring pairs of distant initial states to produce similar outputs would reduce the system's utility\footnote{ Utility refers to the extent to which a system can still perform its original functions after the application of privacy-preserving mechanisms. In this paper, we especially refer to the convergence accuracy.}. This motivates us to use the concept of geo-privacy, characterizing the Euclidean distance between inputs as one of the metrics for privacy preservation. Specifically, we propose to use the concept of concentrated geo-privacy (CGP) \cite{liang2023concentrated}, which is a generalization of differential privacy, to describe the privacy preservation of resilient vector consensus. Apart from a more precise description of privacy preservation, it also offers advantages such as the Gaussian mechanism and advanced composition.

Much attention has been paid to resilient consensus algorithms, which can be roughly classified into two categories. One is to detect and preclude faulty agents\cite{silvestre2017stochastic,ramos2023discrete}, while the other one is to try to find a safe value despite the presence of faulty ones\cite{kieckhafer1994reaching,vaidya2012iterative,leblanc2013resilient,zhang2012robustness}. In this paper, we focus on the second one as it has more analytical results. Related works on resilient scalar/vector consensus and privacy-preserving resilient scalar consensus are provided below.

\textbf{Resilient scalar consensus} is designed for normal agents to converge to a common state that lies between the maximum and minimum values of normal agents' initial states. There are two frequently used ways:  One way is to use the median to filter extreme values and reach resilience. In \cite{zhang2012simple}, Zhang \emph{et al.} proposed a median consensus algorithm, where each normal agent updates by using its own value and the median of the states from its neighborhood. As for the other way, most of the algorithms rely on the strategy of simply disregarding suspicious values. For example, in a series of algorithms named Mean-Subsequence-Reduced (MSR) algorithm \cite{kieckhafer1994reaching,vaidya2012iterative}, the main idea is to discard a fixed number of largest and smallest values in each normal agent's neighborhood. Conversely, in the modification of the Weighted MSR (W-MSR) algorithm \cite{leblanc2013resilient,zhang2012robustness}, each normal agent only discards values that are greater or less than its own to maintain its state.

\textbf{Resilient vector consensus} is ensured by that each normal agent must seek a point that is always located in the convex hull of its normal neighboring state vectors at each iteration. The existing work can be classified into three categories. The first category utilizes Tverberg partitions to compute Tverberg points. Specifically, it includes the Byzantine vector consensus algorithm \cite{vaidya2013byzantine} and Approximate Distributed Robust Convergence algorithm \cite{park2017fault}. The second category calculates the intersection of multiple convex hulls. This involves Algorithm $1$ proposed in \cite{yan2022resilient} and the multidimensional approximate agreement algorithm described in \cite{mendes2013multidimensional}. The algorithm of the last category utilizes the concept of ``centerpoint", exemplified by ``Approximate Distributed Robust Convergence Using Centerpoint" (ADRC) algorithm \cite{abbas2022resilient}. However, these algorithms are rather time-consuming, and most of them require approximation algorithms, which consequently reduce fault tolerance. Notably, among these algorithms, the ADRC \cite{abbas2022resilient} stands out with its better tolerance for the mature approximation algorithm. In each iteration, every normal agent computes the centerpoint (an extension of the median in higher dimensions) of its neighborhood and updates its state as the weighted average of its current state and the centerpoint. Resilient vector consensus can be achieved as long as the communication topology and the number of faulty agents within the neighborhood meet certain requirements.

  \par \textbf{Privacy-preserving resilient scalar consensus} can primarily be achieved through three methods: cryptography-based methods \cite{hou2023privacy}, observability-based methods \cite{zhang2022privacy,tu2023privacy}, 
  and noise-adding-based methods \cite{hou2023privacy,fiore2019resilient}. Cryptography-based methods offer higher convergence accuracy; however, the processes of encryption, information exchange, and decryption are time-consuming, significantly reducing efficiency. Observability-based methods primarily reduce the observability of the system, making the system's states unobservable to external entities. The most widely applied approach in this category is state decomposition. The remaining substates interact and update internally with the exchanged substate, thereby preserving privacy. However, this approach lacks complete privacy preservation and does not provide a quantifiable metric for the strength of privacy preservation. On the other hand, noise-adding-based methods are typically more flexible, as they allow for the adjustment of noise intensity and distribution based on specific application scenarios and privacy requirements. In \cite{fiore2019resilient}, Fiore \emph{et al.} proposed a Differentially Private MSR (DP-MSR) algorithm, which adds decaying and zero-mean Laplace noise to the scalar states of the normal agents. They also analyzed resilient scalar consensus in terms of differential privacy and asymptotics in expectation.

\par Note that privacy-preserving resilient vector consensus is still an open issue. One simple way is to conduct a privacy-preserving resilient scalar consensus algorithm $d$ times, where $d$ is the dimension of states. However, this may cause the final value to fall outside the convex hull of the initial states of the normal agents \cite{abbas2022resilient}. Additionally, analyzing the performance of privacy-preserving resilient vector consensus by noise addition involves several key challenges: (i) The noise added makes the convergence nondeterministic, random, and asymptotic, rendering the existing definition of resilient vector consensus inapplicable. (ii) Due to the infinite sequence of noise and resilience, deriving an analytical solution for the final value is infeasible, making it difficult to quantify convergence accuracy. Additionally, high-dimensional convex hulls, unlike their one-dimensional counterparts defined by just two parameters, are characterized by numerous vertices and boundaries, which change randomly and sharply after noise is added. This randomness complicates both resilience and convergence accuracy analysis, particularly in arbitrarily high-dimensional settings. (iii) At each iteration, each agent transmits and updates its state information and the specific distribution of local state at each iteration is nondeterministic. We need to calculate the privacy leakage of the initial states caused by information transmission over infinitely many iterations, which is also a challenging task.

To solve the above challenges, we consider a modification of ADRC, named Privacy-Preserving ADRC (PP-ADRC), and provide rigorous theoretical analysis for convergence and privacy performance.
Compared to our conference version \cite{liu2024trade}, we use Gaussian noise instead of Laplace noise and conduct a more rigorous performance analysis. Meanwhile, we add the final value distribution analysis by characterizing the deviation of the final value from the expectation. Moreover, we use CGP instead of DP to characterize the privacy preservation and compare it with DP comprehensively. We also add more simulations under a $3$-dimensional case to illustrate our theoretical results. The contributions of this paper are given as follows.
\begin{itemize}
\item We consider a modification of ADRC, named PP-ADRC, where each agent adds Gaussian noise to the local state vector for local interaction. We provide sufficient conditions to ensure resilient vector consensus in expectation.
\item 
We analyze the final value from two perspectives. First, we utilize the Mahalanobis distance to analyze the residuals of the final value from the expected one given a specified probability upper bound. Secondly, we employ the Hausdorff distance to assess the change between the convex hulls with and without noise with a specified probability upper bound, where only partial fixed dimensions are added with noise for each iteration.
\item 
We employ $\rho$-CGP to characterize privacy preservation without detailed distributions of the outputs. Through a detailed comparison with differential privacy, we demonstrate that $\rho$-CGP can provide advantages for resilient vector consensus. 
\end{itemize}
\par The organization of this work is as follows. We introduce preliminaries, problem formulation, and algorithm design in Sections \ref{sec:pre} and \ref{sec:problem}, respectively. Then, we analyze the convergence condition and final value in Section \ref{sec:convergence}. The privacy analysis is presented in Section \ref{sec:privacy}. Simulation results are presented in Section \ref{sec:sim}. Finally, conclusions and avenues for future research are outlined in Section \ref{sec:conclusion}.

\section{PRELIMINARIES}\label{sec:pre}
\par In this section, we introduce some preliminaries on the basic notations, reachable graph sequence, and privacy notions.
\par \textbf{Basic Notations:} Throughout this paper, we denote by $\mathbb{R}^{d}$ the $d$-dimensional Euclidean space, by $\mathbb{R}^{m\times d}$ the space of all $m\times d$-dimensional matrices, and by $(\mathbb{R}^{m\times d})^\mathbb{N}$ the space of matrix-valued sequences in $\mathbb{R}^{m\times d}$, where $\mathbb{N}$ denotes the set of natural numbers. The natural logarithm is denoted by $\log$. Let $\mathbf{A}$ be a matrix. Then, we denote by $[\mathbf{A}]_i,[\mathbf{A}]^j,[\mathbf{A}]_{ij},\mathbf{A}^{\top}$ the $i$ th row, the $j$ th column, the $(i,j)$ th element, and the transpose of the matrix $\mathbf{A}$. We then consider two matrices $\mathbf{A}$ and $\mathbf{B}$, and if $[\mathbf{A}]_{ij}\leq [\mathbf{B}]_{ij}$ holds for any pair of $i,j$, we say $\mathbf{A}\leq \mathbf{B}$. For a vector $x\in \mathbb{R}^d$, we denote by $x_{\operatorname{max}}$ and $x_{\operatorname{min}}$ maximum element and minimum element of it, respectively. Then, we consider a matrix consisting of $n$ row vectors $\mathbf{x}=[x_1,x_2,\ldots,x_n]^{\top}$, where $x_i$ denotes the $i$th vector and $x_{i,k}$ stands for the $k$th element of the vector $x_i$. We denote by $\mathbf{1}$ a column vector of ones, and by $I_d$ the $d$-dimensional identity matrix. 
A matrix is row-stochastic if all its elements are nonnegative and each row sums up to $1$. For a given point set $C\subseteq\mathbb{R}^d$, we denote by $\text{conv}(C)$ the convex hull of the point set $C$, by $|C|$ the cardinality of $C$. For two different vectors $x,x'\in \mathbb{R}^d$, we represent by $\mathrm{dist}(x,x')=||x-x'||_2$ the Euclidean distance between $x$ and $x'$. For two $n$-tuples of vectors $\mathbf{x}=[x_1,x_2,\ldots,x_n]^\top$ and $\mathbf{x}'=[x_1',x_2',\ldots,x_n']^\top$, $\mathrm{dist}(\mathbf{x},\mathbf{x}')$ stands for the maximum distance among all the vectors, which is $\max\limits_i \operatorname{dist}(x_i,x_i')$. In probability theory, given a random variable $z \in \mathbb{R}$, we denote by $\mathbb{E}(z)$ and $\operatorname{var}(z)$ the expectation and the variance of $z$, respectively. For a random vector $z=[z_1,\ldots,z_d]^\top \in \mathbb{R}^d$, the expectation of $z$ is $\mathbb{E}(z)=[\mathbb{E}(z_1),\ldots,\mathbb{E}(z_d)]^\top$. For a topological space $\mathcal{T}$, we denote by $\mathcal{B}(\mathcal{T})$ the set of Borel subsets of $\mathcal{T}$, and $\mathrm{P}$ is the corresponding probability measure. 
For a zero-mean $d$-dimensional Gaussian distribution, the probability density function (pdf) is given by $G(x;\Sigma)=\frac{1}{\sqrt{(2\pi)^d|\Sigma|}}\mathrm{exp}\left(-\frac{1}{2}x^{\top}\Sigma^{-1}x\right)$. For two functions $f(x)$ and $g(x)$, $f(x)=\Omega(g(x))$ means that $g(x)$ is the asymptotic lower bound of $f(x)$.
\par \textbf{Reachable Graph Sequence:} We denote by $\mathcal{G}(t)=(\mathcal{V},\mathcal{E}(t)),\text{ }t=0,1,2,\ldots$ a time-varying directed graph, where $\mathcal{V}$ is the set of vertices, and $\mathcal{E}(t)\subseteq \mathcal{V} \times \mathcal{V}$ is the set of edges. Then, we introduce two definitions of reachability with a relationship to a graph sequence below \cite{park2017fault}.
\begin{definition}
\label{def:jR}
(Jointly Reachable Graph Sequence): Given $j\in\mathbb{N}$ and a finite time sequence $T_j, T_j+1,\ldots, T_{j+1}-1$, the corresponding finite sequence of graphs $\mathcal{G}(T_j), \mathcal{G}(T_j+1), \ldots, \mathcal{G}(T_{j+1}-1) $ is said to be a jointly reachable graph sequence if in the union of graphs $\bigcup_{t=T_j}^{T_{j+1}-1}\mathcal{G}(t) = \left(\mathcal{V},\bigcup_{t=T_j}^{T_{j+1}-1}\mathcal{E}(t)\right)$, there is a vertex $v\in\mathcal{V}$ such that $\forall v'\ne v$, we can find a path from $v'$ to $v$ in the union of graphs. 
\end{definition}
\begin{definition}
\label{def:ijR}
(Repeatedly Reachable Graph Sequence): An infinite sequence of graphs $\mathcal{G}(0), \mathcal{G}(1), \ldots$ is said to be a repeatedly reachable graph sequence if there is an infinite time sequence, $T_{\mathrm{inf}}:0=T_1 < T_2 < \ldots$, such that for any $j\in\mathbb{N}$, the subsequence $\mathcal{G}(T_j), \mathcal{G}(T_j + 1),\ldots, \mathcal{G}(T_{j+1} - 1)$ is a jointly reachable graph sequence.
\end{definition}
\par \textbf{Privacy Notions:} We first give two formal definitions of the standard local-DP and then introduce CGP.
\begin{definition}
($\varepsilon$-DP): Given spaces $U$, $V$, and $\varepsilon\in \mathbb{R} \geq 0$, a randomized function $M:U\rightarrow V$ is $\varepsilon$-differentially private, if for any pair of inputs $x$, $x'\in U$ and any $S\subseteq V$, we have $\mathrm{P}\{M(x)\in S\}\leq e^{\varepsilon}\mathrm{P}\{M(x')\in S\}.$
\end{definition}
\begin{definition}
($(\varepsilon,\delta)$-DP): Given spaces $U$, $V$, and $\varepsilon$, $\delta \in \mathbb{R} \geq 0$, a randomized function $M$ is $(\varepsilon,\delta)$-differentially private, if for any pair of inputs $x$, $x'\in U$ and any $S\subseteq V$,  $\mathrm{P}\{M(x)\in S\}\leq e^{\varepsilon}\mathrm{P}\{M(x')\in S\}+\delta$ holds. 
\end{definition}
\par DP introduces randomness into the output, making it nondeterministic and following a probability distribution. $\varepsilon$-DP ensures that the probability ratio of obtaining the same result for any two inputs is at most $e^\varepsilon$. In contrast, $(\varepsilon, \delta)$-DP relaxes this, allowing the ratio to exceed $e^\varepsilon$ with a probability no greater than $\delta$, i.e., $\varepsilon$-DP is achieved with a probability of at least $1-\delta$. Note that in both definitions, the inputs $x$ and $x'$ can be in the real domain and consist of one or more scalars or vectors. The parameter $\varepsilon$ in both definitions characterizes the privacy preservation strength, with smaller values of $\varepsilon$ providing stronger privacy preservation.
\par Additionally, the Gaussian mechanism is only applicable to $(\varepsilon, \delta)$-DP, due to the light tail of the Gaussian distribution. The pdf ratio for two Gaussian distributions with different means tends to infinity, making it impossible to bound the ratio by $e^\varepsilon$. In contrast, the Laplace mechanism, with a bounded probability density ratio for different means, is suitable for $\varepsilon$-DP. For further details, we refer to \cite{dwork2014algorithmic}.
\par It is worth noting that Definitions $3$-$4$ impose rather strict requirements for resilient vector consensus. As the range of possible initial states of an agent is typically large, ensuring that two far apart initial states produce similar outputs would cause significant system utility reduction. To solve this issue, geo-privacy has been widely studied, which uses Euclidean distance as one metric for privacy preservation. Supporting the Gaussian mechanism and better composition, CGP is selected here, which is established on the R\'enyi divergence given below \cite{renyi1961measures}.
\begin{definition}
(R\'enyi Divergence): Given two distributions $F$ and $G$ on domain $\operatorname{dom}(x)$ with pdf $f(x)$ and $g(x)$, respectively, R\'enyi divergence of order $\alpha>1$ is defined as
\begin{equation}\label{eq:renyi}
\begin{aligned}
D_\alpha(F\|G)&=\frac1{\alpha-1}\log\left(\int_{\operatorname{dom}(x)}f(x)^\alpha g(x)^{1-\alpha}\mathrm{d}x\right)\\
&=\frac{1}{\alpha-1}\log\left(\underset{x\thicksim f(x)}{\mathbb{E}}\bigg[\left(\frac{f(x)}{g(x)}\right)^{\alpha-1}\bigg]\right).
\end{aligned}
\end{equation}
\end{definition}

Note that $D_\alpha(F\|G)$ is determined by the expectation of the $\alpha - 1$ power of the ratio of two distributions at each point, quantifying the difference between two distributions. Meanwhile, larger values of $\alpha$ emphasize parts where the ratio is greater, while smaller values of $\alpha$ focus more on the overall average differences. The roles of logarithm and $\frac{1}{\alpha - 1}$ are intended to prevent numerical overflow in the results. R\'enyi divergence is monotonically non-decreasing with $\alpha$. 

\begin{definition}
($\rho$-CGP): Given spaces $U$, $V$, and $\rho\in \mathbb{R}\geq 0$, a randomized function $M:U\rightarrow V$ is said to satisfy $\rho$-CGP, if for any inputs $x,x'\in U$ and all $\alpha>1$, it holds that
\begin{equation}
D_{\alpha}(M(x)\|M(x'))\leq\alpha\rho \cdot \mathrm{dist}(x,x')^2. 
\end{equation}
\end{definition}
\par In CGP, $\rho$ provides the upper bound on $D_{\alpha}(M(x)\|M(x'))/\alpha/ \mathrm{dist}(x,x')^2$, characterizing the difference between the output distributions of $M$ for any pair of inputs, thereby quantifying the strength of privacy preservation. A smaller $\rho$ indicates that $M$ is less sensitive to the inputs, thereby providing stronger privacy preservation. Besides, the privacy should degrade gracefully as $\mathrm{dist}(x, x')$ increases. The parameter $\alpha$ plays a crucial role as it allows for the adjustment of privacy preservation strictness. A larger $\alpha$ emphasizes the worst-case scenario by highlighting the regions where the two distributions differ the most, while a smaller $\alpha$ focuses on the average difference. This flexibility enables the adaptation of privacy preservation levels according to varying scenario requirements.
  \par Under local-DP, the privacy parameter $\varepsilon$ provides a uniform upper bound based on the maximum distance between inputs, often resulting in stronger privacy for most input pairs than $\varepsilon$ indicates. In contrast, CGP integrates input distance into its privacy analysis, offering tighter, distance-dependent privacy guarantees. This allows CGP to achieve equivalent privacy preservation with less noise, thereby improving system utility.

\section{Problem Formulation and Algorithm Design}\label{sec:problem}
\par In this section, we first present the network model of the multi-agent system with faulty agents and define privacy attackers. We then introduce the PP-ADRC algorithm and provide a brief explanation of centerpoint and its computation methods. Finally, we state several interesting problems.
\subsection{Network Model}
\par We consider a multi-agent system with $n$ agents modeled by a time-varying directed graph $\mathcal{G}(t)=(\mathcal{V},\mathcal{E}(t)),\text{ }t=0,1,2,\ldots$, where $\mathcal{V}=\{1,2,\ldots,n\}$ is the set of agents, and $\mathcal{E}(t)\subseteq\mathcal{V}\times \mathcal{V}$ is the set of edges. The set of in-neighbors of an agent $i$ is denoted by $\mathcal{N}_{i}^{\text{in}}(t)=\{j\in\mathcal{V}|(j,i)\in\mathcal{E}(t)\}$, while the set of out-neighbors is $\mathcal{N}_i^{\text{out}}(t)=\{j\in\mathcal{V}|(i,j)\in\mathcal{E}(t)\}$.
\par The system has two types of agents, i.e., normal agents and faulty agents. Each normal agent updates its state through local interaction based on predefined rules. Conversely, faulty agents can behave arbitrarily and unpredictably. They are classified into two categories: Byzantine agents and malicious agents. Byzantine agents send different arbitrary states to different neighbors, while malicious agents can only send the same state to all neighbors \cite{leblanc2013resilient}. Malicious agents are actually a specific type of Byzantine agents. Therefore, for a more general case, we assume that the faulty agents in this paper are Byzantine agents. We denote by $\mathcal{F}\in\mathcal{V}$ the set of faulty agents. Let $\overline{\mathcal{V}}=\mathcal{V}-\mathcal{F}$, and $\overline{n}=n-|\mathcal{F}|$, where $\overline{\mathcal{V}}$ is the set and $\overline{n}$ is the number of all normal agents. Without loss of generality, we suppose $\overline{n}$ ones in front are the normal agents,  meaning that $\overline{\mathcal{V}}=\{1,2,\ldots,\overline{n}\}$. Besides, we denote by a $\overline{n}\times d$ matrix $\overline{\mathbf{x}}(t)=[x_1(t),x_2(t),\ldots,x_{\overline{n}}(t)]^\top$ the states of all normal agents, where $x_i(t)$ denotes the state of a normal agent $i\in \overline{\mathcal{V}}$, and by $\overline{\mathbf{x}}_0$ the initial states of the normal agents. We define the topology of normal agents by a time-varying directed graph $\overline{\mathcal{G}}(t)=(\overline{\mathcal{V}},\overline{\mathcal{E}}(t))$, where $\overline{\mathcal{E}}(t)\subseteq{\overline{\mathcal{V}}}\times \overline{\mathcal{V}}$. For each normal agent $i\in\overline{\mathcal{V}}$, we denote by $\overline{\mathcal{N}}_i^{\text{in}}(t)$ and $n_{f_i}(t)$ its set of in-neighbors in $\overline{\mathcal{G}}(t)$ and the number of its faulty in-neighbors in $\mathcal{G}(t)$, respectively. 
\subsection{Privacy Attackers}
Regarding privacy preservation, we consider two types of privacy attackers, as referenced in \cite{wang2019privacy}.
  \begin{enumerate}
    \item Privacy attackers are within the network. This type of attacker includes two categories: the honest-but-curious attacker and the malicious attacker. The former follows the pre-defined algorithm just like normal agents but attempts to steal other agents' privacy through the information received from network interactions. The latter refers to the previously mentioned Byzantine agents, who seek to disrupt consensus while simultaneously stealing the privacy of other agents.
    \item Privacy attackers are outside the network, i.e., eavesdroppers. This type of attacker is aware of the network topology and can wiretap communication channels to access the information exchanged between agents.
  \end{enumerate}

\subsection{Algorithm Design}
\par We consider a modification of the existing ADRC algorithm, i.e.,  PP-ADRC, as detailed in Algorithm \ref{alg:PP-ADRC}. To preserve privacy, zero-mean decaying Gaussian noise is added to the state of each normal agent during the communication phase. In ADRC, the key to achieving resilience lies in calculating the centerpoint $s_i(t)$ during the calculation phase. 

\begin{algorithm}[htbp]
  \small
  \caption{PP-ADRC}\label{alg:PP-ADRC}
  \begin{algorithmic}
  \STATE 
  \STATE {\bf Input:} $\mathcal{G}(t)$, $\overline{\mathbf{x}}(t)$, $\Sigma(t)=\lambda^2\upsilon^{2t} I_d$, $\gamma_i(t)$, and number of iterations $T$.
  \STATE {\bf Output:} final value $\xi$ 
  \STATE {\bf Initialization:} initialize states of normal agents $\overline{\mathbf{x}}_0$, $t=0$
   \FOR {$\{t=0,1,\ldots,T-1\}$}
  \FOR{each normal agent $i\in \overline{\mathcal{V}}$} 
    \STATE {\bf Transmission phase:}
    \STATE 	Add noise to its state, i.e.,
   \begin{equation}\label{eq:add_noise}
   y_i(t) = x_i(t)+\eta_i(t),    
   \end{equation}
  \STATE where $\eta_i(t)\in \mathbb{R}^d$ is a zero-mean decaying $d$-dimensional Gaussian noise with covariance matrix $\Sigma(t)$.
  \STATE Transmit $y_i(t)$ to its out-neighbors.
    \STATE {\bf Calculation phase:}
    \STATE Calculate centerpoint $s_i(t)$.
    \STATE {\bf Update phase:}	
    \STATE Update its state following 
    \begin{equation}\label{eq:update}
  x_i(t+1)=\gamma_i(t)s_i(t)+(1-\gamma_i(t))x_i(t),
  \end{equation} 
  where $0<\gamma_l\leq \gamma_i(t)\leq \gamma_m<1$ and $\gamma_l>1-\upsilon$.
    \ENDFOR
    \ENDFOR
  \end{algorithmic}
  \vspace{-2pt}
  \end{algorithm}

 \par Here, we choose Gaussian noise for it is canonical under CGP. We can obtain explicit analytical results compared to Laplace noise. In addition, in resilient vector consensus, there are more advantages that the Gaussian mechanism can bring to the system, which are also detailed as follows. 
  \begin{itemize}
  \item Gaussian noise is easier to implement in non-one-dimensional cases compared to Laplace noise in practical systems, which reduces computational complexity.
  \item The magnitude of Gaussian noise grows at a slower rate concerning the dimension $d$. Specifically, the scale of Gaussian noise is $\tilde{O}(\sqrt{d})$, whereas Laplace noise scales as $\tilde{O}(d)$. In resilient vector consensus, where the dimension $d$ can be arbitrarily high, Gaussian noise offers greater flexibility in parameter selection.
  \item Gaussian noise is an additive noise, meaning that the sum of two Gaussian noises is still a Gaussian noise. In many application scenarios of resilient vector consensus, such as multi-robot rendezvous, geometric data measurement is involved. However, many geometric measurements inherently contain Gaussian noise. The linearity of Gaussian noise makes it easier for the algorithm to achieve the desired privacy preservation parameters.
  \end{itemize}
\par We introduce decaying Gaussian noise to protect the initial states of normal agents. Privacy leakage occurs not only during the initial transmission but also at each subsequent iteration, though it diminishes over time. Thus, adding noise with the same intensity at every iteration is unnecessary, and injecting noise only to the initial states is also inadequate. A decaying variance better aligns with our privacy preservation requirements while ensuring system convergence. Using a fixed noise variance would prevent normal agents from reaching consensus, causing their state variances to diverge to infinity as time tends to infinity. Therefore, injecting decaying noise is essential for both privacy preservation and consensus. This is also a common approach in privacy-preserving scalar consensus \cite{fiore2019resilient,nozari2017differentially,mo2016privacy}.

\subsection{Centerpoint Introduction}
We provide a brief introduction to the concept and computation of the centerpoint.
\par For a set of $n$ points in general position in $\mathbb{R}^d$, there always exists a centerpoint $p$, which is guaranteed to lie within the relative interior of the convex hull of any subset containing more than $\frac{nd}{d+1}$ points from the given point set \cite{abbas2022resilient}. Therefore, if the fraction of faulty agents in a normal agent's neighborhood is lower than $\frac{1}{d+1}$, the centerpoint always lies in the relative interior of the convex hull formed by the normal agents' state vectors. However, the fraction $\frac{1}{d+1}$ is just a theoretical value (This fraction corresponds to the depth of the centerpoint.). When the dimensionality exceeds three, calculating a centerpoint becomes a coNP-Complete problem, where we need approximate algorithms. The specific computation method is as follows.
 \par The computation of the centerpoint varies depending on the dimension. In the two-dimensional case, the prune and search algorithm proposed in \cite{jadhav1993computing} can find the ideal centerpoint with time complexity of $O(n)$. For three or more dimensions, the exact algorithm for computing the ideal centerpoint was proposed in \cite{chan2004optimal}, with time complexity of $O(n^{d-1})$. Chan demonstrated that centerpoint computation can be formulated as an Implicit Linear Programming problem. However, the algorithm's complexity grows significantly with higher dimensions, restricting its practicality to low-dimensional cases. In high-dimensional settings, approximation algorithms are necessary, categorized into depth approximation and location approximation. The depth approximation algorithm, proposed in \cite{miller2009approximate}, has time complexity of $O(n^{\log d})$. This algorithm projects the point set onto a line and uses an $r$-partition, yielding a centerpoint with depth $\Omega\left(\frac{1}{d^2}\right)$.  For location approximation, an algorithm proposed in \cite{cherapanamjeri2024computing} provides a polynomial-time solution. It assigns each point a Gaussian perturbation with scale $\varepsilon$ and then uses the Radial Isotropic Transformation to obtain an $\varepsilon$-approximate centerpoint with depth $\Omega\left(\frac{1}{d}\right)$. Specifically, the algorithm guarantees that the point returned is at most $\varepsilon$ away from any half-space defined by the centerpoint with depth at least $\Omega\left(\frac{1}{d}\right)$ in any direction.

\subsection{Problem of Interests}
\par Different from the resilient vector consensus without adding noise, the randomness of noise makes the result under PP-ADRC become nondeterministic. Thus, we need to characterize metrics for resilient vector consensus and privacy preservation after noise addition, which are provided below.
\begin{definition}\label{def:consensus}
(Resilient Vector Consensus): The resilient vector consensus is achieved if state vectors of all normal agents satisfy the following two conditions:
\begin{itemize}
\item $\emph{Safety}$: For any normal agent $i\in \overline{\mathcal{V}}$ and any $t>0$, $\mathbb{E}[x_i(t)]$ must be in the convex hull formed by the initial states of the normal agents, i.e.  $\mathbb{E}[x_i(t)] \in \operatorname{conv}\left(x_1(0),x_2(0),\ldots,x_{\overline{n}}(0)\right)$.
\item $\emph{Agreement}$: For any pair of normal agents $i,j\in \overline{\mathcal{V}}$, it always holds that $\lim\limits_{t\rightarrow\infty}\mathbb{E}[||x_{i}(t)-x_{j}(t)||_2^2]=0$.
\end{itemize}
\end{definition}
\par To characterize the privacy preservation of the state vector sequences, we first denote by three matrices $\overline{\mathbf{x}}(t),\ \overline{\boldsymbol{\eta}}(t),\ \text{and}\ \overline{\mathbf{y}}(t)\in \mathbb{R}^{\overline{n}\times d}$ the states, noises, and noisy states of all the agents at time $t$, respectively. Then, we define the sequences of matrices $\mathbf{X}=\{\overline{\mathbf{x}}(t)\}_{t=0}^\infty, \mathbf{N}=\{\overline{\boldsymbol{\eta}}(t)\}_{t=0}^\infty,\ \text{and}\ \mathbf{Y}=\{\overline{\mathbf{y}}(t)\}_{t=0}^\infty$ from $t=0$ to $\infty$, and they are in the sample space $\Omega=(\mathbb{R}^{\overline{n}\times d})^\mathbb{N}$. Since the algorithm is specific, given an initial state $\overline{\mathbf{x}}_0$, the sequences $\mathbf{X}$ and $\mathbf{Y}$ are uniquely determined by the noise sequence $\mathbf{N}$ and we define the corresponding function as $\mathbf{Y}_{\overline{\mathbf{x}}_0}(\mathbf{N})=\mathbf{Y}$.
\begin{definition}
\label{def:CGP2}
($\rho$-CGP in Resilient Vector Consensus): Given any pair of initial states $\overline{\mathbf{x}}_0$, $\overline{\mathbf{x}}_0'$, and $\rho>0$, the system is said to satisfy $\rho$-CGP if and only if for all $\alpha>1$
\begin{equation}\label{eq:cgp_rvs}
D_\alpha(\mathbf{y}_{\overline{\mathbf{x}}_0}(\mathbf{N})\|\mathbf{y}_{\overline{\mathbf{x}}_0'}(\mathbf{N})) \leq \alpha \rho \cdot \mathrm{dist}(\overline{\mathbf{x}}_0,\overline{\mathbf{x}}_0')^2,     
\end{equation}
where $\mathbf{y}_{\overline{\mathbf{x}}_0}(\mathbf{N})$ and $\mathbf{y}_{\overline{\mathbf{x}}_0'}(\mathbf{N})$ are the pdfs of $\mathbf{Y}_{\overline{\mathbf{x}}_0}(\mathbf{N})$ and $\mathbf{Y}_{\overline{\mathbf{x}}_0'}(\mathbf{N})$, respectively.
\end{definition}
\begin{remark}
Privacy leakage can occur at both the initial iteration and each subsequent iteration. To address this, we focus on the privacy performance of the infinite matrix sequence of noisy states $\mathbf{Y}$. CGP ensures that for any two initial states $\overline{\mathbf{x}}_0$ and $\overline{\mathbf{x}}_0'$, the difference in the distributions of the infinite sequences measured by Rényi divergence is bounded by $\alpha \rho \cdot \mathrm{dist}(\overline{\mathbf{x}}_0, \overline{\mathbf{x}}_0')^2$. Essentially, an observer cannot distinguish between $\overline{\mathbf{x}}_0$ and $\overline{\mathbf{x}}_0'$ based on the output sequences.
\end{remark}

\par After noise addition, convergence analysis and privacy quantization become challenging. For convergence analysis, we can only describe the situation in terms of expectation. But for the final value, although the expectation is ensured to lie within the convex hull of the normal agents' initial states, we cannot obtain an analytical solution. Specifically, the exact result of each convergence and its deviation from the expectation are unknown. In addition, characterizing the change of the convex hull after adding noise is very challenging because the vertices and boundaries are constantly changing. Without analytical expressions, it is hard to quantitatively describe the changes in the convex hull. For privacy preservation, we add an infinite amount of noise, and the output is an infinite sequence of state matrices without an explicit distribution expression. Analyzing the Rényi divergence of such infinite matrix sequences is inherently complex and necessitates a specialized approach. Consequently, we are interested in the following questions of the system performing PP-ADRC:
\begin{enumerate}
\item Under what conditions can we demonstrate that resilient vector consensus can be achieved?
\item How can we determine the change in the convex hull of the initial states of normal agents after adding noise? Furthermore, although the expectation of the final value remains within the original convex hull, how can we quantify the deviation from it for each single final value?
\item How can we derive the upper bound of $\rho$-CGP without knowing the specific distribution of two infinite matrix sequences? What are the advantages compared to  DP?
\end{enumerate}

\section{Convergence Analysis}
\label{sec:convergence}
\par In this section, we provide sufficient conditions to ensure resilient vector consensus. Then, we analyze the convergence accuracy from two perspectives, i.e., the deviation of the final value from the expectation and the change of the convex hull.
\subsection{Resilient Vector Consensus}
\par We first formulate the state evolution of all normal agents as a linear time-varying (LTV) system using the properties of the centerpoint. Then, by sequentially multiplying the iterative formula of the LTV system from the initial iteration, we can leverage repeated reachability and the characteristics of stochastic matrices to establish the resilient vector consensus of PP-ADRC \cite{park2017fault}.
\begin{lemma}\label{ltv}
Under PP-ADRC, if 
\begin{equation}
\label{eq:neighbor}
n_{f_i}(t)<N_{f_i}(t)=\begin{cases}
\frac{|\mathcal{N}_i^{\text{in}}(t)|}{d+1}\ & \text{if } d=2,3\\
\Omega(\frac{|\mathcal{N}_i^{\text{in}}(t)|}{d^2})\ &\text{if } d>3
\end{cases}    
\end{equation}
holds for any $t\geq 0$ and any $i \in \overline{\mathcal{V}}$, then we have 
\begin{equation}\label{eq:ltv}
\overline{\mathbf{x}}(t+1)=\mathbf{M}(t)\overline{\mathbf{x}}(t)+\mathbf{H}(t)\mathbf{v}(t),\ t=0,1,2,\ldots,
\end{equation}
where $\mathbf{M}(t)\in \mathbb{R}^{\overline{n}\times \overline{n}}$ is a row-stochastic matrix with $[\mathbf{M}(t)]_{ii}=1-\gamma_i(t)$, $\mathbf{H}(t) \in \mathbb{R}^{\overline{n} \times \overline{n}}$ has zero diagonal entries and differs from $\mathbf{M}(t)$ only in the diagonal elements, and $\mathbf{v}(t)=\left[\eta_1(t),\eta_2(t),\ldots,\eta_{\overline{n}}(t)\right]^\top$.
\end{lemma}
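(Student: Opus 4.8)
The plan is to establish \eqref{eq:ltv} agent-by-agent: starting from the update rule \eqref{eq:update}, I will express the centerpoint $s_i(t)$ as a convex combination of the noisy states transmitted by agent $i$'s \emph{normal} in-neighbors, substitute this into \eqref{eq:update}, and then read off the rows of $\mathbf{M}(t)$ and $\mathbf{H}(t)$ before stacking the $\overline{n}$ recursions into matrix form. The crux is converting the fault-count hypothesis \eqref{eq:neighbor} into a convex-hull containment statement for $s_i(t)$; everything after that is linear bookkeeping.

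The core step relies on the defining depth property of the centerpoint recalled before the statement. A centerpoint of the received set $S=\{y_j(t):j\in\mathcal{N}_i^{\text{in}}(t)\}$ has Tukey depth at least $N_{f_i}(t)$, meaning every closed half-space containing it contains at least $N_{f_i}(t)$ points of $S$ (the depth is $\tfrac{|\mathcal{N}_i^{\text{in}}(t)|}{d+1}$ for the exact centerpoint when $d=2,3$, and $\Omega(\tfrac{|\mathcal{N}_i^{\text{in}}(t)|}{d^2})$ for the approximate centerpoint when $d>3$, which is exactly how $N_{f_i}(t)$ is defined). I will then argue that a point $p$ of depth at least $k$ with respect to $S$ must lie in $\operatorname{conv}(S\setminus F)$ whenever $|F|<k$: if not, the separating-hyperplane theorem gives a closed half-space $H\ni p$ disjoint from $S\setminus F$, so $|S\cap H|\le|F|<k$, contradicting the depth bound. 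Taking $F$ to be the set of faulty in-neighbors, with $|F|=n_{f_i}(t)<N_{f_i}(t)\le\operatorname{depth}(s_i(t),S)$ by \eqref{eq:neighbor}, yields $s_i(t)\in\operatorname{conv}(\{y_j(t):j\in\overline{\mathcal{N}}_i^{\text{in}}(t)\})$.

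By Carathéodory's theorem I may then write $s_i(t)=\sum_{j\in\overline{\mathcal{N}}_i^{\text{in}}(t)}w_{ij}(t)\,y_j(t)$ with $w_{ij}(t)\ge 0$ and $\sum_j w_{ij}(t)=1$ (existence suffices; the choice need only be made measurably). Substituting $y_j(t)=x_j(t)+\eta_j(t)$ and the update \eqref{eq:update} and grouping the state and noise contributions gives
\[
x_i(t+1)=(1-\gamma_i(t))x_i(t)+\gamma_i(t)\!\!\sum_{j\in\overline{\mathcal{N}}_i^{\text{in}}(t)}\!\!w_{ij}(t)x_j(t)+\gamma_i(t)\!\!\sum_{j\in\overline{\mathcal{N}}_i^{\text{in}}(t)}\!\!w_{ij}(t)\eta_j(t).
\]
Since in-neighbor sets carry no self-loop, $i\notin\overline{\mathcal{N}}_i^{\text{in}}(t)$, so setting $[\mathbf{M}(t)]_{ii}=1-\gamma_i(t)$ and $[\mathbf{M}(t)]_{ij}=\gamma_i(t)w_{ij}(t)$ for $j\ne i$ produces a nonnegative row summing to $(1-\gamma_i(t))+\gamma_i(t)\cdot 1=1$, which gives row-stochasticity. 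The coefficient of agent $i$'s own noise $\eta_i(t)$ is zero, so the noise term is precisely $\mathbf{H}(t)\mathbf{v}(t)$ with $\mathbf{H}(t)$ the off-diagonal part of $\mathbf{M}(t)$. Stacking over $i=1,\dots,\overline{n}$ yields \eqref{eq:ltv}.

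I expect the main obstacle to be the rigorous passage from \eqref{eq:neighbor} to convex-hull containment in the high-dimensional regime $d>3$, where only an approximate centerpoint is computable: there I must confirm that the approximation guarantee $\Omega(|\mathcal{N}_i^{\text{in}}(t)|/d^2)$ still strictly exceeds $n_{f_i}(t)$ and therefore still forces $s_i(t)$ into the normal neighbors' convex hull, taking care that the \emph{location}-approximation variant (a point within $\varepsilon$ of every defining half-space) is reconciled with exact containment. I also need to note that the weights $w_{ij}(t)$, and hence $\mathbf{M}(t)$ and $\mathbf{H}(t)$, are random and time-varying because they depend on the realized noisy states, so they must be chosen as measurable functions of the interaction history for the LTV recursion to be well defined; the algebraic assembly after containment is routine.
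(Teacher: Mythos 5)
Your proposal is correct and follows essentially the same route as the paper: express $s_i(t)$ as a convex combination of the normal in-neighbors' noisy states, substitute into the update, and read off the row-stochastic $\mathbf{M}(t)$ and its off-diagonal part $\mathbf{H}(t)$. The only difference is that the paper obtains the convex-combination step by directly citing Proposition V.1 of the ADRC reference, whereas you reprove it from the Tukey-depth property via a separating-hyperplane argument and Carath\'eodory's theorem, which is a valid (and more self-contained) justification of the same fact.
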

\begin{proof}
If the condition in \eqref{eq:neighbor} holds for any normal agent $i\in \overline{\mathcal{V}}$ and $t\geq 0$, then by the Proposition V.1 in \cite{park2017fault}, we can write $s_i(t)$ as a convex combination of its normal in-neighbors' noisy states
\begin{equation}
s_i(t)=\sum_{j\in \overline{\mathcal{N}}_i^\text{in}(t)}a_{ij}(t)y_j(t)=\sum_{j\in \overline{\mathcal{N}}_i^\text{in}(t)}a_{ij}(t)[x_j(t)+\eta_j(t)],
\nonumber
\end{equation}
where $a_{ij}(t)>0$. We can obtain the update equation 
\begingroup
\small
\begin{equation}
\begin{aligned}
x_i(t+1)&=\gamma_i(t)\sum_{j\in \overline{\mathcal{N}}_i^\text{in}(t)}a_{ij}(t)[x_j(t)+\eta_j(t)]+(1-\gamma_i(t))x_i(t).
\end{aligned}
\nonumber
\end{equation}
\endgroup
Finally, the system can be represented as 
\begin{equation}
\overline{\mathbf{x}}(t+1)=\mathbf{M}(t)\overline{\mathbf{x}}(t)+\mathbf{H}(t)\mathbf{v}(t),\ t=0,1,2,\ldots,
\nonumber
\end{equation}
where $\overline{\mathbf{x}}(t)=[x_1(t),\ldots,x_{\overline{n}}(t)]^\top$ and $\mathbf{v}(t)=\left[\eta_1(t),\eta_2(t),\ldots,\eta_{\overline{n}}(t)\right]^\top$. Matrices $\mathbf{M}(t)$ and $\mathbf{H}(t)$ are 
\begin{equation}\label{eq:MT}
[\mathbf{M}(t)]_{ij}=\begin{cases}
1-\gamma_i(t) & \text{if } i=j\\
\gamma_i(t)a_{ij}(t) & \text{if } i \neq j \text{ and } j\in \overline{\mathcal{N}}_i^\text{in}(t)\\
0  & \text{otherwise,}
\end{cases}
\end{equation}
and
\begin{equation}
[\mathbf{H}(t)]_{ij}=\begin{cases}
\gamma_i(t)a_{ij}(t) & \text{if } i \neq j \text{ and } j\in \overline{\mathcal{N}}_i^\text{in}(t)\\
0  & \text{otherwise.}
\end{cases}
\end{equation} 
\end{proof}

\par We denote by $\Psi_{t,t_s}$ the backward product of the sequence $\{\mathbf{M}(t),t\geq0\}$, i.e., 
$\Psi_{t,t_s}:=\mathbf{M}(t-1)\dots \mathbf{M}(t_s),\mathrm{for}\text{ }t>t_s\geq0,\text{ }\Psi_{t,t}:=I$.   
Then, we obtain 
\begin{equation}\label{eq1}
\overline{\mathbf{x}}(t+1)=\Psi_{t+1,0}\overline{\mathbf{x}}(0)+\sum_{q=0}^{t}\Psi_{t+1,q+1}\mathbf{H}(q)\mathbf{v}(q).
\end{equation}

\begin{theorem}
\label{th:consensus}
Under PP-ADRC, if the condition in \eqref{eq:neighbor} holds and $\overline{\mathcal{G}}(0), \overline{\mathcal{G}}(1), \overline{\mathcal{G}}(2), \ldots$ is repeatedly reachable, then for any $i \in \overline{\mathcal{V}}$, there exists a random vector $\xi = [\xi_{1}, \ldots, \xi_{d}]^\top$ such that $\lim\limits_{t \rightarrow \infty} \mathbb{E}\left[||x_i(t) - \xi||_2^2\right] = 0$, and $\mathbb{E}[x_i(t)] \in \operatorname{conv} \left(x_1(0), x_2(0), \ldots, x_{\overline{n}}(0)\right)$, where $t = 0, 1, 2, \ldots$.

\end{theorem}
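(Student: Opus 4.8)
The plan is to prove the two assertions separately, both built on the closed-form solution \eqref{eq1} and on the row-stochasticity of the backward products $\Psi_{t,t_s}$ furnished by Lemma \ref{ltv}. I use throughout that the noise enters additively in \eqref{eq:ltv}, that each $\mathbf{v}(q)$ is zero-mean, and that the $\mathbf{v}(q)$ are independent across iterations, so that the structured row-stochastic matrices from Lemma \ref{ltv} play the role of the (mixing) system data while the noise contributes a separate stochastic term.

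\emph{Safety.} Taking expectations in \eqref{eq1} and using $\mathbb{E}[\mathbf{v}(q)]=0$ gives $\mathbb{E}[\overline{\mathbf{x}}(t)]=\Psi_{t,0}\,\overline{\mathbf{x}}(0)$. A product of row-stochastic matrices is again row-stochastic, so $[\Psi_{t,0}]_i$ is a probability vector and $\mathbb{E}[x_i(t)]=\sum_{j}[\Psi_{t,0}]_{ij}x_j(0)$ is a convex combination of the initial states, hence lies in $\operatorname{conv}(x_1(0),\ldots,x_{\overline n}(0))$. This settles the safety claim completely.

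\emph{Agreement.} Here I would first import, as in \cite{park2017fault}, the ergodicity of the backward products under repeated reachability: since each $\mathbf{M}(t)$ has diagonal entries at least $1-\gamma_m>0$ and positive weights on the edges of $\overline{\mathcal{G}}(t)$, over each jointly reachable block the coefficient of ergodicity contracts, so the rows of $\Psi_{t,s}$ become asymptotically equal and, because the convex hulls of the rows are nested and shrinking, converge to a common limit $\psi_s^\top:=\lim_{t\to\infty}[\Psi_{t,s}]_i$ with $\|[\Psi_{t,s}]_i-\psi_s^\top\|\le\beta(t-s)$ for some nonincreasing $\beta(m)\to 0$. I then define the candidate consensus value
\begin{equation}
\xi:=\psi_0^\top\,\overline{\mathbf{x}}(0)+\sum_{q=0}^{\infty}\psi_{q+1}^\top\mathbf{H}(q)\mathbf{v}(q),\nonumber
\end{equation}
and verify it converges in $L^2$: its summands are independent and zero-mean, and since $\Sigma(q)=\lambda^2\upsilon^{2q}I_d$ with $\upsilon<1$ while the rows of $\Psi$ and $\mathbf{H}$ are bounded, $\mathbb{E}\|\psi_{q+1}^\top\mathbf{H}(q)\mathbf{v}(q)\|_2^2=O(\upsilon^{2q})$, which is summable. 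Because $\psi_s^\top$ does not depend on $i$, the same $\xi$ serves every agent.

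The crux is bounding $\mathbb{E}\|x_i(t)-\xi\|_2^2$. Subtracting $\xi$ from \eqref{eq1} leaves three pieces: a deterministic mismatch $([\Psi_{t,0}]_i-\psi_0^\top)\overline{\mathbf{x}}(0)$, an $L^2$-tail $\sum_{q\ge t}\psi_{q+1}^\top\mathbf{H}(q)\mathbf{v}(q)$, and a past-noise mismatch $\sum_{q=0}^{t-1}([\Psi_{t,q+1}]_i-\psi_{q+1}^\top)\mathbf{H}(q)\mathbf{v}(q)$. The first vanishes by row convergence and the second by convergence of the defining series. I expect the past-noise term to be the main obstacle: using independence of the $\mathbf{v}(q)$ to eliminate cross terms, its second moment equals $\sum_{q=0}^{t-1}O\!\big(\beta(t-q-1)^2\,\upsilon^{2q}\big)$, a convolution of the mixing decay against the noise decay. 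I would control it by splitting the sum at an intermediate index $Q$: for $q\le Q$ the factor $\beta(t-q-1)$ is small once $t$ is large with $Q$ fixed, while for $q>Q$ the geometric factor $\upsilon^{2q}$ makes the tail uniformly small, so sending $t\to\infty$ and then $Q\to\infty$ drives the expression to zero. The decay $\upsilon<1$ is what makes this split close, and the hypothesis $\gamma_l>1-\upsilon$ ties the contraction strength to the decay rate so that the per-iteration injected variance remains dominated and the accumulated variance stays bounded; establishing this convolution estimate, and with it both agreement and the shared $L^2$ limit $\xi$, is the technical heart of the argument.
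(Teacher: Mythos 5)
Your proposal is correct and rests on the same two pillars as the paper's proof: for safety, the identical observation that zero-mean noise gives $\mathbb{E}[\overline{\mathbf{x}}(t)]=\mathbb{E}[\Psi_{t,0}]\,\overline{\mathbf{x}}(0)$ with $\Psi_{t,0}$ row-stochastic; for agreement, the ergodicity of the backward products $\Psi_{t,s}$ obtained from the uniform positive lower bound on the nonzero entries of $\mathbf{M}(t)$ together with repeated reachability (Theorem V.2 of the cited reference), combined with the summability of the injected variances $\lambda^2\upsilon^{2q}$. Where you diverge is in how the limit $\xi$ is produced. The paper shows that $\mathbb{E}\|x_{i,k}(t+1)-x_{j,k}(t+1)\|_2^2$ and $\mathbb{E}\|x_{i,k}(t+1)-x_{i,k}(t)\|_2^2$ both vanish and then asserts the existence of $\xi_k$; as written, the vanishing of \emph{consecutive} differences alone does not give an $L^2$ limit, so this step is informal. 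You instead construct $\xi$ explicitly as the series $\psi_0^\top\overline{\mathbf{x}}(0)+\sum_q\psi_{q+1}^\top\mathbf{H}(q)\mathbf{v}(q)$, verify its $L^2$ convergence, and bound $\mathbb{E}\|x_i(t)-\xi\|_2^2$ via the three-term decomposition and the convolution/splitting estimate on $\sum_q\beta(t-q-1)^2\upsilon^{2q}$; this buys a fully explicit limit and closes the Cauchy gap, at the cost of more bookkeeping. Two caveats, both shared with the paper rather than specific to you: the matrices $\mathbf{M}(t),\mathbf{H}(t)$ depend on the noise through the centerpoint weights $a_{ij}(t)$, so the claimed independence of $\mathbf{H}(q)\mathbf{v}(q)$ across $q$ (and the factorization $\mathbb{E}[\mathbf{H}(q)\mathbf{v}(q)]=0$) is a simplification both arguments make silently; and your uniform rate $\beta(t-s)$ implicitly assumes bounded lengths of the jointly reachable blocks, though your split at $Q$ actually only needs pointwise row convergence for fixed $q$ plus uniform boundedness, so nothing breaks.
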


\begin{proof}
   Here, we mainly use strong ergodicity\footnote{ Weak ergodicity implies the rows of a backward product of row-stochastic matrices become identical as time approaches infinity, while strong ergodicity additionally guarantees the existence of this limit \cite{huang2012stochastic}.} to show the convergence. Considering that weak and strong ergodicity are equivalent for the backward product of row-stochastic matrices \cite{seneta2006non}, we only need to prove weak ergodicity.
  \par Following \cite{nedic2016convergence}, we first establish a positive lower bound for the nonzero entries of $\mathbf{M}(t)$ for $t \geq 0$. The lower bound for diagonal entries is $1 - \gamma_m$. For non-diagonal entries, each consists of two positive terms: $\gamma_i(t)$, which is lower-bounded by $\gamma_l$, and $a_{ij}(t)$, which lacks a direct lower bound. However, under the condition in \eqref{eq:neighbor}, the centerpoint can be expressed as a convex combination of normal agents' noisy states, allowing us to determine specific values of $a_{ij}(t)$. Defining $a_l$ as the minimum value among all $a_{ij}(t)$, we obtain a lower bound of $\gamma_l a_l$ for the non-diagonal entries. Thus, the overall lower bound for the nonzero entries of $\mathbf{M}(t)$ is $\min(1-\gamma_m, \gamma_l a_l)$.
   Then, by using repeatedly reachable graph sequence, we can establish the weak ergodicity of $\{\mathbf{M}(t),t\geq 0\}$ according to the Theorem V.2 in \cite{park2017fault}, which leads to the strong ergodicity. Using equation \eqref{eq1}, along with the independence and zero-mean properties of the noise and the strong ergodicity, for any $i \in \overline{\mathcal{V}} $ and $k = 1, 2, \dots, d$, we obtain that 
\begin{equation}\label{eq:weak}
\begin{aligned}
&\lim\limits_{t\rightarrow\infty}\mathbb{E}\left[||x_{i,k}(t+1)-x_{j,k}(t+1)||_2^2\right]=0,\ \text{and}\\ 
&\lim\limits_{t\rightarrow\infty} \mathbb{E}\left[||x_{i,k}(t+1)-x_{i,k}(t)||_2^2\right]=0.
\end{aligned}
\end{equation}
Hence, there must exists a random variable $\xi_k$ satisfying $\lim\limits_{t\rightarrow \infty}\mathbb{E}\left[||x_{i,k}(t)-\xi_k||_2^2\right]=0$, i.e.,
\begin{equation}
\lim\limits_{t\rightarrow \infty}\mathbb{E}\left[||x_i(t)-{\xi}||_2^2\right]=0,\ 
\forall i\in \overline{\mathcal{V}}.
\end{equation}
Additionally, by the independence and zero-mean properties of noise, we obtain
$\mathbb{E}\left[\overline{\mathbf{x}}(t+1)\right]=\mathbb{E}\left[\Psi_{t+1,0}\overline{\mathbf{x}}(0)\right]$. As $\Psi_{t+1,0}$ is row-stochastic, one derives
$\mathbb{E}\left[x_i(t+1)\right] \in \operatorname{conv}\left(x_1(0),x_2(0),\ldots,x_{\overline{n}}(0)\right)$. 
\end{proof}
\begin{remark}
It should be pointed out that the condition in \eqref{eq:neighbor} ensures ``safety", while repeated reachability ensures the ergodicity of backward product $\Psi_{t,t_s}$, thereby guaranteeing ``agreement". After adding noise, we can only guarantee the ``mean square convergence" under expectation.  As the goal is to characterize the effect of noise term by expectation, independence, and zero mean are the crucial factors, rather than the specific distribution of the noise. This means that resilient vector consensus can be achieved for other noise distributions with independence and zero-mean properties.
\end{remark}
\begin{remark}
Theorem \ref{th:consensus} provides only sufficient conditions, making it essential to explore the necessary conditions as well. The key challenge lies in finding the necessary conditions of establishing ergodicity of $\{\mathbf{M}(t),t\geq0\}$. This requires that the product of $\mathbf{M}(t)$ over a finite iterations has an ergodic coefficient upper bounded by a positive value less than $1$. Given that the topology $\overline{\mathcal{G}}(t)$ is directed and time-varying, and considering that the elements of $\mathbf{M}(t)$ lack analytical expressions due to the centerpoint-based algorithm, existing ergodicity results are inapplicable. This presents a highly challenging problem. Additionally, while \cite{nedic2016convergence} provides a uniform lower bound for the nonzero entries of the system matrix to analyze convergence rate, we can only prove the existence of such a bound without deriving an explicit expression. The main challenge lies in analyzing the centerpoint, which is known to lie within the relative interior of the convex hull of the normal agents' states \cite{park2017fault,abbas2022resilient}, but its exact location and analytical expression remain unknown. This issue also appears in other resilient vector consensus studies \cite{park2017fault,abbas2022resilient,yan2022resilient}, where the absence of an explicit lower bound precludes convergence rate analysis. In our future work, we aim to derive the minimal topological conditions necessary to ensure ergodicity and derive a precise lower bound for the nonzero entries of $\mathbf{M}(t)$.
\end{remark}

\subsection{Distribution of Final Value}
\par From Theorem \ref{th:consensus}, although the expectation of the final value is in the convex hull of the normal agents' initial states, we cannot guarantee that each single convergence result always lies in the convex hull. This uncertainty arises because the specific distribution of the final value remains unknown. Thus, we investigate the accuracy of each final value in terms of its residuals, i.e., the deviation from the expected value. Specifically, we propose to use the Mahalanobis distance \cite{de2000mahalanobis} and the multivariate Chebyshev's inequality \cite{chen2007new} to analyze the residuals. The details are provided as follows.
\begin{definition}
\label{def:mahalanobis}
Given a random vector $z$ and its non-singular covariance matrix $\Sigma$, the Mahalanobis distance $D_M(z)$ from $z$ to its expectation $\mathbb{E}(z)$ is defined as
\begin{equation}
D_M(z)=\sqrt{(z-\mathbb{E}(z))^\top\Sigma^{-1}(z-\mathbb{E}(z))}.
\end{equation}
\end{definition}
\par In resilient vector consensus, our primary concern is the convergence accuracy of the random vector $\xi$. The commonly used Euclidean distance is not suitable in this scenario due to varying variances of each component in $\xi$ and existing correlations among them. These factors introduce different measurement scales across dimensions. Therefore, we opt for the Mahalanobis distance to quantify the deviation of the final value from its expected value. This metric offers a comprehensive approach to distance measurement by incorporating variances and correlations between components. By mitigating the influence of disparate scales on distance calculation, the Mahalanobis distance provides a more precise evaluation.

\par Here, we provide the details of the multivariate Chebyshev's inequality \cite{chen2007new}.
\begin{lemma}
\label{lm:multi_chebyshev}
For a random vector $z\in \mathbb{R}^d$ with non-singular covariance matrix $\Sigma$, we have
\begin{equation}
\mathrm{P}\left\{[z-\mathbb{E}(z)]^\top\Sigma^{-1}[z-\mathbb{E}(z)]\geq\chi\right\}\leq\frac d\chi,\quad\forall\chi>0.
\end{equation}
\end{lemma}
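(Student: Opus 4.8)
The plan is to recognize the left-hand quantity as the tail probability of a single nonnegative scalar random variable and then invoke Markov's inequality, reducing the whole statement to the computation of one expectation. Write $\mu = \mathbb{E}(z)$ and define the squared Mahalanobis form
\[
Q := [z-\mu]^\top \Sigma^{-1} [z-\mu].
\]
Since $\Sigma$ is a symmetric, non-singular covariance matrix, it is positive definite, hence so is $\Sigma^{-1}$, which guarantees $Q \geq 0$ almost surely. Markov's inequality then yields, for every $\chi>0$,
\[
\mathrm{P}\{Q \geq \chi\} \leq \frac{\mathbb{E}[Q]}{\chi},
\]
so it suffices to prove $\mathbb{E}[Q] = d$.

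The key step is to evaluate $\mathbb{E}[Q]$ via the trace trick. Because $Q$ is a scalar it equals its own trace, and by the cyclic invariance of the trace I would rewrite it as $Q = \operatorname{tr}\!\left(\Sigma^{-1}(z-\mu)(z-\mu)^\top\right)$. Taking expectations and exchanging expectation with the (linear) trace operator gives
\[
\mathbb{E}[Q] = \operatorname{tr}\!\left(\Sigma^{-1}\,\mathbb{E}\!\left[(z-\mu)(z-\mu)^\top\right]\right) = \operatorname{tr}\!\left(\Sigma^{-1}\Sigma\right) = \operatorname{tr}(I_d) = d,
\]
where I used that $\mathbb{E}[(z-\mu)(z-\mu)^\top] = \Sigma$ is precisely the covariance matrix of $z$. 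Substituting $\mathbb{E}[Q]=d$ back into the Markov bound gives the claimed inequality $\mathrm{P}\{Q \geq \chi\} \leq d/\chi$.

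There is no serious obstacle here: the argument is an exact identity rather than an estimate, and the only point requiring mild care is the interchange of expectation and trace, which is justified by linearity once each entry of $(z-\mu)(z-\mu)^\top$ is integrable---guaranteed by the existence of the finite covariance $\Sigma$. The non-singularity hypothesis is used only to ensure that $\Sigma^{-1}$ exists and $Q$ is well defined. I would emphasize that the resulting bound is dimension-dependent yet entirely distribution-free, which is exactly what makes it applicable in the subsequent Mahalanobis-distance analysis of the final value $\xi$, where the precise law of $\xi$ is unavailable.
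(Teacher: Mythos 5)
Your proof is correct. The paper itself gives no proof of this lemma --- it is imported verbatim from the cited reference on the multivariate Chebyshev inequality --- and your argument (Markov's inequality applied to the nonnegative quadratic form $Q=[z-\mathbb{E}(z)]^\top\Sigma^{-1}[z-\mathbb{E}(z)]$, together with the trace identity $\mathbb{E}[Q]=\operatorname{tr}(\Sigma^{-1}\Sigma)=d$) is exactly the standard derivation of that result, so there is nothing to reconcile.
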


If the covariance matrix of $\xi$ is non-singular, then by Definition \ref{def:mahalanobis} and Lemma \ref{lm:multi_chebyshev}, we obtain the following result.

\begin{theorem}
\label{th:madis}
Under PP-ADRC, if the condition in \eqref{eq:neighbor} holds, $\overline{\mathcal{G}}(0),\overline{\mathcal{G}}(1),\overline{\mathcal{G}}(2),\ldots$ is repeatedly reachable, and the covariance matrix of $\xi$ is non-singular, then 
\begin{equation}
\label{eq:conv_acc}
\begin{aligned}
\mathrm{P}\left\{[D_M(\xi)]^2\leq \chi\right\}\geq1-\frac d\chi,\quad\forall\chi>0. 
\end{aligned}
\nonumber
\end{equation}
\end{theorem}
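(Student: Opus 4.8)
The plan is to recognize this statement as an immediate consequence of the multivariate Chebyshev inequality (Lemma \ref{lm:multi_chebyshev}) applied to the limiting random vector $\xi$ whose existence is guaranteed by Theorem \ref{th:consensus}. Before invoking the lemma, I would first confirm that $\xi$ is a genuine $L^2$ random vector, so that its expectation $\mathbb{E}(\xi)$ and covariance $\Sigma$ are well defined; this is where the only real care is needed. Theorem \ref{th:consensus} gives mean square convergence $\lim_{t\to\infty}\mathbb{E}[\|x_i(t)-\xi\|_2^2]=0$, and combined with the completeness of $L^2$ it then suffices to check that each $x_i(t)$ has finite second moment. I would establish this by induction on $t$ using the update \eqref{eq:update} together with the convex-combination representation of the centerpoint from Lemma \ref{ltv}: since $s_i(t)=\sum_{j}a_{ij}(t)y_j(t)$ with $a_{ij}(t)\geq 0$ and $\sum_j a_{ij}(t)=1$, one has $\|s_i(t)\|_2\leq\max_j\|y_j(t)\|_2$, so $x_i(t+1)$ is norm-bounded by a convex combination of $\|x_i(t)\|_2$ and the noisy-state norms $\|y_j(t)\|_2=\|x_j(t)+\eta_j(t)\|_2$. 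As the decaying Gaussian noise $\eta_j(t)$ has finite variance, finiteness of $\mathbb{E}[\|x_j(t)\|_2^2]$ propagates from time $t$ to $t+1$, and $\xi$ inherits a finite covariance as the $L^2$ limit. Non-singularity of $\Sigma$ is assumed in the hypothesis, so $\Sigma^{-1}$ exists and the Mahalanobis distance $D_M(\xi)$ of Definition \ref{def:mahalanobis} is well defined.

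The remainder is a one-line substitution. By Definition \ref{def:mahalanobis}, $[D_M(\xi)]^2=(\xi-\mathbb{E}(\xi))^\top\Sigma^{-1}(\xi-\mathbb{E}(\xi))$, which is exactly the quadratic form on the left-hand side of Lemma \ref{lm:multi_chebyshev} with $z=\xi$. Applying the lemma yields $\mathrm{P}\{[D_M(\xi)]^2\geq\chi\}\leq d/\chi$ for every $\chi>0$. Passing to the complementary event gives $\mathrm{P}\{[D_M(\xi)]^2<\chi\}\geq 1-d/\chi$, and since $\{[D_M(\xi)]^2<\chi\}\subseteq\{[D_M(\xi)]^2\leq\chi\}$, the stated bound $\mathrm{P}\{[D_M(\xi)]^2\leq\chi\}\geq 1-d/\chi$ follows.

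I expect no substantial obstacle here: the genuine work was already carried out in proving Theorem \ref{th:consensus} and in establishing the multivariate Chebyshev inequality, so this result is essentially a corollary. The only point deserving explicit attention is the verification that $\xi$ admits a finite, non-singular covariance matrix so that the lemma is applicable, which is the $L^2$-integrability argument sketched above; the transition from the strict inequality delivered by the complement to the non-strict inequality in the conclusion is harmless.
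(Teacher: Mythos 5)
Your proposal is correct and follows exactly the paper's route: the paper likewise derives Theorem \ref{th:madis} as an immediate consequence of Definition \ref{def:mahalanobis} and the multivariate Chebyshev inequality (Lemma \ref{lm:multi_chebyshev}) applied to $\xi$. Your additional verification that $\xi$ has a well-defined finite covariance (via the $L^2$ limit and induction on second moments) is a point the paper leaves implicit, but it does not change the argument.
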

\begin{remark}
The points sharing the same Mahalanobis distance from the expected value define a hyperspheroid centering at the expectation of the final value. Therefore, Theorem \ref{th:madis} encapsulates the final value within a hyperspheroid centered at a point (the expectation) inside the convex hull. In fact, if we know the position of the expectation, we can determine the probability that the final value lies within the convex hull more specifically.  Additionally, note that the principal axis parameters of the hyperspheroid are determined by the eigenvalues of the covariance matrix and $\chi$ together. Thus, we can show that the stronger the noises are, the bigger the hyperspheroid is, implying a lower convergence accuracy. Particularly, the volume of the hyperspheroid is $V=V_d\left[\sqrt{|\Sigma|\chi}\right]^d$, where $V_d$ characterizes the volume of a d-dimensional unit hypersphere \cite{duda2006pattern}. Meanwhile, we can write the covariance matrix $\Sigma$ as
\begin{equation}
\left.[\Sigma]_{kk'}=\left\{\begin{array}{ll}\operatorname{var}(\xi_k)&\text{if }k=k'\\\tau_{k,k'}\sqrt{\operatorname{var}(\xi_k)\operatorname{var}(\xi_{k'})}&\text{if }k\neq k',\end{array}\right.\right.
\end{equation}
\end{remark}
where $\tau_{k,k'}$ is the correlation coefficient. Consequently, we can derive the determinant of $\Sigma$, i.e., $|\Sigma|=h(\tau_{1,2},\ldots,\tau_{d-1,d})\operatorname{var}(\xi_1)\times\cdots\times\operatorname{var}(\xi_d)$, where $h(\cdot)$ represents a polynomial. It means that the volume is directly proportional to the variances. 
\par For the case where the covariance matrix of $\xi$ is singular, we can obtain the following result \cite{holton2003value}.
\begin{lemma}\label{lm:singular}
For a random vector $z\in \mathbb{R}^d$, if its covariance matrix $\Sigma$ is singular, then there exists at least one non-zero row vector $\zeta$ such that $\operatorname{var}(\zeta z)=0$, implying $\zeta z$ is a constant.
\end{lemma}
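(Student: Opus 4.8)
The plan is to exploit the fact that a covariance matrix is symmetric and positive semi-definite, so that its singularity is equivalent to possessing a nontrivial kernel, and then to translate a kernel vector directly into a zero-variance linear functional. First I would observe that since $\Sigma$ is singular, $\det(\Sigma)=0$, so $\Sigma$ is not invertible and there exists a nonzero row vector $\zeta\in\mathbb{R}^{1\times d}$ in its (left) null space, i.e. $\zeta\Sigma=0$; by symmetry of $\Sigma$ this is equivalent to $\Sigma\zeta^\top=0$.

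The second step is to connect this null-space vector to the variance of the scalar $\zeta z$. Using the standard identity for the variance of a linear functional of a random vector, namely $\operatorname{var}(\zeta z)=\zeta\Sigma\zeta^\top$, and substituting $\Sigma\zeta^\top=0$, one obtains $\operatorname{var}(\zeta z)=\zeta\cdot 0=0$. Finally, I would invoke the elementary fact that a scalar random variable with zero variance equals its expectation almost surely: since $\operatorname{var}(\zeta z)=\mathbb{E}\big[(\zeta z-\mathbb{E}[\zeta z])^2\big]=0$, the nonnegative integrand must vanish almost surely, forcing $\zeta z=\mathbb{E}[\zeta z]$, i.e. $\zeta z$ is (almost surely) constant, as claimed.

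There is no serious obstacle here; the statement is essentially linear algebra combined with the basic zero-variance-implies-constant fact. The only point deserving mild care is that one must use the positive semi-definiteness and symmetry of $\Sigma$ to guarantee that the chosen kernel vector $\zeta$ simultaneously satisfies $\Sigma\zeta^\top=0$ and makes the quadratic form $\zeta\Sigma\zeta^\top$ vanish, rather than merely exploiting $\det(\Sigma)=0$; once $\zeta$ is drawn from the kernel, the variance identity closes the argument immediately.
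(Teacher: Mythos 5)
Your proof is correct and complete: the kernel vector of the singular (symmetric, positive semi-definite) covariance matrix, the identity $\operatorname{var}(\zeta z)=\zeta\Sigma\zeta^\top$, and the zero-variance-implies-almost-surely-constant step together give exactly the standard argument. The paper does not supply its own proof of this lemma — it simply cites it from the literature — so there is nothing to contrast with; your write-up is the expected one, and the only cosmetic remark is that once $\Sigma\zeta^\top=0$ the quadratic form vanishes immediately, so positive semi-definiteness is not actually needed beyond guaranteeing that $\Sigma$ is a legitimate covariance matrix.
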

\begin{remark}
Lemma \ref{lm:singular} implies that certain components of the random variable can be expressed as linear polynomials of the remaining components. Identifying these extraneous components allows us to obtain a random vector $Z' \in \mathbb{R}^{d'}$ with a non-singular covariance matrix $\Sigma'$. We consider the case where the covariance matrix of $\xi$ is singular and without loss of generality, the latter $d-d'$ components of $\xi$ are supposed to be extraneous. It means that Theorem \ref{th:madis} still holds for the remaining $d'$ components, where the result simply degenerates into a $d'$-dimensional ellipsoid (or a line if $d' = 1$).
\end{remark}

\subsection{Convex Hull Change}
\par  In contrast to scalar consensus, the high-dimensional convex hull is defined by numerous vertices, dimensions, and coefficients. Moreover, the stochastic nature of noise can significantly alter the convex hull of the normal agents, making it challenging to characterize. Furthermore, following a single iteration, the new convex hull may exhibit no overlap with the original configuration.
\par To solve this issue, we consider a special case that only partial dimensions need to be protected. Consequently, the convex hull of dimensions without adding noise retains its original form. Our focus shifts to analyzing the transformation of the convex hull in the remaining dimensions of the state. It is worth noting that analyzing high-dimensional convex hulls can be challenging, while the $1$-dimensional final value $\xi_k$ can be assessed by examining its expectation and variance. Hence, we start the convex hull change analysis from the perspective of one dimension.
\par 
We first evaluate the expectation and variance of $\xi_k$. Then, we determine the one-dimensional convergence accuracy using Chebyshev's inequality. Finally, we quantize the change of the convex hull caused by Gaussian noise and its associated probability. The properties of $\xi_k$ are given below.
\begin{lemma}\label{lm:variance}
Under PP-ADRC, if the condition in \eqref{eq:neighbor} holds, and $\overline{\mathcal{G}}(0),\overline{\mathcal{G}}(1),\overline{\mathcal{G}}(2),\ldots$ is repeatedly reachable, then 
\begin{equation}
\mathbb{E}\left[\xi_k\right]=\psi [\overline{\mathbf{x}}_0]^k,
\operatorname{var}(\xi_k)\leq \frac{\lambda^2}{1-\upsilon^2}, 1\leq k \leq d,
\nonumber
\end{equation}
where $\psi$ is a $1\times \overline{n}$ non-negative row vector satisfying $\psi \mathbf{1}=1$.
\end{lemma}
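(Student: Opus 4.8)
The plan is to start from the closed-form solution \eqref{eq1}, which decomposes $\overline{\mathbf{x}}(t+1)$ into an initial-state term $\Psi_{t+1,0}\overline{\mathbf{x}}(0)$ and a noise-accumulation term $\sum_{q=0}^{t}\Psi_{t+1,q+1}\mathbf{H}(q)\mathbf{v}(q)$, and to treat the mean and the variance of the $k$-th coordinate separately, finally passing to the limit $t\to\infty$ using the mean square convergence $x_{i,k}(t)\to\xi_k$ guaranteed by Theorem \ref{th:consensus}.

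For the mean, I would take expectations in \eqref{eq1}. Exactly as in the proof of Theorem \ref{th:consensus}, the zero-mean and independence of the noise annihilate every term in the accumulation sum, so $\mathbb{E}[\overline{\mathbf{x}}(t+1)]=\mathbb{E}[\Psi_{t+1,0}]\,\overline{\mathbf{x}}(0)$. Since each $\mathbf{M}(t)$, and hence each finite backward product $\Psi_{t,0}$, is row-stochastic, $\mathbb{E}[\Psi_{t,0}]$ is row-stochastic as well. Strong ergodicity (established in Theorem \ref{th:consensus}) forces $\Psi_{t,0}\to\mathbf{1}\psi_\omega$ for a common stochastic row $\psi_\omega$ along each realization; as the entries are uniformly bounded by $1$, dominated convergence yields $\mathbb{E}[\Psi_{t,0}]\to\mathbf{1}\psi$ with $\psi=\mathbb{E}[\psi_\omega]$ still non-negative and satisfying $\psi\mathbf{1}=1$. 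Reading off the $k$-th column and matching with $\mathbb{E}[x_{i,k}(t)]\to\mathbb{E}[\xi_k]$ (a consequence of $L^2$ convergence) gives $\mathbb{E}[\xi_k]=\psi[\overline{\mathbf{x}}_0]^k$.

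For the variance, I would expand $x_{i,k}(t+1)=[\Psi_{t+1,0}]_i[\overline{\mathbf{x}}_0]^k+\sum_{q=0}^{t}[\Psi_{t+1,q+1}\mathbf{H}(q)]_i[\mathbf{v}(q)]^k$, so that all randomness in the centered variable comes from the noise sum, each coordinate of $\eta_j(q)$ carrying variance $[\Sigma(q)]_{kk}=\lambda^2\upsilon^{2q}$. The crucial observation is a uniform bound on the gain: writing $B=\Psi_{t+1,q+1}\mathbf{H}(q)$, the matrix $\Psi_{t+1,q+1}$ is row-stochastic while the row sums of $\mathbf{H}(q)$ equal $\gamma_i(q)\le\gamma_m<1$, so $B$ has non-negative entries with row sums at most $\gamma_m<1$; since each entry then lies in $[0,1]$, we get $\sum_j B_{ij}^2\le\sum_j B_{ij}\le 1$. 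Using independence across time to drop the cross terms, this gives $\operatorname{var}(x_{i,k}(t+1))\le\sum_{q=0}^{t}\lambda^2\upsilon^{2q}\le\frac{\lambda^2}{1-\upsilon^2}$ for every $t$, and the $L^2$ convergence of $x_{i,k}(t)$ to $\xi_k$ (which forces $\operatorname{var}(x_{i,k}(t))\to\operatorname{var}(\xi_k)$) transfers the bound to $\operatorname{var}(\xi_k)$.

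I expect the main obstacle to be that $\mathbf{M}(t)$ and $\mathbf{H}(t)$ are not deterministic: their off-diagonal entries $\gamma_i(t)a_{ij}(t)$ arise from expressing the centerpoint as a convex combination of the noisy states $y_j(t)=x_j(t)+\eta_j(t)$, so the gains are correlated with the very noise $\mathbf{v}(t)$ they multiply, and with all earlier noise through the states. This dependence is what makes the vanishing of the cross terms in the mean and the clean separation of the noise contributions in the variance delicate. I plan to resolve it by noting that the bound $\sum_j B_{ij}^2\le 1$ holds for \emph{every} realization of the weights, since it uses only row-stochasticity and the entrywise bounds and never the particular values of $a_{ij}(t)$; hence the variance estimate survives a conditioning argument on the weight sequence, and the same uniform stochasticity, together with dominated convergence, legitimizes passing the ergodic limit inside the expectation for the mean.
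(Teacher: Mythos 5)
Your proposal follows essentially the same route as the paper's proof: both read the mean and variance off the closed form \eqref{eq1}, use the zero-mean noise together with strong ergodicity of $\Psi_{t+1,0}$ to get $\mathbb{E}[\xi_k]=\psi[\overline{\mathbf{x}}_0]^k$, and bound the variance by summing $\operatorname{var}([\mathbf{v}]_{jk}(q))=\lambda^2\upsilon^{2q}$ against the squared entries of a nonnegative matrix with row sums at most $1$ (the paper absorbs $\mathbf{H}(q)$ into the product via $\mathbf{H}(q)\le\mathbf{M}(q)$ and bounds $\frac{1}{\overline{n}}\sum_j[\Psi_{t+1,q}^{\top}\Psi_{t+1,q}]_{jj}\le 1$, while you bound the row sums of $\Psi_{t+1,q+1}\mathbf{H}(q)$ directly by $\gamma_m$ --- a cosmetic difference). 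The weight--noise dependence you flag is a genuine subtlety that the paper's one-line proof silently ignores, and your conditioning fix is itself not airtight (conditioning on the data-dependent weights perturbs the conditional law of the noise), but on this point your argument is no less rigorous than the paper's.
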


\begin{proof}
 Following \cite{fiore2019resilient}, by using the independence and zero mean of the noise, along with the strong ergodicity (indicating that the limit of $\Psi_{t+1,0}$ exists), we can obtain that
  \begin{equation}
    \mathbb{E}\left[\xi_k\right]=\lim\limits_{t\rightarrow \infty}\mathbb{E}\left[[\Psi_{t+1,0}]_i[\overline{\mathbf{x}}_0]^k\right]=\psi [\overline{\mathbf{x}}_0]^k.
    \nonumber
  \end{equation}
Then for the variance, using $\mathbf{H}(t)\leq \mathbf{M}(t)$, one derives
\begin{equation}
\begin{aligned}
\operatorname{var}(\xi_k)&\leq\frac{1}{\overline{n}}\lim\limits_{t\rightarrow \infty}\sum_{q=0}^{t} \sum_{j=1}^{\overline{n}}\left[\Psi_{t+1,q}^T\Psi_{t+1,q}\right]_{jj}\text{var}([\mathbf{v}]_{jk}(q))\\
&=\frac{\lambda^2}{1-\upsilon^2}.
\end{aligned}
\nonumber
\end{equation} 
\end{proof}

\par Then, we establish the accuracy in expectation by \cite{nozari2017differentially}.
\begin{lemma}\label{lm:accuracy}
Under PP-ADRC, if the condition in \eqref{eq:neighbor} holds, and $\overline{\mathcal{G}}(0),\overline{\mathcal{G}}(1),\overline{\mathcal{G}}(2),\ldots$ is repeatedly reachable, then
\begin{equation}\label{eq:hau_pr}
\begin{aligned}
\mathrm{P}\{|\xi_k-\psi [\overline{\mathbf{x}}_0]^k|\leq (l_k+r_k)\}&\geq1-\frac{\operatorname{var}(\xi_k)}{(l_k+r_k)^2}\\
&\geq 1-\frac{\lambda^2/(1-\upsilon^2)}{(l_k+r_k)^2},    
\end{aligned}  
\end{equation}
\end{lemma}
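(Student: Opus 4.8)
The plan is to derive \eqref{eq:hau_pr} as an immediate consequence of Lemma \ref{lm:variance} combined with the one-dimensional (scalar) Chebyshev inequality applied to the random variable $\xi_k$. Since Lemma \ref{lm:variance} already supplies everything needed about the first two moments of $\xi_k$, namely $\mathbb{E}[\xi_k] = \psi[\overline{\mathbf{x}}_0]^k$ and $\operatorname{var}(\xi_k)\leq \lambda^2/(1-\upsilon^2)$, the remaining task is simply to convert that variance bound into a concentration statement.

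First I would invoke the scalar Chebyshev inequality: for any random variable $z$ with finite variance and any $a>0$, $\mathrm{P}\{|z-\mathbb{E}(z)|\geq a\}\leq \operatorname{var}(z)/a^2$. Applying this to $z=\xi_k$ with $a=l_k+r_k$ and passing to the complementary event yields $\mathrm{P}\{|\xi_k-\mathbb{E}(\xi_k)|< l_k+r_k\}\geq 1-\operatorname{var}(\xi_k)/(l_k+r_k)^2$. Because the strict event is contained in the closed event, i.e.\ $\{|\xi_k-\mathbb{E}(\xi_k)|< l_k+r_k\}\subseteq\{|\xi_k-\mathbb{E}(\xi_k)|\leq l_k+r_k\}$, the same lower bound holds for the closed event appearing in the statement. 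Substituting $\mathbb{E}[\xi_k]=\psi[\overline{\mathbf{x}}_0]^k$ from Lemma \ref{lm:variance} then produces the first inequality of \eqref{eq:hau_pr}.

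To obtain the second inequality I would monotonically weaken the right-hand side using the variance upper bound from Lemma \ref{lm:variance}: replacing $\operatorname{var}(\xi_k)$ by its larger upper bound $\lambda^2/(1-\upsilon^2)$ enlarges the subtracted fraction and hence only decreases the lower bound, so $1-\operatorname{var}(\xi_k)/(l_k+r_k)^2 \geq 1-\frac{\lambda^2/(1-\upsilon^2)}{(l_k+r_k)^2}$, completing the chain.

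Because the genuine analytical effort, namely computing the limiting expectation and bounding the accumulated noise variance through the backward product $\Psi_{t+1,q}$, was already carried out in Lemma \ref{lm:variance}, I do not expect any substantive obstacle here; the argument is essentially a corollary. The only points deserving care are the harmless $\leq$ versus $<$ boundary mismatch between the event in the statement and the strict inequality produced by Chebyshev, resolved by the set inclusion above, and the implicit requirement $l_k+r_k>0$ needed for the bound to be non-vacuous, which holds since $l_k$ and $r_k$ denote the positive one-sided margins of the original one-dimensional convex hull.
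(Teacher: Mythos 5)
Your proposal is correct and follows essentially the same route as the paper's own proof: apply the scalar Chebyshev inequality to $\xi_k$ with deviation $l_k+r_k$, then weaken the bound using $\operatorname{var}(\xi_k)\leq \lambda^2/(1-\upsilon^2)$ from Lemma \ref{lm:variance}. Your extra care about the strict-versus-closed event and the positivity of $l_k+r_k$ is a harmless refinement of what the paper leaves implicit.
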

where $l_k=\operatorname{min}\{\psi [\overline{\mathbf{x}}_0]^k-[\overline{\mathbf{x}}_{0}]^k_{\operatorname{min}},[\overline{\mathbf{x}}_{0}]^k_{\operatorname{max}}-\psi[\overline{\mathbf{x}}_0]^k\}$, $r_k>\max\{0,\lambda \sqrt{\frac{1}{1-\upsilon^2}}-l_k\}$, and $k=1,2,\ldots,d$.
\par 
\begin{proof}
By Chebyshev's inequality, we determine the one-dimensional convergence accuracy,  
\begin{equation}
\begin{aligned}
\mathrm{P}\{|\xi_k-\psi [\overline{\mathbf{x}}_0]^k|\leq (l_k+r_k)\}&\geq1-\frac{\operatorname{var}({\xi_k})}{(l_k+r_k)^2}\\
&\geq1-\frac{\lambda^2/(1-\upsilon^2)}{(l_k+r_k)^2}, 
\end{aligned}
\nonumber
\end{equation}
where $r_k>\max\{0,\lambda \sqrt{\frac{1}{1-\upsilon^2}}-l_k\}$ provides a tight lower bound. 
\end{proof}
\begin{remark}
Since $\psi$ is a row vector with elements summing to one, $\psi [\overline{\mathbf{x}}_0]^k$ must lie within the range bounded by the maximum and minimum values of $[\overline{\mathbf{x}}_0]^k$. Consequently, Lemma \ref{lm:accuracy} implies that $\xi_k$ lies within the interval $[\overline{\mathbf{x}}_{0}]^k_{\operatorname{max}}+r_k$ to $[\overline{\mathbf{x}}_{0}]^k_{\operatorname{min}}-r_k$ with a probability of at least $1-\frac{\lambda^2/(1-\upsilon^2)}{(l_k+r_k)^2}$.     
\end{remark}

Then, we consider the case that a fraction $(1-\beta)$, with $0<\beta<1$, of the state's dimensions do not need to be protected, i.e., only part of the dimensions are added with noises. The convex hull formed by these dimensions remains unchanged. For the remaining $\beta d$ dimensions, we apply Lemma \ref{lm:accuracy} $\beta d$ times. Then, we provide three convex hull definitions.
\begin{definition}\label{def:convexhull}
\par $\ $
\begin{enumerate}
\item $\mathcal{A}$: Convex hull $\mathcal{A}$ is defined as the convex hull of the initial states of the normal agents.
\item $\mathcal{B}$: 
Projecting convex hull $\mathcal{A}$ onto $(1-\beta)d$ dimensions without adding noise, we then translate it sequentially along each remaining dimension from its minimum value to its maximum value among all normal agents, forming convex hull $\mathcal{B}$ at the end.
\item $\mathcal{C}$: 
Convex hull $\mathcal{C}$ is formed by extending the translation range of $\mathcal{B}$, moving from the minimum value minus $r_k$ to the maximum value plus $r_k$ for the $k$th component of the normal agents' states.
\end{enumerate}
\end{definition}
\par Then, we can precisely determine the shape of the new convex hull $\mathcal{C}$. We take a $3$-dimensional case as an example, where the coordinates of $6$ points are $(1,0,0)$, $(0,2,0)$, $(-1/4,-1/4,0)$, $(0,0,2)$, $(0,1,0)$, and $(1/4,1/2,0)$. The noise is only added to the z-axis component of the states, ensuring $0\leq\psi [\overline{\mathbf{x}}_0]^3\leq 2$. The three distinct convex hulls are depicted in Fig. \ref{fig:conv_compare}. The convex hull $\mathcal{A}$ represents the original convex hull formed by the $6$ points. The convex hulls $\mathcal{B}$ and $\mathcal{C}$ are obtained by translating the part of $\mathcal{A}$ in the $xOy$ plane along the $z$-axis. The convex hull $\mathcal{B}$ translates from $0$ to $2$, while $\mathcal{C}$ translates from $-r_3$ to $2+r_3$. Therefore, utilizing Lemma \ref{lm:accuracy}, we can calculate that the probability of the final value belonging to the convex hull $\mathcal{C}$ is at least $1-\frac{\lambda^2/(1-\upsilon^2)}{(l_k+r_k)^2}$.
\begin{figure}[htbp]
\centering
\begin{subfigure}{0.32\linewidth}
\vspace{-10pt}
		\centering
		\includegraphics[width=1.0\linewidth]{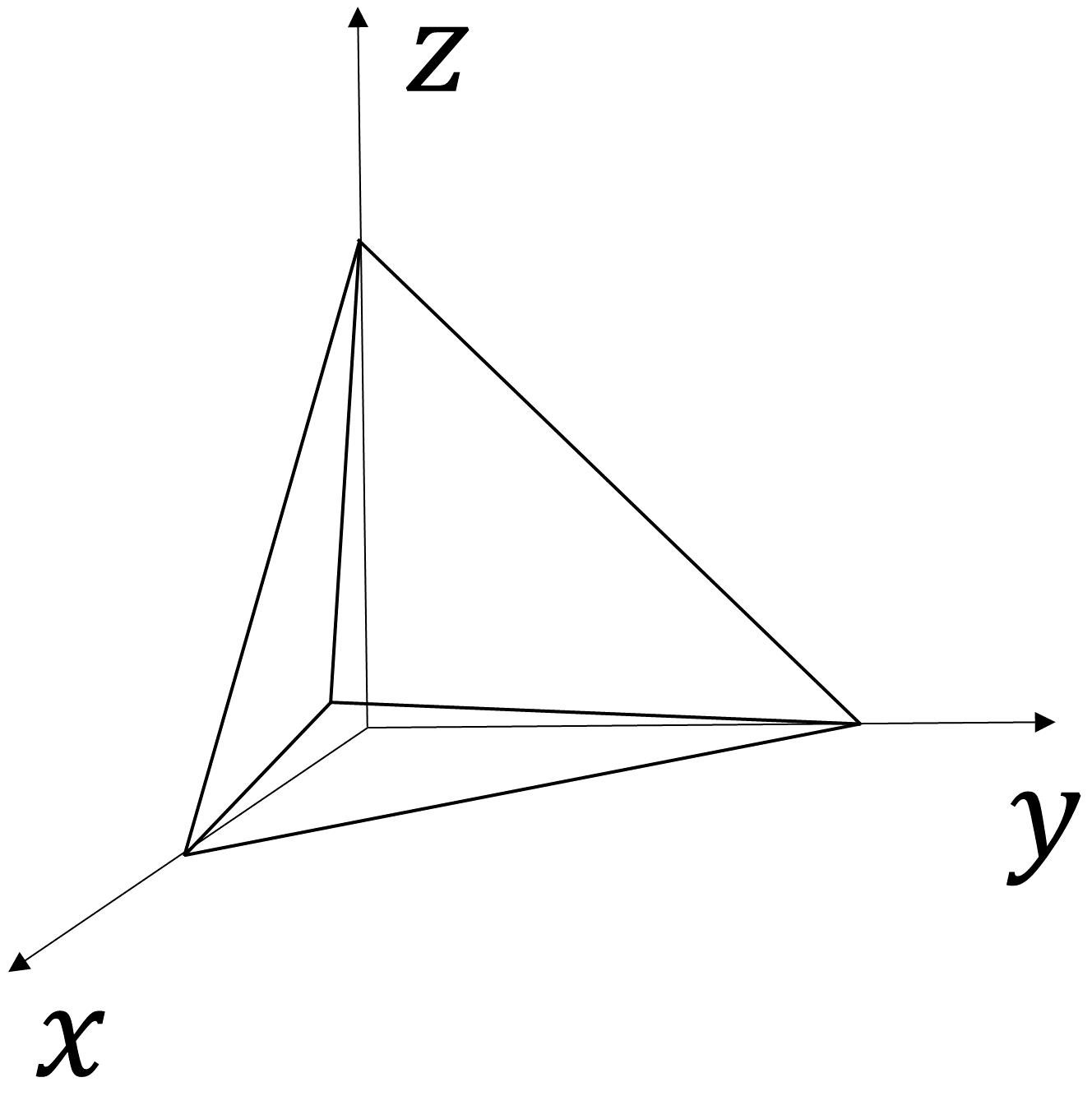}
		\caption{$\mathcal{A}$}
		\label{fig:A}
	\end{subfigure}
\begin{subfigure}{0.32\linewidth}
		\centering
		\includegraphics[width=1.0\linewidth]{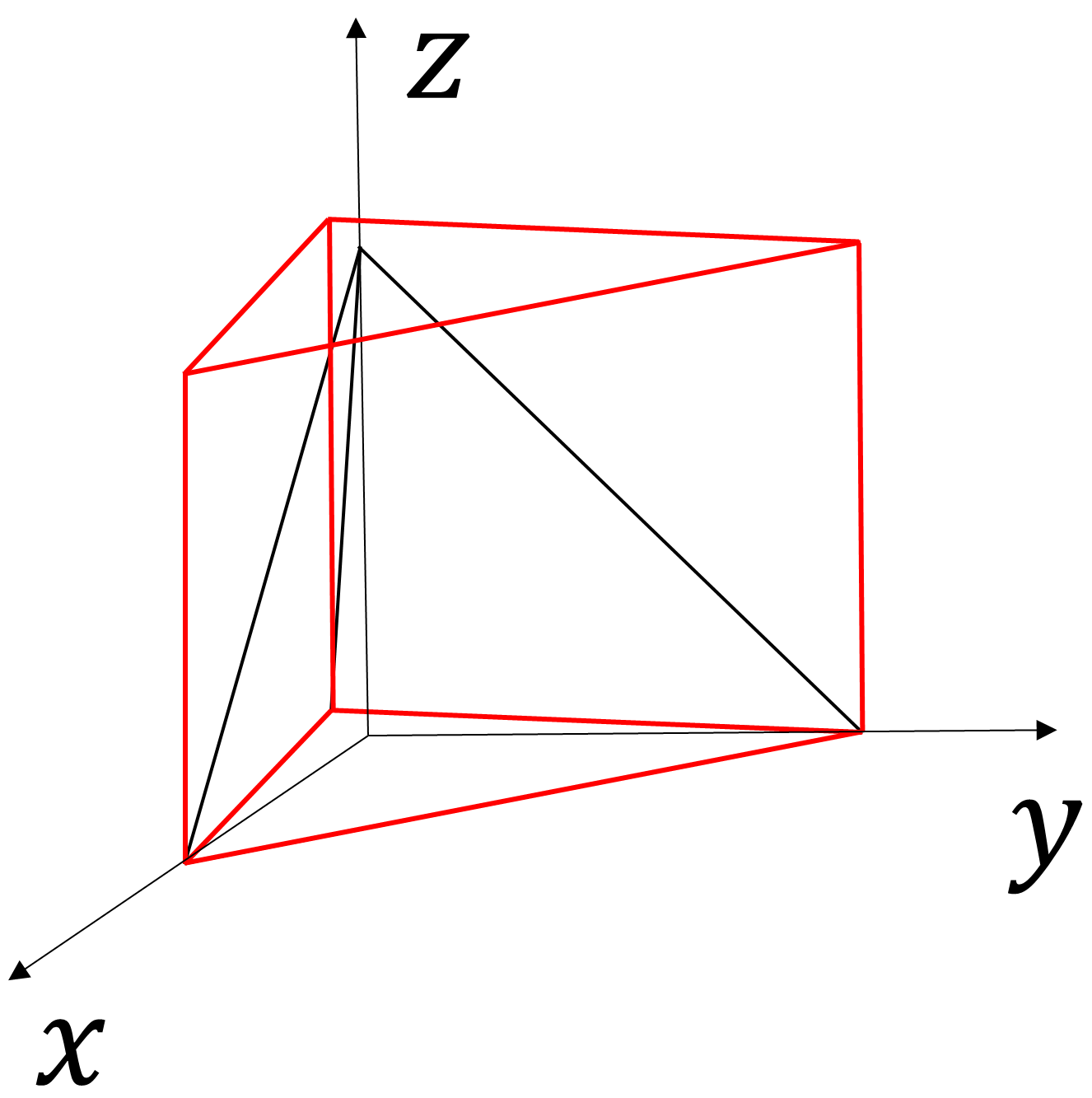}
		\caption{$\mathcal{A}$ and $\mathcal{B}$}
		\label{fig:AB}
	\end{subfigure}
\begin{subfigure}{0.32\linewidth}
		\centering
            \includegraphics[width=1.0\linewidth]{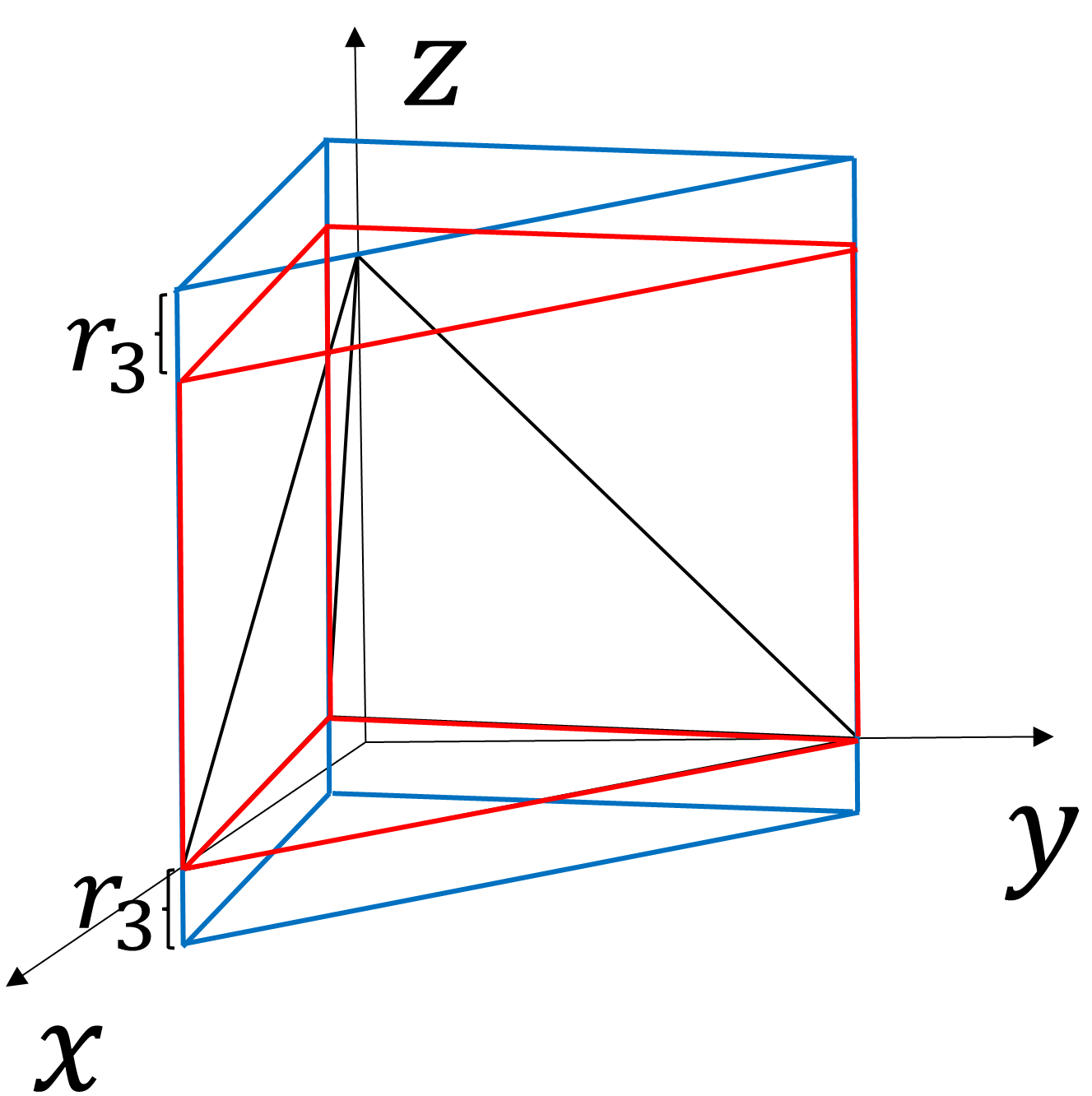}
		\caption{$\mathcal{A}$, $\mathcal{B}$ and $\mathcal{C}$}
		\label{fig:ABC}
	\end{subfigure}
\caption{Comparisons of Three Convex Hulls.}
\vspace{-10pt}
\label{fig:conv_compare}
\end{figure}

\par Inspired by \cite{abbas2020interplay}, we further use Hausdorff distance to measure the distance between the two convex hulls $\mathcal{A}_1$ and $\mathcal{A}_2$, as well as the diameter of a convex hull.
\begin{definition}\label{def:hausdorff}
Given two convex hulls $\mathcal{A}_1$ and $\mathcal{A}_2\subset \mathbb{R}^d$, the Hausdorff distance between $\mathcal{A}_1$ and $\mathcal{A}_2$ is defined as
\begin{equation}\label{eq:hausdorff}
D_H(\mathcal{A}_1,\mathcal{A}_2) = \max_{a_1\in\partial\mathcal{A}_1}\{\min_{a_2\in\partial\mathcal{A}_2}\{\mathrm{dist}(a_1,a_2)\}\},
\end{equation}
where $\mathrm{dist}(a_1,a_2)$ is the Euclidean distance between $a_1$ and $a_2$, and $\partial\mathcal{A}_1$, $\partial\mathcal{A}_2$ denote the boundaries of $\mathcal{A}_1$ and $\mathcal{A}_2$, respectively.
\end{definition}
\par We denote by $\mu(\mathcal{A}_1)$ the diameter of $\mathcal{A}_1$, i.e.,
\begin{equation}
\mu(\mathcal{A}_1)=\max_{a,b\in\partial\mathcal{A}_1}\{\mathrm{dist}(a,b)\}.
\end{equation}
We utilize both $D_H(\mathcal{A}_1,\mathcal{A}_2)$ and $\mu(\mathcal{A}_1)$ to quantify the distance between $\mathcal{A}_1$ and $\mathcal{A}_2$.

\begin{theorem}\label{th:distance}
The convex hull $\mathcal{C}$ satisfies 
\begin{equation}\label{eq:analysis1}
D_H(\mathcal{A},\mathcal{C}) \leq \sqrt{\frac{d}{2}}\;\mu(\mathcal{A})+\sqrt{r_1^2+r_2^2+\cdots+r_{\beta d}^2},     \nonumber   
\end{equation}
where $0 < \beta < 1$, $r_k$ is a freely chosen parameter for $k = 1, 2, \ldots, \beta d$, and $\mathcal{A}$ is the convex hull of the initial states of the normal agents. Under PP-ADRC, if the condition in \eqref{eq:neighbor} holds, $\overline{\mathcal{G}}(0),\overline{\mathcal{G}}(1),\overline{\mathcal{G}}(2),\ldots$ is repeatedly reachable, and $(1-\beta)$ fraction of the dimensions of the state vector are not added with noises, the probability $\mathrm{P}\{\xi \in \mathcal{C}\}$ that the final value lies within the convex hull $\mathcal{C}$ satisfies that     
\begin{equation}\label{eq:acc_conv}
\mathrm{P}\{\xi \in \mathcal{C}\}\geq \prod\limits_{k=1}^{\beta d} \left[1-\frac{\lambda^2/(1-\upsilon^2)}{(l_k+r_k)^2}\right].
\end{equation}
\end{theorem}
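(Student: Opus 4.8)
The statement has two independent components: a deterministic geometric inequality relating the nested hulls $\mathcal{A}\subseteq\mathcal{B}\subseteq\mathcal{C}$, and a probabilistic containment bound. The plan is to treat them separately. The nesting itself is immediate from Definition \ref{def:convexhull}: $\mathcal{B}$ keeps the non-noisy projection $P$ of $\mathcal{A}$ but extrudes it over the box $\prod_{k=1}^{\beta d}[[\overline{\mathbf{x}}_0]^k_{\min},[\overline{\mathbf{x}}_0]^k_{\max}]$ along the noisy axes, and $\mathcal{C}$ enlarges that box by $r_k$ on each side. Since the directed Hausdorff distance of Definition \ref{def:hausdorff} obeys the triangle inequality, I would split $D_H(\mathcal{A},\mathcal{C})\leq D_H(\mathcal{A},\mathcal{B})+D_H(\mathcal{B},\mathcal{C})$ and bound each term by a worst-case displacement.

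For $D_H(\mathcal{B},\mathcal{C})$, pushing each noisy cap outward by $r_k$ moves a boundary point of $\mathcal{B}$ to $\partial\mathcal{C}$ with a displacement whose worst case is a corner where all noisy coordinates are extremal, giving $\sqrt{r_1^2+\cdots+r_{\beta d}^2}$. For $D_H(\mathcal{A},\mathcal{B})$, a boundary point of $\mathcal{A}$ whose projection already lies on $\partial P$ sits on a side wall of $\mathcal{B}$ at zero cost; otherwise it is moved to the nearest noisy cap, at a per-axis cost of at most half the extent $[\overline{\mathbf{x}}_0]^k_{\max}-[\overline{\mathbf{x}}_0]^k_{\min}$ of $\mathcal{A}$. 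Bounding each extent by the diameter $\mu(\mathcal{A})$ and summing squares over the axes gives a bound of the form $c\sqrt{d}\,\mu(\mathcal{A})$; reconciling the constant with the stated $\sqrt{d/2}$ is where the geometric bookkeeping is most delicate, and I would follow the estimation technique of \cite{abbas2020interplay}.

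For the containment bound I would first show that $\{\xi\in\mathcal{C}\}$ decouples. As no noise enters the $(1-\beta)d$ noise-free dimensions and the mixing $\mathbf{M}(t)$ acts identically on every coordinate, the noise-free block of $\xi$ equals $\psi[\overline{\mathbf{x}}_0]_{\text{non-noisy}}$ for the common row-stochastic limit $\psi$ of Lemma \ref{lm:variance}; being a convex combination of the initial projections, it lies in $P$ almost surely. Thus $\{\xi\in\mathcal{C}\}$ reduces to each noisy coordinate satisfying $\xi_k\in[[\overline{\mathbf{x}}_0]^k_{\min}-r_k,[\overline{\mathbf{x}}_0]^k_{\max}+r_k]$, which by the remark after Lemma \ref{lm:accuracy} is implied by $|\xi_k-\psi[\overline{\mathbf{x}}_0]^k|\leq l_k+r_k$; applying Lemma \ref{lm:accuracy} per coordinate and multiplying the probabilities yields \eqref{eq:acc_conv}. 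The main obstacle here is precisely this multiplication into a product, which demands independence of the events $\{\xi_k\in[\cdots]\}$ across noisy dimensions. That independence follows from the diagonal covariance $\Sigma(t)=\lambda^2\upsilon^{2t}I_d$ only if the centerpoint weights $a_{ij}(t)$ are treated as common to all coordinates and uncorrelated with the per-dimension noise; since the centerpoint genuinely couples coordinates through these weights, justifying the decoupling within the LTV abstraction of Lemma \ref{ltv}, or stating it as the operative modeling assumption, is the subtlest step of the argument.
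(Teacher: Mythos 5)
Your proposal follows essentially the same route as the paper: the triangle-inequality split $D_H(\mathcal{A},\mathcal{C})\leq D_H(\mathcal{A},\mathcal{B})+D_H(\mathcal{B},\mathcal{C})$, the bound $\sqrt{d/2}\,\mu(\mathcal{A})$ imported from \cite{abbas2020interplay} (the paper just makes the intermediate hyperrectangle $\mathcal{D}=\mathcal{A}_1\times\cdots\times\mathcal{A}_d$ explicit), the worst-case corner estimate $\sqrt{r_1^2+\cdots+r_{\beta d}^2}$ for $D_H(\mathcal{B},\mathcal{C})$, and a per-coordinate application of Lemma \ref{lm:accuracy} multiplied across the noisy dimensions. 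The independence issue you flag for turning the per-coordinate bounds into a product is a genuine subtlety that the paper's proof does not address either---it simply asserts the product---so your explicit identification of the cross-dimensional decoupling as the operative modeling assumption is, if anything, more careful than the published argument.
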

\begin{proof}
We characterize the upper bound of $D_H(\mathcal{A},\mathcal{C})$ by splitting it into $D_H(\mathcal{A},\mathcal{B})$ and $D_H(\mathcal{B},\mathcal{C})$. Thus, the proof is divided into two parts.
\par $D_H(\mathcal{A},\mathcal{B})$: Let us discuss a special case as presented in \cite{abbas2020interplay}. We define a new convex hull
$\mathcal{D}=\mathcal{A}_1\times\mathcal{A}_2\times\ldots\times\mathcal{A}_d$, where $\mathcal{A}_k=\operatorname{conv}(x_{1,k}(0),x_{2,k}(0),\ldots,x_{\overline{n},k}(0))$, $k=1,2,\ldots,d$. The example in $\mathbb{R}^3$ is shown in Fig. \ref{fig:axis-para}. Apparently, $D_H(\mathcal{A},\mathcal{D})$ is bound to greater than or equal to $D_H(\mathcal{A},\mathcal{B})$. Moreover, it is also obtained that $D_H(\mathcal{A},\mathcal{D})\leq \sqrt{\frac{d}{2}}\;\mu(\mathcal{A})$. Therefore, we obtain
$D_H(\mathcal{A},\mathcal{B})\leq D_H(\mathcal{A},\mathcal{D})\leq\sqrt{\frac{d}{2}}\;\mu(\mathcal{A})$.

\begin{figure}[htbp]
\centering
\begin{subfigure}{0.4\linewidth}
\vspace{-10pt}
		\centering
		\includegraphics[width=0.9\linewidth]{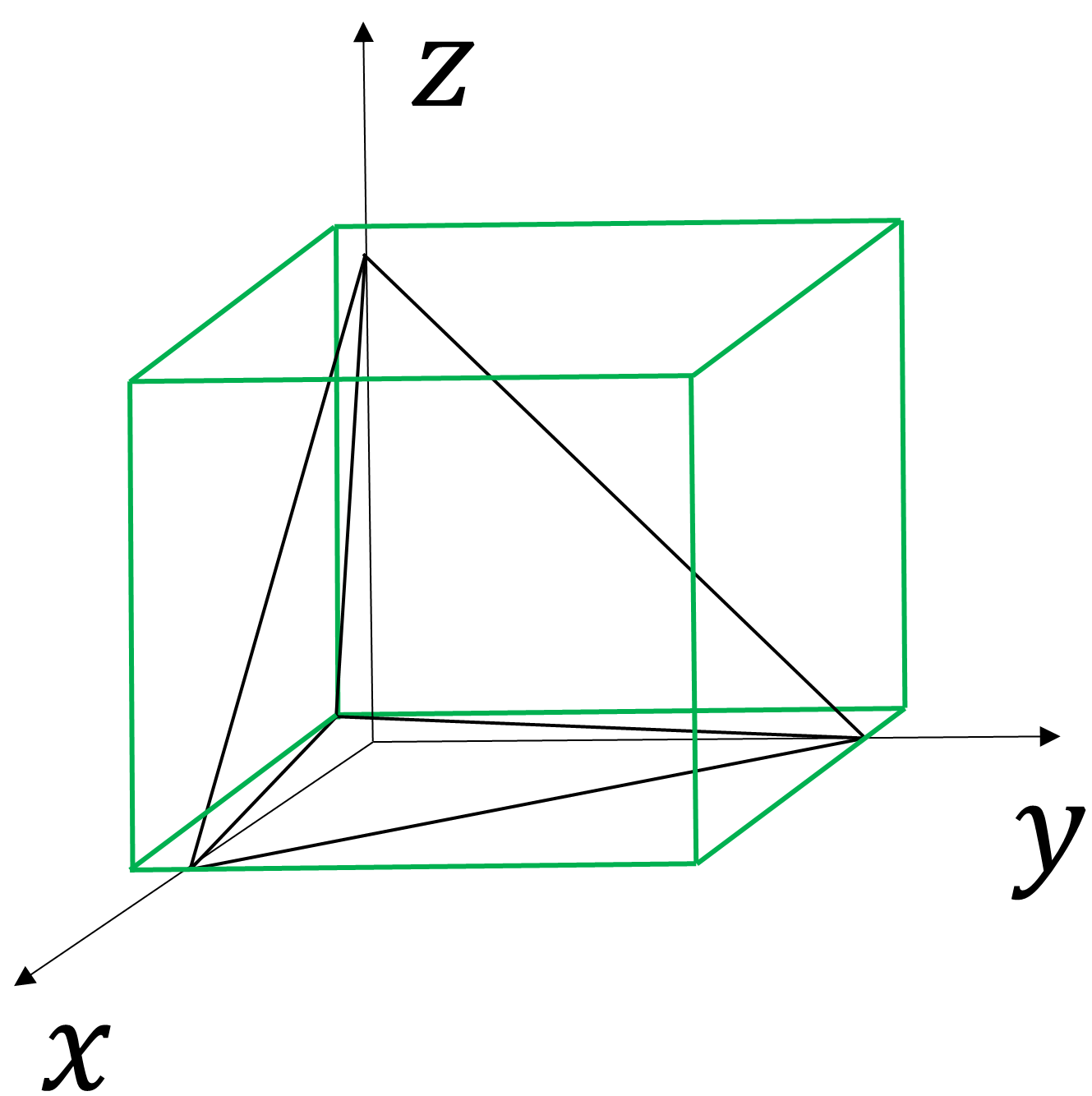}
		\caption{$\mathcal{A}$ and $\mathcal{D}$}
		\label{fig:AD}
	\end{subfigure}
 \begin{subfigure}{0.4\linewidth}
		\centering
		\includegraphics[width=0.9\linewidth]{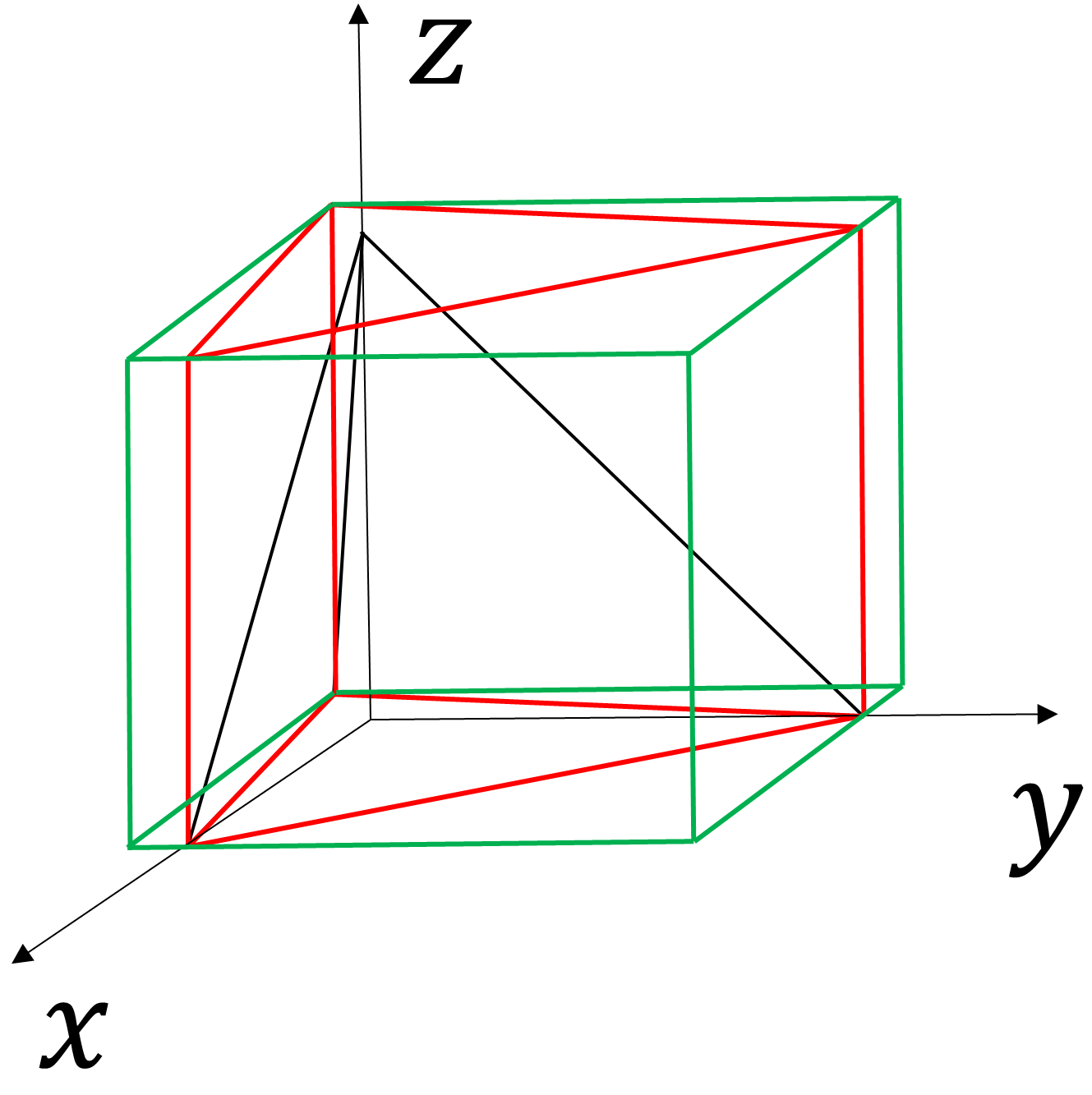}
		\caption{$\mathcal{A}$, $\mathcal{B}$, and $\mathcal{D}$}
		\label{fig:ABD}
	\end{subfigure}
\caption{Hyperrectangle in $\mathbb{R}^3$.}
\vspace{-5pt}
\label{fig:axis-para}
\end{figure}

\par $D_H(\mathcal{B},\mathcal{C}):$ we consider a surface of $\mathcal{B}$ and denote its vertices by $\mathcal{B}_1, \ldots, \mathcal{B}_{\kappa}$. We then characterize the nearest surface of $\mathcal{C}$ parallel to the surface we just mentioned, and the corresponding vertices are $\mathcal{C}_1,\ldots,\mathcal{C}_{\kappa}$. By recalling the definition of the Hausdorff distance, it must correspond to the distance between points on two surfaces. We denote by $b$ and $c$ any two points on the surface of $\mathcal{B}$ and $\mathcal{C}$, respectively. We can obtain $b=b_1\mathcal{B}_1+\ldots+b_{\kappa}\mathcal{B}_{\kappa}$ and $c=c_1\mathcal{C}_1+\ldots+c_{\kappa}\mathcal{C}_{\kappa}$, where $b_1+\ldots+b_\kappa=c_1+\ldots+c_{\kappa}=1$, and $b_i\geq0,c_i\geq0,\ i=1,2,\ldots,\kappa$. Moreover, from Definition \ref{def:convexhull}, we can obtain
$\mathcal{B}_i-\mathcal{C}_i=[\pm r_1,\pm r_2,\ldots,\pm r_{\beta d}]^\top, i=1,2,\ldots,\kappa$,
where the sign depends on the position of the surface. Hence, the distance $\mathrm{dist}(b,c)$ between $b$ and $c$ satisfies
\begin{equation}
\begin{aligned}
\mathrm{dist}(b,c)&=\sqrt{\sum\limits_{i=1}^{\kappa} ||b_i\mathcal{B}_i-c_i\mathcal{C}_i||_2^2}\\
&=\sqrt{\sum\limits_{i=1}^{\kappa} \left[(b_i-c_i)^2||\mathcal{C}_i||_2^2+b_i^2(r_1^2+\ldots+r_{\beta d}^2)\right]}.  
\end{aligned}
\nonumber
\end{equation}
First, we assume that $b_i$ is a constant and then find the lower bound of $\mathrm{dist}(b,c)$. Apparently, when $c_i=b_i$, $\mathrm{dist}(b,c)$ is lower bounded, which is $\min\limits_{c_i}(\mathrm{dist}(b,c))=\sqrt{\sum\limits_{i=1}^\kappa \left[b_i^2(r_1^2+\ldots+r_{\beta d}^2)\right]}$. Then, the problem becomes finding the upper bound of $\min\limits_{c_i}(\mathrm{dist}(b,c))$, which is denoted by $\max\limits_{b_i}\min\limits_{c_i}(\mathrm{dist}(b,c))$. Given that $b_1+\ldots+b_{\kappa}=1$ and $b_1^2+\ldots+b_{\kappa}^2\leq (b_1+\ldots+b_{\kappa})^2$, we can know that the upper bound of $b_1^2+\ldots+b_{\kappa}^2$ is $1$ when $b_{i_0}=1,\  i_0\in{1,2,\ldots,\kappa}\ \text{ and }\ b_{i}=0,\ i\neq i_0$. Therefore, we can obtain
$\max\limits_{b_i}\min\limits_{c_i}(\mathrm{dist}(b,c))=\sqrt{r_1^2+\ldots+r_{\beta d}^2}$.
According to the definition, $\max\limits_{b_i}\min\limits_{c_i}(\mathrm{dist}(b,c))$ is exactly the Hausdorff distance $D_H(\mathcal{B},\mathcal{C})$ between $\mathcal{B}$ and $\mathcal{C}$.
So we can obtain
\begin{equation}
\begin{aligned}
D_H(\mathcal{A},\mathcal{C})&\leq D_H(\mathcal{A},\mathcal{B})+D_H(\mathcal{B},\mathcal{C})\\
&\leq \sqrt{\frac{d}{2}}\;\mu(\mathcal{A})+\sqrt{r_1^2+r_2^2+\cdots+r_{\beta d}^2}.
\end{aligned}
\nonumber
\end{equation}
The corresponding probability satisfies 
\begin{equation}
\begin{aligned}
\mathrm{P}\{\xi \in \mathcal{C}\}&\geq \prod\limits_{k=1}^{\beta d} \left[1-\frac{\operatorname{var}(\xi_k)}{(l_k+r_k)^2}\right]\\
&\geq \prod\limits_{k=1}^{\beta d} \left[1-\frac{\lambda^2/(1-\upsilon^2)}{(l_k+r_k)^2}\right].
\end{aligned}
\nonumber
\end{equation}
\end{proof}
\begin{remark}
The one-dimensional results can yield an analytical solution because there is no need to consider the correlations between dimensions, making the analysis easier. The probability of convergence accuracy is influenced by several parameters. Specifically, if the Hausdorff distance between the convex hulls increases, the probability decreases. Similarly, an increase in the noise parameters $\lambda$ and $\upsilon$, which corresponds to an increase in variance, also leads to a decrease in probability.
\end{remark}

\section{Privacy Analysis}
\label{sec:privacy}
\par In this section, we analyze the $\rho$-CGP of PP-ADRC and compare it with $(\varepsilon,\delta)$-DP. At last, we make a detailed discussion on the relationship between convergence accuracy and privacy.
\subsection{\texorpdfstring{$\rho$-}{\rho}CGP in Resilient Vector Consensus}
\par We first investigate the $\rho$-CGP properties of PP-ADRC. As defined in Definition \ref{def:CGP2}, in the resilient vector consensus, the input is the initial states of normal agents $\overline{\mathbf{x}}_0$ and the output is the infinite sequence of noisy states $\mathbf{Y}=\{\overline{\mathbf{y}}(t)\}_{t=0}^{\infty}$.

\begin{theorem}
\label{th:cgp_rvs}
Under PP-ADRC, if the condition in \eqref{eq:neighbor} holds, and $\overline{\mathcal{G}}(0),\overline{\mathcal{G}}(1),\overline{\mathcal{G}}(2),\ldots$ is repeatedly reachable, then we achieve convergence under $\rho$-CGP, where $\rho=\frac{n\upsilon^2}{2\lambda^2[\upsilon^2-(1-\gamma_l)^2]}$.
\end{theorem}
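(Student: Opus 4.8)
The plan is to exploit the causal (conditional) structure of the noisy output stream and thereby reduce the infinite-horizon R\'enyi divergence to a convergent sum of per-iteration Gaussian divergences. Fix two initial conditions $\overline{\mathbf{x}}_0$ and $\overline{\mathbf{x}}_0'$ and let $p$ and $q$ denote the densities of $\mathbf{Y}_{\overline{\mathbf{x}}_0}(\mathbf{N})$ and $\mathbf{Y}_{\overline{\mathbf{x}}_0'}(\mathbf{N})$. First I would factor each density causally, writing $p$ as $\prod_{t=0}^{T}p(\overline{\mathbf{y}}(t)\mid \overline{\mathbf{y}}(0),\dots,\overline{\mathbf{y}}(t-1))$, and observe that, conditioned on the initial state together with the past noisy states, the true state $\overline{\mathbf{x}}(t)$ is a deterministic reconstruction via \eqref{eq:update}: the centerpoints $s_i(t)$ are computed solely from the observed noisy states, so they are identical under $p$ and $q$, and only the self-term $(1-\gamma_i(t))x_i(t)$ carries the dependence on the initial condition. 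Consequently each conditional law is, per agent, $N(\overline{\mathbf{x}}(t),\Sigma(t))$ (resp.\ $N(\overline{\mathbf{x}}'(t),\Sigma(t))$), sharing the same covariance $\Sigma(t)=\lambda^2\upsilon^{2t}I_d$ under both hypotheses.

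Next I would establish the key sensitivity recursion. Subtracting the two reconstructions of \eqref{eq:update}, the shared centerpoint terms cancel and yield $x_i(t+1)-x_i'(t+1)=(1-\gamma_i(t))[x_i(t)-x_i'(t)]$, hence $\|x_i(t)-x_i'(t)\|_2\le(1-\gamma_l)^{t}\,\|[\overline{\mathbf{x}}_0]_i-[\overline{\mathbf{x}}_0']_i\|_2$ using $1-\gamma_i(t)\le 1-\gamma_l$. I would then invoke the closed form of the R\'enyi divergence between two Gaussians with common covariance, $D_\alpha(N(\mu_1,\Sigma)\|N(\mu_2,\Sigma))=\tfrac{\alpha}{2}(\mu_1-\mu_2)^\top\Sigma^{-1}(\mu_1-\mu_2)$, which, combined with the independence of the agents' noises, gives the per-iteration contribution $\tfrac{\alpha}{2}\sum_{i=1}^{\overline{n}}\|x_i(t)-x_i'(t)\|_2^2/(\lambda^2\upsilon^{2t})$.

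To assemble the horizon, I would compose the per-step divergences. Because the mean offset $x_i(t)-x_i'(t)$ at each step is deterministic (independent of the conditioning $\overline{\mathbf{y}}(0),\dots,\overline{\mathbf{y}}(t-1)$), the inner conditional divergence is a constant that factors out under the tower property; iterating the expectation defining $D_\alpha$ from $t=T$ down to $0$ therefore produces an exact sum (equivalently, this is the adaptive composition of CGP). Letting $T\to\infty$ and inserting the sensitivity bound gives $D_\alpha(\mathbf{y}_{\overline{\mathbf{x}}_0}(\mathbf{N})\|\mathbf{y}_{\overline{\mathbf{x}}_0'}(\mathbf{N}))\le\frac{\alpha}{2\lambda^2}\sum_{t=0}^{\infty}\big((1-\gamma_l)^2/\upsilon^2\big)^{t}\sum_{i=1}^{\overline{n}}\|[\overline{\mathbf{x}}_0]_i-[\overline{\mathbf{x}}_0']_i\|_2^2$. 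The hypothesis $\gamma_l>1-\upsilon$ ensures $(1-\gamma_l)^2<\upsilon^2$, so the geometric series converges to $\upsilon^2/[\upsilon^2-(1-\gamma_l)^2]$; bounding $\sum_{i=1}^{\overline{n}}\|\cdot\|_2^2\le \overline{n}\max_i\|\cdot\|_2^2\le n\,\mathrm{dist}(\overline{\mathbf{x}}_0,\overline{\mathbf{x}}_0')^2$ then matches $\alpha\rho\,\mathrm{dist}(\overline{\mathbf{x}}_0,\overline{\mathbf{x}}_0')^2$ with the asserted $\rho=\frac{n\upsilon^2}{2\lambda^2[\upsilon^2-(1-\gamma_l)^2]}$, while mean-square convergence comes directly from Theorem \ref{th:consensus}.

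I expect the main obstacle to be the rigorous justification of the conditional-reconstruction step: arguing that the centerpoint coefficients, although nonlinear and noise-dependent, are pinned down by the observed output stream and hence coincide under the two hypotheses, so that the problematic nonlinearity cancels in the divergence and only the geometrically decaying initial-state mismatch survives. A secondary difficulty is making the infinite composition precise, i.e.\ justifying the interchange of the $T\to\infty$ limit with the divergence and verifying summability of the per-step terms.
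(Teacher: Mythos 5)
Your proposal is correct, and it rests on exactly the same key fact as the paper's proof: the centerpoints $s_i(t)$ are functions of the transmitted noisy states alone, so they coincide under the two hypotheses and only the self-term $(1-\gamma_i(t))x_i(t)$ carries the initial-condition dependence, giving $x_i(t)-x_i'(t)=\prod_{s=0}^{t-1}[1-\gamma_i(s)]\,\delta_i$ and hence the geometric decay at rate $(1-\gamma_l)$ that is traded off against the noise decay $\upsilon^t$. Where you differ is in how the divergence is assembled. The paper follows the Nozari/Fiore-style coupling: it defines an explicit noise shift $\eta_i'(h)=\eta_i(h)-\prod_{t=0}^{h-1}[1-\gamma_i(t)]\delta_i$ that makes the two output sequences identical, performs a global change of variables so that both output laws are expressed as the same joint Gaussian evaluated at $\mathbf{N}_k$ and at $\mathbf{N}_k+\Delta\mathbf{N}_k$, and then computes the \renyi divergence of the full $\overline{n}(k+1)$-dimensional shifted Gaussian in one integral (equation \eqref{eq:renyi_inter3}). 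You instead factor the output law causally and apply the mean-shifted-Gaussian formula step by step, summing via adaptive composition; this is valid here precisely because the per-step mean offsets are deterministic, so the iterated expectation telescopes exactly. Your route avoids manipulating the infinite-dimensional joint density and makes the "privacy loss per iteration" structure explicit, which is arguably cleaner; the paper's route avoids having to justify the chain rule for \renyi divergence, which for $\alpha\neq 1$ only telescopes cleanly because the conditional divergences are constants --- a point you correctly flag but should state as the load-bearing step. Both arguments share the same unaddressed subtlety (also present in the paper): the centerpoint depends on faulty agents' messages, which are not part of the output $\mathbf{Y}$, so one must implicitly fix or couple the adversary's behavior across the two executions for the cancellation to go through.
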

\begin{proof}
Here, we present a general proof overview: Initially, we identify a variable that equalizes the outputs under two initial conditions following those in \cite{nozari2017differentially, fiore2019resilient}. Subsequently, we apply a variable transformation to standardize the integral domain and compute the R\'enyi divergence.

\par For any pair of initial conditions $\overline{\mathbf{x}}_0$ and $\overline{\mathbf{x}}_0'$ and an arbitrary set $\mathcal{O}\in \mathcal{B}((\mathbb{R}^{\overline{n}\times d})^\mathbb{N})$ of output, we consider the corresponding input domain.  For any $k\geq 0$, let the domain $R_k=\{ \mathbf{N}_k \in \Omega_k | \mathbf{Y}_{k,\overline{\mathbf{x}}_0} (\mathbf{N}_k) \in \mathcal{O}_k  \},$ where $\Omega_k=(\mathbb{R}^{\overline{n}\times d})^{k+1}$ is the sample space until time $k$ and $\mathcal{O}_k$ is the output set obtained by truncating the elements of $\mathcal{O}$ to finite subsequences 
of length $k+1$. We have the same definition for $\overline{\mathbf{x}}_0'$, which is $R_k'=\{ \mathbf{N}_k \in \Omega_k | \mathbf{Y}_{k,\overline{\mathbf{x}}_0'} (\mathbf{N}_k) \in \mathcal{O}_k \}$. Based on the continuity of probability \cite{durrett2019probability}, the probability can be written as
\begin{equation}
\mathrm{P}\{ \mathbf{N} \in \Omega | \mathbf{Y}_{\overline{\mathbf{x}}_0} (\mathbf{N}) \in \mathcal{O} \}=\lim_{k\to\infty} \int_{R_k} \mathsf{f}_{\overline{n}(k+1)}(\mathbf{N}_k) \mathrm{d}\mathbf{N}_k    
\end{equation}
and
\begin{equation}
\mathrm{P}\{ \mathbf{N} \in \Omega | \mathbf{Y}_{\overline{\mathbf{x}}_0'} (\mathbf{N}) \in \mathcal{O} \}=\lim_{k\to\infty} \int_{R_k'} \mathsf{f}_{\overline{n}(k+1)}(\mathbf{N}_k') \mathrm{d}\mathbf{N}_k',    
\end{equation}
where $\mathsf{f}_{\overline{n}(k+1)}$ represents the pdf of a joint multivariate Gaussian distribution in $\overline{n}(k+1)$ dimensions, defined as
\begin{equation}
\label{eq:joint_pdf}
\begin{split}
\mathsf{f}_{\overline{n}(k+1)}(\mathbf{N}_k)&= \prod\limits_{h=0}^k \prod\limits_{i=1}^{\overline{n}} G(\eta_i(h);\Sigma(h)).
\end{split}
\end{equation}
Therefore, we can determine the distribution of the output of the first execution, $\mathbf{y}_{\overline{\mathbf{x}}_0}(\mathbf{N}) = \lim\limits_{k \to \infty} \mathsf{f}_{\overline{n}(k+1)}(\mathbf{N}_k)$, and the output of the second execution, $\mathbf{y}_{\overline{\mathbf{x}}_0'}(\mathbf{N}) = \lim\limits_{k \to \infty} \mathsf{f}_{\overline{n}(k+1)}(\mathbf{N}_k')$.
 The next step is to calculate the Rényi divergence of these two distributions, but the integral variables are different, making it unrealizable for us to make a direct computation. Therefore, we need to find the relationship between $\mathbf{N}$ and $\mathbf{N}'$ so that the integral variables and domains of $\mathbf{y}_{\overline{\mathbf{x}}_0}(\mathbf{N})$ and $\mathbf{y}_{\overline{\mathbf{x}}_0'}(\mathbf{N}')$ are the same, differing only in the integral expression. Given that $\overline{\mathbf{x}}_0$ and $\overline{\mathbf{x}}_0'$ are two arbitrary initial conditions, we assume that $x_i'(0) = x_i(0) + \delta_i$, $i = 1, 2, \ldots, \overline{n}$, where $\delta_i \in \mathbb{R}^d$ and $\|\delta_i\|_2 \leq \mathrm{dist}(\overline{\mathbf{x}}_0, \overline{\mathbf{x}}_0')$. Then, for any $\mathbf{N}_k \in  R_k$, we define $\mathbf{N}_k'$ by
\begin{equation}
\label{eq:privacy_noise}
\eta_i'(h)=\begin{cases} \eta_i(h)-\prod_{t=0}^{h-1}\left[1-\gamma_i(t)\right]\delta_i&\text{if }h>0.\\ \eta_i(h)-\delta_i &\text{if }h=0,\end{cases}
\end{equation}
for $i=1,2,\cdots,\overline{n}$. 
\par Then, we show that the outputs $\mathbf{Y}_{k,\overline{\mathbf{x}}_0}(\mathbf{N}_k)$ and $\mathbf{Y}_{k,\overline{\mathbf{x}}_0'}(\mathbf{N}_k')$ share the same distribution under the above definitions. This is proven by the method of mathematical induction. From equation \eqref{eq:add_noise}, we can derive $y_i'(0)=x_i'(0)+\eta_i'(0)=x_i(0)+\delta_i+\eta_i(0)-\delta_i=y_i(0), i=1,2,\cdots,\overline{n}$. We can conclude that for every normal agent $i$, the calculated centerpoints $s_i(0)$ and $s_i'(0)$ are the same. According to equation \eqref{eq:update}, we can derive $x_i'(1)=x_i(1)+(1-\gamma_i(0))\delta_i$. By induction, we can easily obtain that $x_i'(h)=x_i(h)+\prod_{t=0}^{h-1}\left[1-\gamma_i(t)\right]\delta_i, h=1,2,\cdots,k$, followed by $y_i(h)=y_i'(h)$. It means that $\mathbf{Y}_{k,\overline{\mathbf{x}}_0}(\mathbf{N}_k)=\mathbf{Y}_{k,\overline{\mathbf{x}}_0'}(\mathbf{N}_k')$, indicating that $\mathbf{N}_k'\in R_k'$ and we successfully build a bijective correspondence. For any $\mathbf{N}_k'\in R_k'$, there exists $(\mathbf{N}_k,\Delta\mathbf{N}_k)\in R_k \times (\mathbb{R}^{\overline{n}\times d})^{k+1}$ such that $\mathbf{N}_k'=\mathbf{N}_k+\Delta\mathbf{N}_k$. Therefore, we derive the integral by changing variables $\mathrm{P}\{ \mathbf{N} \in \Omega | \mathbf{Y}_{\overline{\mathbf{x}}_0'} (\mathbf{N}) \in \mathcal{O} \}=\lim\limits_{k\to\infty} \int_{R_k} \mathsf{f}_{\overline{n}(k+1)}(\mathbf{N}_k+\Delta\mathbf{N}_k ) \mathrm{d}\mathbf{N}_k,$ indicating that the distribution $\mathbf{y}_{\overline{\mathbf{x}}_0'}(\mathbf{N})$ can be expressed as $\lim\limits_{k\rightarrow \infty}\mathsf{f}_{\overline{n}(k+1)}(\mathbf{N}_k+\Delta\mathbf{N}_k )$.
\par With all the preparations above, we can further analyze the Rényi divergence of $\mathbf{y}_{\overline{\mathbf{x}}_0}(\mathbf{N})$ and $\mathbf{y}_{\overline{\mathbf{x}}_0'}(\mathbf{N})$ as equation \eqref{eq:renyi_inter1}. 
\begin{figure*}[htbp]
\vspace*{8pt} 
\begin{equation}
\label{eq:renyi_inter1}
\begin{aligned}
&D_\alpha(\mathbf{y}_{\overline{\mathbf{x}}_0}(\mathbf{N})\|\mathbf{y}_{\overline{\mathbf{x}}_0'}(\mathbf{N}))\\
&=\lim_{k\to\infty} \frac{1}{\alpha-1} \mathrm{log} \bigg[\int  [\mathbf{y} _{\overline{\mathbf{x}}_0}(\mathbf{N}_k)]^\alpha [\mathbf{y}_{\overline{\mathbf{x}}_0'}(\mathbf{N}_k)]^{1-\alpha} \text{d}\mathbf{N}_k\bigg]\\
&=\lim_{k\to\infty}\frac{1}{\alpha-1} \mathrm{log} \bigg[ \int_{R_k} [\mathsf{f}_{\overline{n}(k+1)}(\mathbf{N}_k)]^\alpha[\mathsf{f}_{\overline{n}(k+1)} (\mathbf{N}_k+\Delta\mathbf{N}_k)]^{1-\alpha} \mathrm{d} \mathbf{N}_k\bigg]\\
\end{aligned}
\end{equation}
\end{figure*}

\begin{figure*}[htbp]
\vspace*{8pt} 
Then, we need to calculate detailed expressions of $[\mathsf{f}_{\overline{n}(k+1)}(\mathbf{N}_k)]^\alpha[\mathsf{f}_{\overline{n}(k+1)} (\mathbf{N}_k+\Delta\mathbf{N}_k)]^{1-\alpha}$.
\begin{equation}
\label{eq:renyi_inter2}
\begin{aligned}
&[\mathsf{f}_{\overline{n}(k+1)}(\mathbf{N}_k)]^\alpha[\mathsf{f}_{\overline{n}(k+1)} (\mathbf{N}_k+\Delta\mathbf{N}_k)]^{1-\alpha}\\
&=\bigg[\mathsf{f}_{\overline{n}(k+1)}(\mathbf{N}_k)\bigg]^\alpha \times \bigg[\mathsf{f}_{\overline{n}(k+1)}(\mathbf{N}_k+\Delta\mathbf{N}_k)\bigg]^{1-\alpha}\\
&=\prod\limits_{h=0}^k\bigg[\prod\limits_{i=1}^{\overline{n}}G(\eta_i(h);\Sigma(h))\bigg]^\alpha \times \bigg[\prod\limits_{i=1}^{\overline{n}}G(\eta_i(h)+\Delta\eta_i(h);\Sigma(h))\bigg]^{1-\alpha}\\
&=\prod\limits_{h=0}^k\prod\limits_{i=1}^{\overline{n}}\bigg[\frac{1}{\sqrt{(2\pi \sigma^2(h))^d}}\mathrm{exp}\left(-\frac{\alpha||\eta_i(h)||^2_2}{2\sigma^2(h)}\right)\cdot \mathrm{exp}\left(-\frac{(1-\alpha)||\eta_i(h)+\Delta\eta_i(h)||^2_2}{2\sigma^2(h)}\right)\bigg]\\
&=\prod\limits_{h=0}^k\prod\limits_{i=1}^{\overline{n}}\bigg[\frac{1}{\sqrt{(2\pi \sigma^2(h))^d}}\mathrm{exp}\left(-\frac{||\eta_i(h)-(1-\alpha)\Delta\eta_i(h)||^2_2-||(1-\alpha)\Delta\eta_i(h)||^2_2+(1-\alpha)||\Delta\eta_i(h)||^2_2}{2\sigma^2(h)}\right)\bigg]
\end{aligned}
\end{equation}
\end{figure*}

\begin{figure*}[htbp]
\vspace*{8pt}
Substituting equation \eqref{eq:renyi_inter2}  into \eqref{eq:renyi_inter1} , we can obtain the Rényi divergence between the two distributions.
\begin{equation}
\label{eq:renyi_inter3}
\begin{aligned}
& D_\alpha(\mathbf{y}_{\overline{\mathbf{x}}_0}(\mathbf{N})\|\mathbf{y}_{\overline{\mathbf{x}}_0'}(\mathbf{N}))\\
&=\lim_{k\to\infty}\frac{1}{\alpha-1}\mathrm{log}\bigg\{\int_{R_{k,1}}\int_{R_{k,2}}\cdots\int_{R_{k,\overline{n}}}\\
& \prod\limits_{h=0}^k\prod\limits_{i=1}^{\overline{n}}\bigg[\frac{1}{\sqrt{(2\pi \sigma^2(h))^d}}\mathrm{exp}\left(-\frac{||\eta_i(h)-(1-\alpha)\Delta\eta_i(h)||^2_2-||(1-\alpha)\Delta\eta_i(h)||^2_2+(1-\alpha)||\Delta\eta_i(h)||^2_2}{2\sigma^2(h)}\right)\bigg]\\
&\mathrm{d}\eta_1(h)\mathrm{d}\eta_2(h)\cdots\mathrm{d}\eta_{\overline{n}}(h)\bigg\}\\
&=\lim_{k\to\infty}\frac{1}{\alpha-1} \mathrm{log}\prod\limits_{h=0}^k\{\prod\limits_{i=1}^{\overline{n}}\underset{\eta_i(h){\sim}G((1-\alpha)\Delta\eta_i(h),\sigma^2(h)I_d)}{\mathbb{E}}\bigg[\mathrm{exp}\left(-\frac{-||(1-\alpha)\Delta\eta_i(h)||^2_2+(1-\alpha)||\Delta\eta_i(h)||^2_2}{2\sigma^2(h)}\right)\bigg]\}\\
&=\lim_{k\to\infty}\sum\limits_{h=0}^k\bigg[\sum\limits_{i=1}^{\overline{n}}\frac{\alpha ||\Delta \eta_i(h)||^2_2}{2\lambda^2\upsilon^{2h}}\bigg] \leq \bigg[\frac{\alpha \overline{n}}{2\lambda^2}+\lim_{k\to\infty}\sum\limits_{h=1}^k \frac{\alpha \overline{n}(1-\gamma_l)^{2h}}{2\lambda^2\upsilon^{2h}}\bigg]\mathrm{dist}(\overline{\mathbf{x}}_0,\overline{\mathbf{x}}_0')^2\\
&\leq \alpha\bigg[\frac{n\upsilon^2}{2\lambda^2[\upsilon^2-(1-\gamma_l)^2]}\bigg]\mathrm{dist}(\overline{\mathbf{x}}_0,\overline{\mathbf{x}}_0')^2.
\end{aligned}
\end{equation}
Here, we use $n$ instead of $\overline{n}$, as $n$ is a known parameter. It is worth noting that, as mentioned earlier, the information exchange during each iteration results in decaying leakage of the initial states. As derived from the inequality above, there is a decay factor of $(1-\gamma_l)$. Therefore, adding a decaying noise is sufficient to achieve privacy preservation.
\end{figure*}

\end{proof}
\begin{remark}\label{rm:CGP}
We can observe that the results do not contain any parameters related to faulty agents, indicating that their presence does not affect the privacy preservation outcome. While this may seem counterintuitive, it is actually reasonable. Each normal agent independently adds Gaussian noise to its local state before transmission to ensure privacy preservation, and this noise addition process is entirely unrelated to faulty agents. Once the information is transmitted, it may be intercepted by various entities, including faulty agents, honest-but-curious normal agents, and external eavesdroppers. However, all of these entities receive the same information—the noisy state transmitted by each normal agent—without any additional insights. Furthermore, the degree of privacy leakage of the initial states at nonzero iterations is determined by the step size $\gamma$ \cite{fiore2019resilient,nozari2017differentially}. Our analysis specifically addresses privacy preservation under these conditions, and thus the behavior of faulty agents does not impact the overall privacy preservation performance. Moreover, given the variables $\lambda$ and $\upsilon$, we plot the surface of $\rho$ concerning them, as shown in Fig. \ref{fig:cgp}. It can be observed that $\rho$ monotonically decreases as both $\lambda$ and $\upsilon$ increase.
\end{remark}
   \begin{remark}
     Gaussian noise is canonical under CGP but may not be optimal. It is both interesting and essential to identify the optimal noise that best aligns with the specific requirements of the scenario, such as achieving the highest privacy preservation given the same noise strength. This problem presents significant challenges, especially in resilient vector consensus where we consider arbitrarily high-dimensional spaces. The complexity arises from two key aspects: First, the noise has numerous dimensional components, and second, these components can exhibit complicated coupling relationships across different dimensions. These characteristics substantially increase the difficulty in determining the optimal noise distribution. Addressing this challenging issue will be a major focus of our future research.
   \end{remark}
\begin{figure}[htbp]
\vspace{-10pt}
    \centering
    \includegraphics[width=0.6\linewidth]{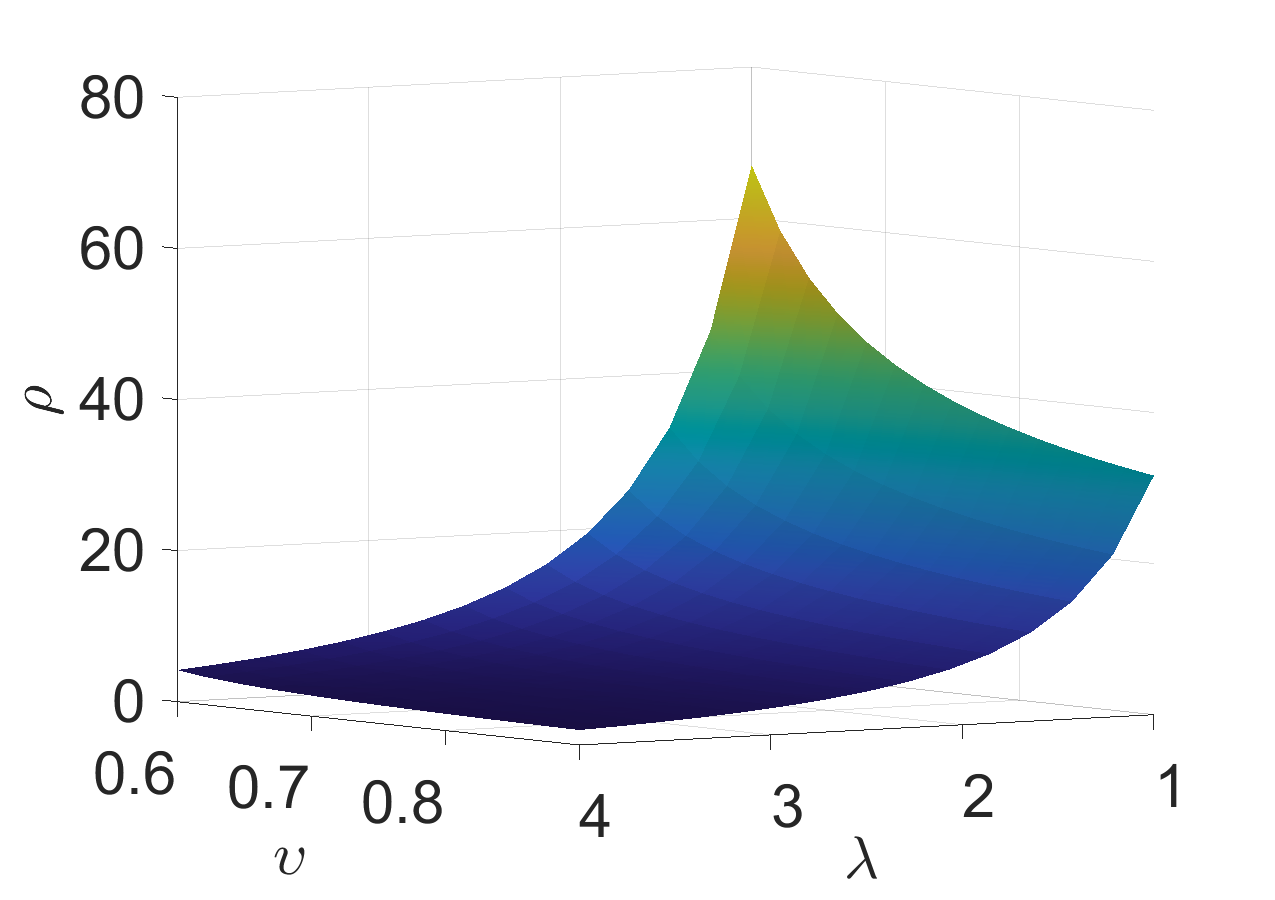}
    \caption{Influence of parameters $\upsilon$ and $\lambda$ on the coefficient $\rho$}
    \label{fig:cgp}
    \vspace{-2pt}
\end{figure}

\subsection{Comparison with $(\varepsilon,\delta)$-DP}
\par We first give the definition of  $(\varepsilon,\delta)$-DP in resilient vector consensus.
\begin{definition}
Given $\ell\in\mathbb{R} \geq 0$, the initial states of normal agents $\overline{\mathbf{x}}_0$ and $\overline{\mathbf{x}}_0'$ are $\ell$-neighboring if, for some $i_0\in \overline{\mathcal{V}}, ||[\overline{\mathbf{x}}_0]_{i_0}-[\overline{\mathbf{x}}_0']_{i_0}||_2\leq \ell$ and $[\overline{\mathbf{x}}_0]_i=[\overline{\mathbf{x}}_0']_i$ if $i\neq i_0$ and $i\in\overline{\mathcal{V}}$. Given $\ell$, $\varepsilon$, and $\delta \in \mathbb{R} \geq 0$, the system satisfies $(\varepsilon,\delta)$-differential privacy if, for any pair of $\overline{\mathbf{x}}_0$ and $\overline{\mathbf{x}}_0'$ of $\ell$-neighboring initial states and any set $\mathcal{O}\in \mathcal{B}((\mathbb{R}^{\overline{n}\times d})^\mathbb{N})$,
\begin{equation}\label{eq:dp}
\mathrm{P}\{\mathbf{N}\in\Omega|\mathbf{Y}_{\overline{\mathbf{x}}_0}(\mathbf{N})\in\mathcal{O}\}\leq e^{\varepsilon}\mathrm{P}\{\mathbf{N}\in\Omega|\mathbf{Y}_{\overline{\mathbf{x}}_0'}(\mathbf{N})\in\mathcal{O}\}+\delta.    
\end{equation}
\end{definition}
\begin{remark}
The definition of neighboring initial states here differs from those in existing papers \cite{liu2024trade,fiore2019resilient,nozari2017differentially}, where the $1$-norm is used. In those papers, differential privacy is $\varepsilon$-DP with Laplace noise, whereas we use $(\varepsilon, \delta)$-DP with Gaussian noise. Therefore, the neighboring initial states are defined using the $2$-norm.
\end{remark}
\par As previously discussed, the system's output is an infinite sequence. The output at each time can be considered as a query. Thus, the final result is the infinite composition of each single query at time $h$ satisfying $(\varepsilon(h),\delta(h))$-DP. We now detail $\varepsilon(h)$ and $\delta(h)$ according to the Theorem A.1. in \cite{dwork2014algorithmic}.
\begin{lemma}
For $h=0,1,2,\ldots$, the output $\overline{\mathbf{y}}(h)$ at each time $h$ can be considered as a query, satisfying $(\varepsilon(h),\delta(h))$-DP. Given $\delta(h)\in \mathbb{R}>0$, $\varepsilon(h)$ holds that $\varepsilon(h)>\frac{\sqrt{2\mathrm{log}\frac{1.25}{\delta(h)}}\ell(1-\gamma_l)^h}{\lambda \upsilon^{h}}$.
\end{lemma}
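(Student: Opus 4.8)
The plan is to treat the single-step output $\overline{\mathbf{y}}(h)=\overline{\mathbf{x}}(h)+\overline{\boldsymbol{\eta}}(h)$ as a Gaussian mechanism applied to the query $\overline{\mathbf{x}}(h)$ and then invoke the standard Gaussian mechanism guarantee (Theorem A.1 in \cite{dwork2014algorithmic}). The only genuinely new ingredient is the $\ell_2$-sensitivity of this query with respect to $\ell$-neighboring initial states, which I would extract directly from the perturbation analysis already carried out in the proof of Theorem \ref{th:cgp_rvs}.

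First I would fix $\ell$-neighboring inputs $\overline{\mathbf{x}}_0$ and $\overline{\mathbf{x}}_0'$ that differ only in agent $i_0$, so that $\|\delta_{i_0}\|_2=\|[\overline{\mathbf{x}}_0]_{i_0}-[\overline{\mathbf{x}}_0']_{i_0}\|_2\le \ell$ and $\delta_i=0$ for $i\neq i_0$. Conditioning on a common realization of the previous outputs $\overline{\mathbf{y}}(0),\ldots,\overline{\mathbf{y}}(h-1)$ fixes every centerpoint $s_i(t)$ with $t\le h-1$, since these depend only on the transmitted noisy states. Under this conditioning the update \eqref{eq:update} decouples across agents, and iterating it shows that the dependence of $x_i(h)$ on its initial value enters solely through the self-decay factor $\prod_{t=0}^{h-1}(1-\gamma_i(t))$. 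This is exactly the relation $x_i'(h)=x_i(h)+\prod_{t=0}^{h-1}[1-\gamma_i(t)]\delta_i$ obtained in the proof of Theorem \ref{th:cgp_rvs} via the bijective noise correspondence \eqref{eq:privacy_noise}. Hence only row $i_0$ of $\overline{\mathbf{x}}(h)$ is perturbed, and, using $\gamma_{i_0}(t)\ge\gamma_l$ together with $\|\delta_{i_0}\|_2\le\ell$, the $\ell_2$-sensitivity is
\[
\Delta_2(h)=\left\|\prod_{t=0}^{h-1}(1-\gamma_{i_0}(t))\,\delta_{i_0}\right\|_2\le (1-\gamma_l)^h\,\ell,
\]
with the $h=0$ case giving $\Delta_2(0)=\|\delta_{i_0}\|_2\le\ell$, consistent with the convention $(1-\gamma_l)^0=1$.

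With the sensitivity in hand, I would apply the Gaussian mechanism. Each coordinate of $\overline{\mathbf{y}}(h)$ is perturbed by independent zero-mean Gaussian noise of standard deviation $\sigma(h)=\lambda\upsilon^h$, since $\Sigma(h)=\lambda^2\upsilon^{2h}I_d$; because the per-agent, per-coordinate noise is isotropic, the scalar Gaussian-mechanism bound applies to the matrix-valued query with the Frobenius-norm sensitivity $\Delta_2(h)$. Theorem A.1 in \cite{dwork2014algorithmic} states that such a mechanism is $(\varepsilon(h),\delta(h))$-DP whenever $\sigma(h)\ge \sqrt{2\log(1.25/\delta(h))}\,\Delta_2(h)/\varepsilon(h)$. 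Rearranging and substituting $\sigma(h)=\lambda\upsilon^h$ and the sensitivity bound yields precisely
\[
\varepsilon(h)>\frac{\sqrt{2\log(1.25/\delta(h))}\,\ell(1-\gamma_l)^h}{\lambda\upsilon^h},
\]
which is the claimed inequality.

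I expect the main obstacle to be the sensitivity step rather than the mechanism step. One must argue carefully that the single-step query has bounded — indeed geometrically decaying — sensitivity even though $\overline{\mathbf{x}}(h)$ depends on the entire noise history through the centerpoints. The decoupling-by-conditioning argument (equivalently, the bijective noise correspondence of Theorem \ref{th:cgp_rvs}) is what ensures that only the self-decay coefficient $\prod_{t=0}^{h-1}(1-\gamma_{i_0}(t))$ propagates the initial perturbation while the neighboring agents' states stay fixed; without this observation a naive bound would appear to grow with the network interactions. Once this is established, the remaining manipulation is the routine inversion of the Gaussian mechanism condition.
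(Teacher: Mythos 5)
Your proposal is correct and follows essentially the same route the paper intends: the paper gives no explicit proof, merely invoking Theorem A.1 of Dwork--Roth, and your sensitivity computation $\Delta_2(h)\le(1-\gamma_l)^h\ell$ via conditioning on the prior outputs is exactly the per-agent perturbation propagation $x_i'(h)=x_i(h)+\prod_{t=0}^{h-1}[1-\gamma_i(t)]\delta_i$ already established in the proof of Theorem \ref{th:cgp_rvs}, combined with $\sigma(h)=\lambda\upsilon^h$. The only caveat worth noting is that Theorem A.1 of \cite{dwork2014algorithmic} is stated for $\varepsilon\in(0,1)$, a restriction inherited by the lemma but not introduced by your argument.
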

\par Regarding the composition, we can simply sum $\varepsilon(h)$ and $\delta(h)$ from $h=0$ to $\infty$. This leads to a rapid increase in privacy loss and $\delta$, resulting in a poor outcome. However, computing the tightest possible privacy guarantee for such a composition is \#P-hard \cite{murtagh2015complexity}. Then, we can make a detailed comparison between these two definitions below.
\begin{itemize}
\item \textbf{CGP better meets the privacy preservation needs.} In resilient vector consensus, our privacy preservation target is the initial states of all the normal agents, which may be leaked during communication, rather than privacy leakage caused by the alteration or addition/deletion of an initial state in a data center. Therefore, the definition of $\ell$-neighboring states in DP does not fully meet our needs, as it should consider any pair of initial states. Consequently, CGP describes the degree of output differences for any pair of inputs, making it more suitable for protecting the initial state of each normal agent under resilient vector consensus. It also uses the distance between initial states as a privacy preservation parameter, thereby greatly improving the system's utility.

\item \textbf{CGP admits better advanced composition.} In advanced composition, privacy loss is linear with the square root of the number of queries. Although $(\varepsilon,\delta)$-DP admits advanced composition, the computation process is quite complicated and would make $\delta$ grow linearly, increasing the risk of privacy leakage. However, the advanced composition of CGP is quite straightforward: It involves simply summing up the $\rho$ values without considering any additional parameters. In resilient vector consensus, we consider the noisy states sent by the agents from time zero to infinity as outputs. Each output at every iteration is considered a potential privacy leakage query. The final result we obtain is essentially the outcome of an infinite number of queries. Using CGP results in less privacy loss and easier computation compared to $(\varepsilon,\delta)$-DP, particularly in scenarios with a large number of queries, as it provides a stricter and more accurate upper bound on privacy preservation.
\end{itemize}
\subsection{Trade-off between Privacy and Accuracy}
In this part, we discuss the trade-off between privacy and accuracy. As we analyze the convergence accuracy from two perspectives, i.e., distribution of the final value and convex hull change, we also detail the trade-off from these two perspectives. From Theorem \ref{th:cgp_rvs}, the parameters that determine $\rho$ are provided. The parameters that we can freely choose are the noise parameters $\lambda$ and $\upsilon$. The larger $\lambda$ and $\upsilon$ are, the stronger the privacy preservation is. Therefore, we analyze the trade-off between accuracy and privacy by varying $\lambda$ and $\upsilon$.
\begin{itemize}
\item Distribution of the final value: The Mahalanobis distance is a normalized coefficient, meaning its value is fixed artificially and remains uninfluenced by variance or covariance. However, the volume of the hyperspheroid formed at an equal Mahalanobis distance can be influenced by various parameters. Therefore, the metric that accurately represents precision should be the volume of the hyperspheroid: the larger the volume, the lower the precision. As previously analyzed, increasing the amplitude parameters $\lambda$ and $\upsilon$ of the noise increases the volume of the hyperspheroid, thereby reducing convergence accuracy and decreasing $\rho$, which corresponds to enhanced privacy preservation. 
\item Convex hull change: We use the Hausdorff distance to characterize the distance between convex hulls before and after adding noise. Similar to the Mahalanobis distance, the Hausdorff distance is also specified artificially. However, the probability of convergence within the convex hull is influenced by the noise parameters. The larger the noise amplitude parameters $\lambda$ and $\upsilon$, the lower the probability that the new convex hull will converge within a given Hausdorff distance. This results in a decrease in convergence accuracy but simultaneously leads to a reduction in $\rho$, thereby enhancing privacy preservation.
\end{itemize}
  \par  We can observe that although privacy and resilience are independent in algorithm design, their performance is tightly coupled. Identifying the appropriate noise parameters to achieve a balanced trade-off between them is crucial.
  \par In fact, in addition to the trade-off between privacy and accuracy, there also exists a trade-off between privacy and convergence rate. From a qualitative perspective, increasing the noise magnitude enhances privacy but also slows convergence due to a slower decay of noise. However, we cannot establish a precise trade-off between privacy and convergence rate, as quantifying the convergence rate is infeasible. Analyzing the convergence rate presents two main challenges: First, while the centerpoint lies within the convex hull of the normal agents and is generally near the center, its exact parameters and convex combination bounds remain undetermined. This uncertainty complicates the calculation of agents' movement at each iteration. As a result, most resilient vector consensus methods encounter these challenges and do not provide a quantitative analysis of the convergence rate, as in \cite{park2017fault,abbas2022resilient,yan2022resilient}. Second, the addition of noise increases complexity by introducing randomness into both the direction and magnitude of each agent’s movement, further complicating the analysis of the convergence rate.

\section{Simulations}
\label{sec:sim}
\par In this section, we consider the simulation on the real-world physical system, which is a multi-robot rendezvous \cite{park2017fault}. We perform extensive simulations of a multi-robot system performing PP-ADRC in both $\mathbb{R}^2$ and $\mathbb{R}^3$ to illustrate our theoretical results.
\subsection{$2$-dimensional case}
\par 
In this scenario, we deploy $10$ robots with a time-varying directed network, consisting of $2$ faulty robots and $8$ normal robots. The messages sent by the two faulty agents to each normal agent are randomly selected within the region $[-0.7,-0.3] \times [0.3,0.7]$, causing the normal agents to converge to a point outside the convex hull, specifically at its upper-left corner. Each normal robot's in-neighborhood is selected randomly at each iteration while ensuring the conditions of convergence. The randomly generated initial states are depicted in Fig. \ref{fig:initial}, where blue points represent normal robots' states and the red ones are the states of faulty robots. The green polygon represents the convex hull of normal robots. For parameters, we set $\gamma_i(t)=0.8,\ \lambda=2.0$, and $\upsilon=0.75$.
\begin{figure}[htbp]
\centering
\begin{subfigure}{0.4\linewidth}
		\centering
		\includegraphics[width=1.0\linewidth]{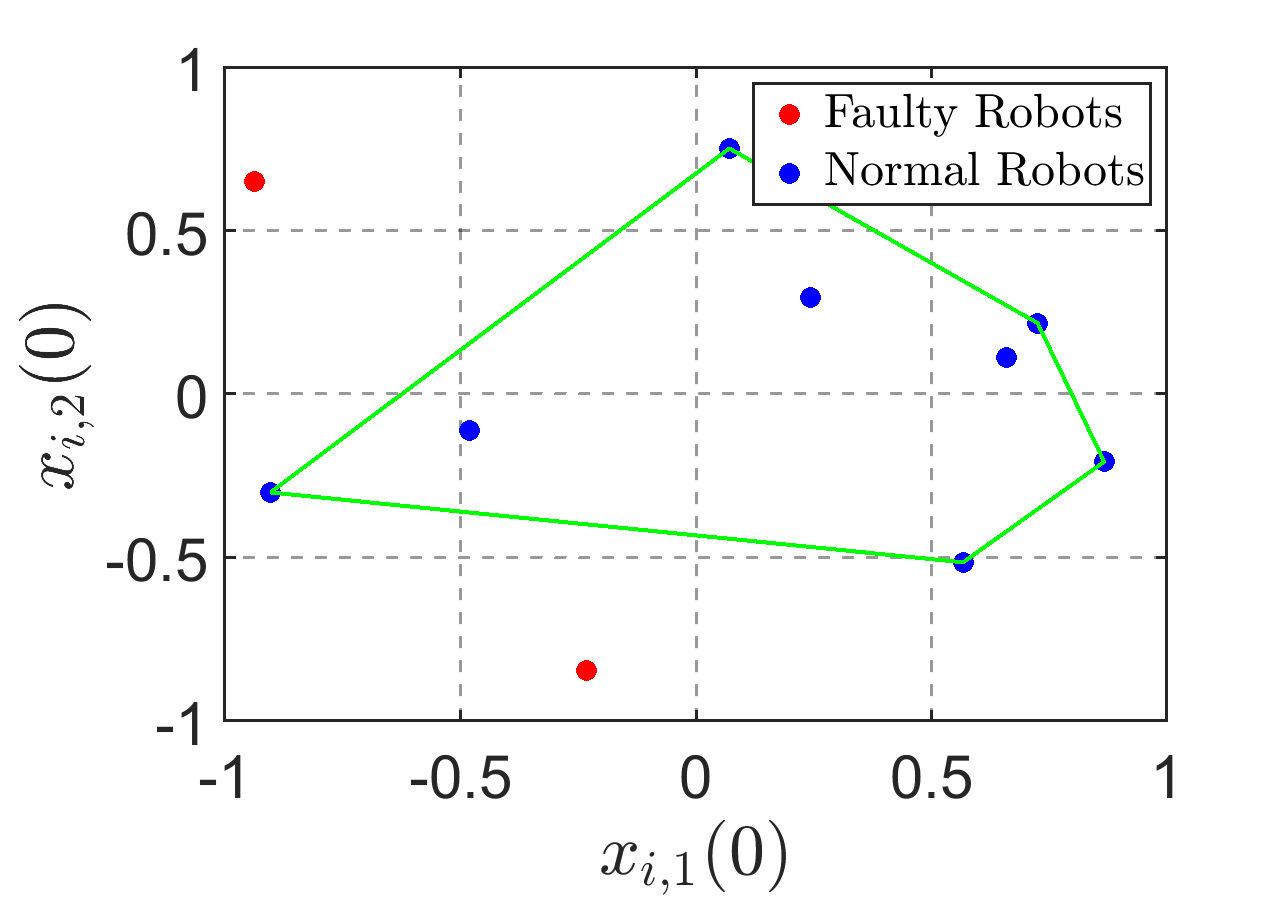}
		\caption{Initial states}
		\label{fig:initial}
	\end{subfigure}
 \begin{subfigure}{0.4\linewidth}
		\centering
		\includegraphics[width=1.0\linewidth]{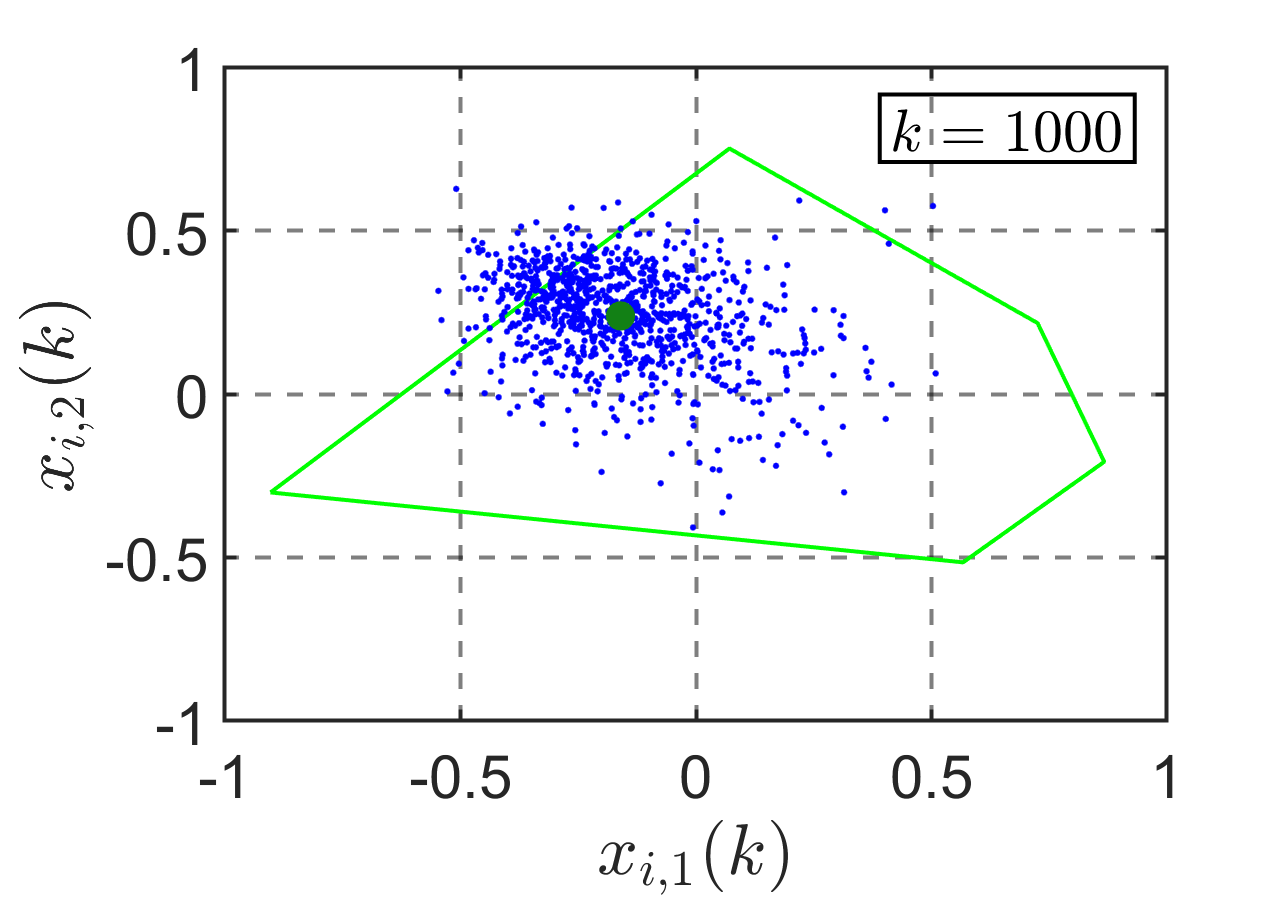}
		\caption{Final value}
		\label{fig:converge_gau}
	\end{subfigure}
\caption{$2$-dimensional case}
\label{fig:expectation}
\vspace{-5pt}
\end{figure}
\par We execute the algorithm $1000$ times, each with $1000$ iterations, and obtain the result in Fig. \ref{fig:converge_gau}, where blue points represent the final values each time, and the green point $(-0.160,0.239)$ is the sample mean of the blue points. We observe that the expectation falls in the convex hull of the initial states of the normal robots.
\par Then, we only add noise to the second dimension to illustrate Theorem \ref{th:distance}. We repeat the process $1000$ times, with $1000$ iterations for three different cases of noise parameters (case 1): $\lambda=2.0,\upsilon=0.75$, case 2): $\lambda=2.5,\upsilon=0.75$, and case 3): $\lambda=2.5,\upsilon=0.85$) and obtain the convergence results shown in Fig. \ref{fig:pr}. The yellow convex hull represents $\mathcal{C}$, with $r_2=0.3$. The variances of the second dimension in order are $0.066,0.117,$ and $0.219$, respectively. In Fig. \ref{fig:pr_compare}, the vertical axis represents the probability of the final value falling within the convex hull and the horizontal axis represents $r_2$. There are two sets of lines, i.e., one obtained through simulation probabilities $\mathrm{P}^{\rm Si}$ and the other $\mathrm{P}^{\rm Th}$ derived theoretically from Theorem \ref{th:distance}. It is observed that in all three cases, the simulation results $\mathrm{P}^{\rm Si}$ are greater than the theoretical ones $\mathrm{P}^{\rm Th}$. When the distance satisfies $r_2=0.3$, we have $\delta(\mathcal{A},\mathcal{C})=0.919$ and $\mu(\mathcal{A})=1.77$, which illustrates our theoretical result  $\delta(\mathcal{A},\mathcal{C})\leq \sqrt{\frac{d}{2}}\mu(\mathcal{A})+r_2$. 
\begin{figure}[htbp]
	\centering
	\begin{subfigure}{0.4\linewidth}
		\centering
		\includegraphics[width=1.0\linewidth]{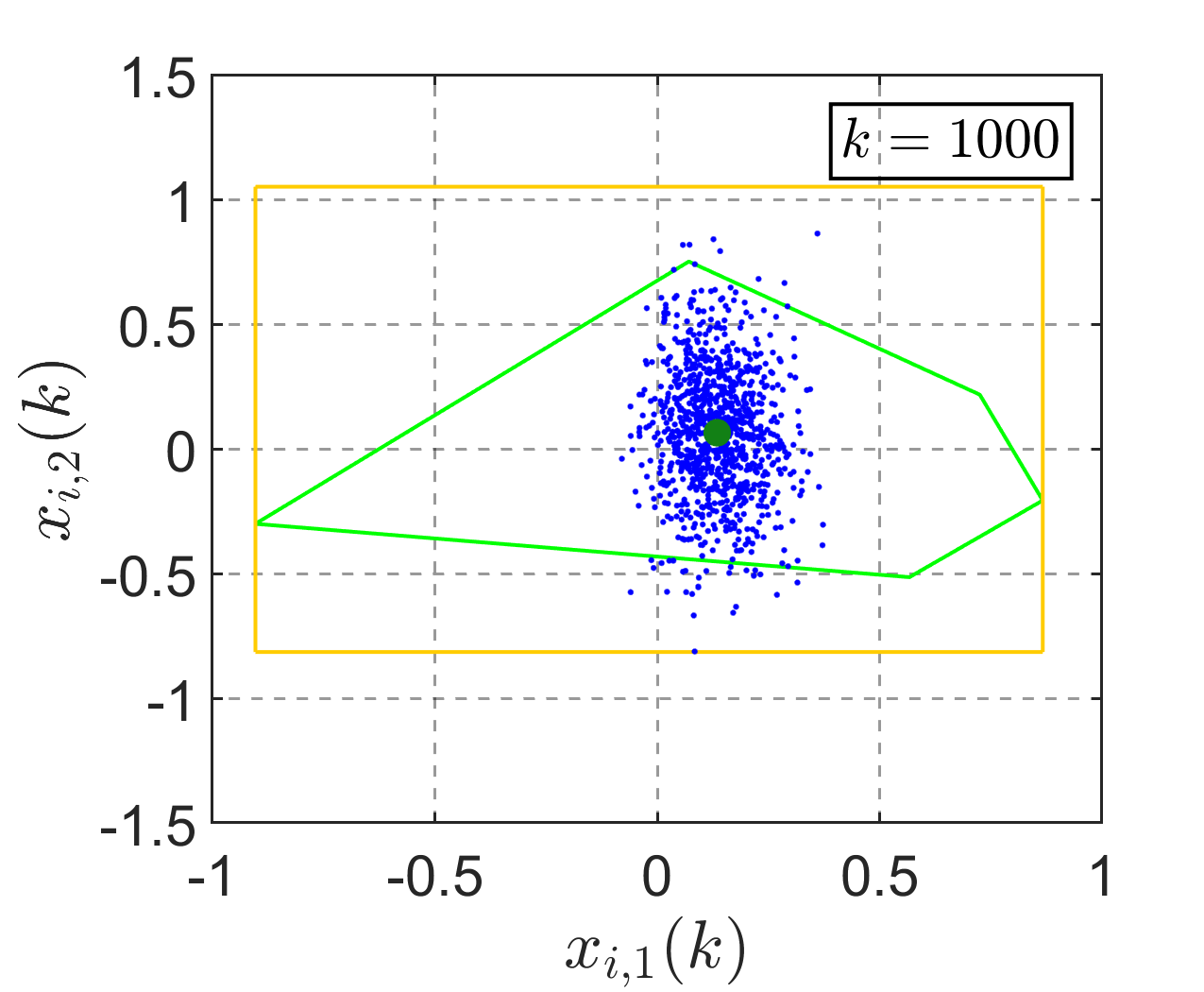}
		\caption{$\lambda=2.0,\upsilon=0.75$}
		\label{fig:pr1}
	\end{subfigure}
	\begin{subfigure}{0.4\linewidth}
		\centering
		\includegraphics[width=1.0\linewidth]{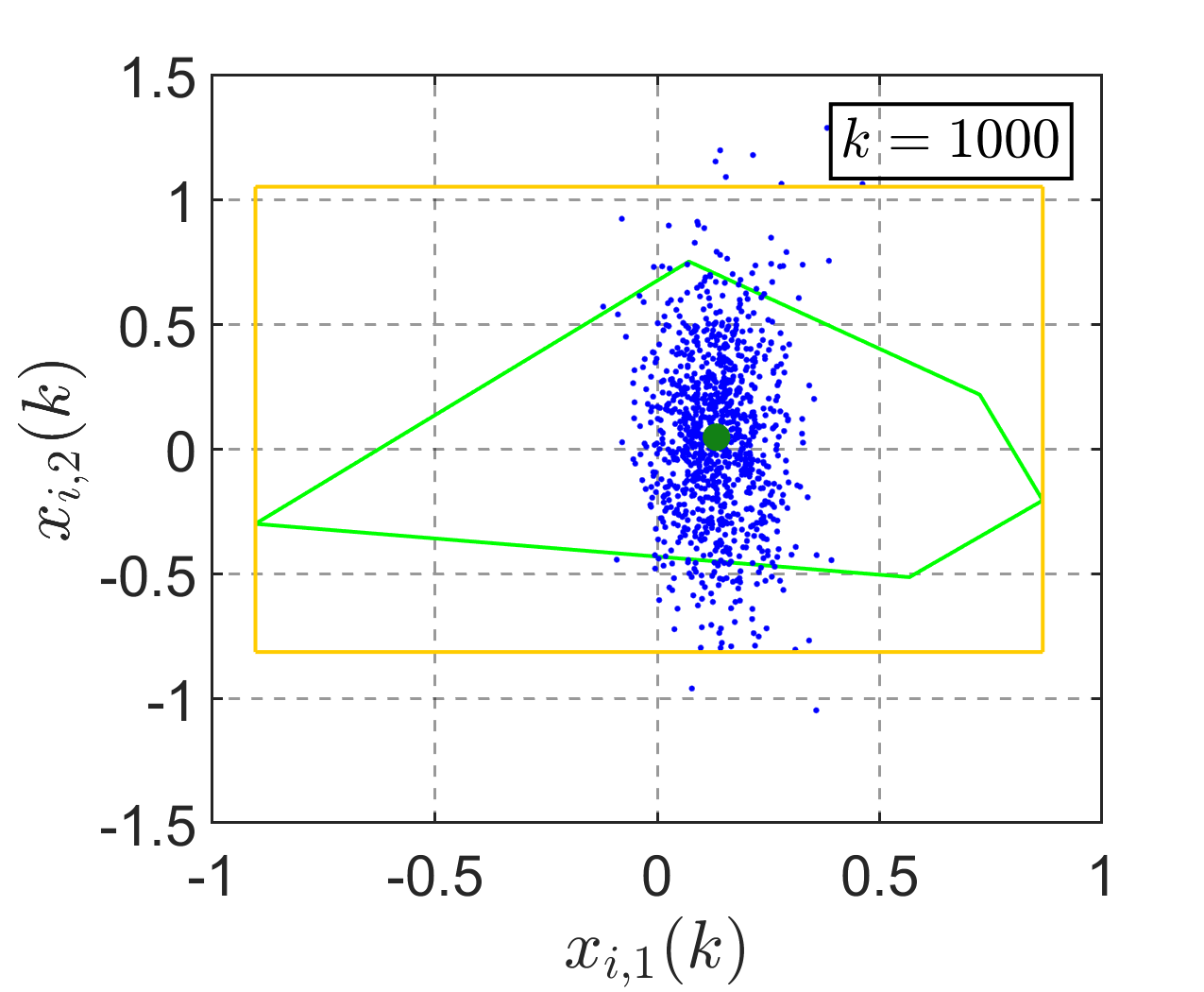}
		\caption{$\lambda=2.5,\upsilon=0.75$}
		\label{fig:pr2}
	\end{subfigure}
	\begin{subfigure}{0.4\linewidth}
		\centering
		\includegraphics[width=1.0\linewidth]{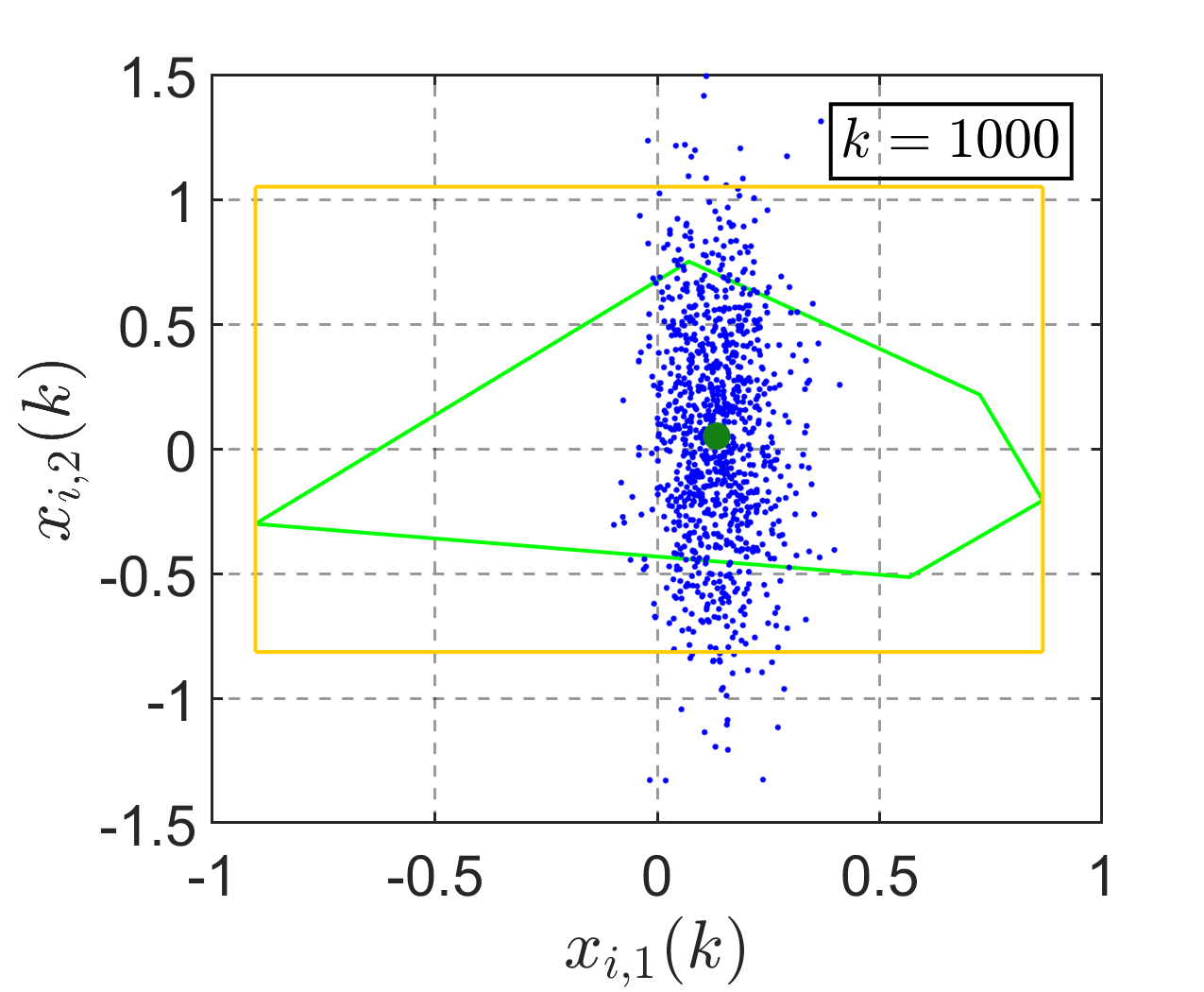}
		\caption{$\lambda=2.5,\upsilon=0.85$}
		\label{fig:pr3}
	\end{subfigure}
	\begin{subfigure}{0.4\linewidth}
		\centering
		\includegraphics[width=1.0\linewidth]{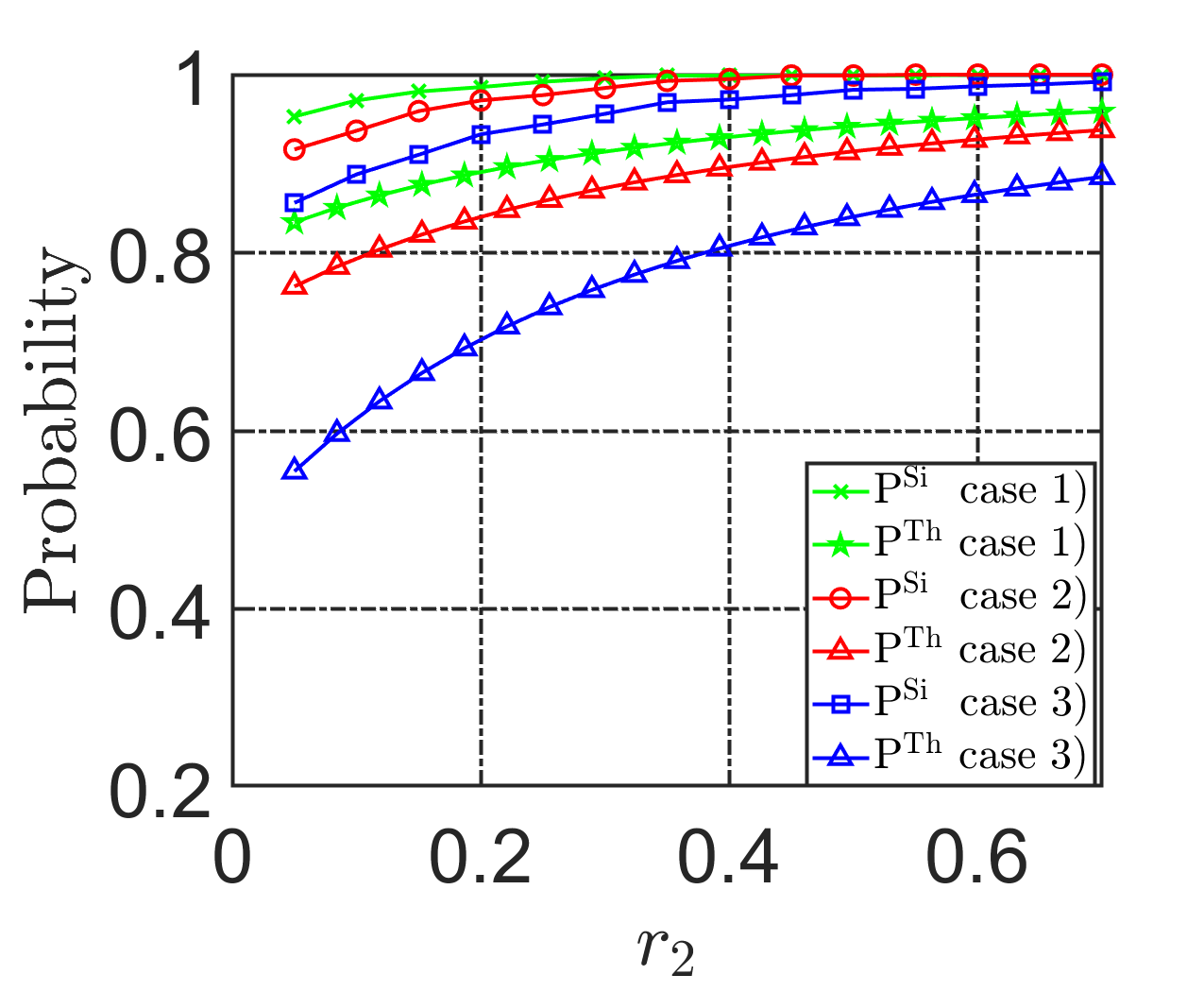}
		\caption{$\mathrm{P}\{ \xi \in \mathcal{C}\}$}
		\label{fig:pr_compare}
	\end{subfigure}
\vspace{-5pt}
\caption{$2$-dimensional Hausdorff Distance Results}
\label{fig:pr}
\end{figure}
\par At last, we illustrate the final value distribution using Mahalanobis distance. We also repeat the process $1000$ times, with $1000$ iterations for three different cases of noise parameters (case 1): $\lambda=2.0,\upsilon=0.75$, case 2): $\lambda=2.5,\upsilon=0.75$, and case 3): $\lambda=2.0,\upsilon=0.65$), and obtain the convergence results shown in Fig. \ref{fig:madis}. Considering that the set of points equidistant in terms of Mahalanobis distance forms an ellipse in $\mathbb{R}^2$, we have drawn $4$ red ellipses under different parameters of $\chi$ for each figure, which are $2,3,4,5$ in order. There are also two sets of lines, i.e., one obtained through simulation probabilities $\mathrm{P}^{\rm Si}$ and the other $\mathrm{P}^{\rm Th}$ derived theoretically from Theorem \ref{th:madis}. It is observed that in all three cases, the simulation results $\mathrm{P}^{\rm Si}$ are greater than the theoretical one $\mathrm{P}^{\rm Th}$.
\begin{figure}[htbp]
	\centering
	\begin{subfigure}{0.4\linewidth}
		\centering
		\includegraphics[width=1.0\linewidth]{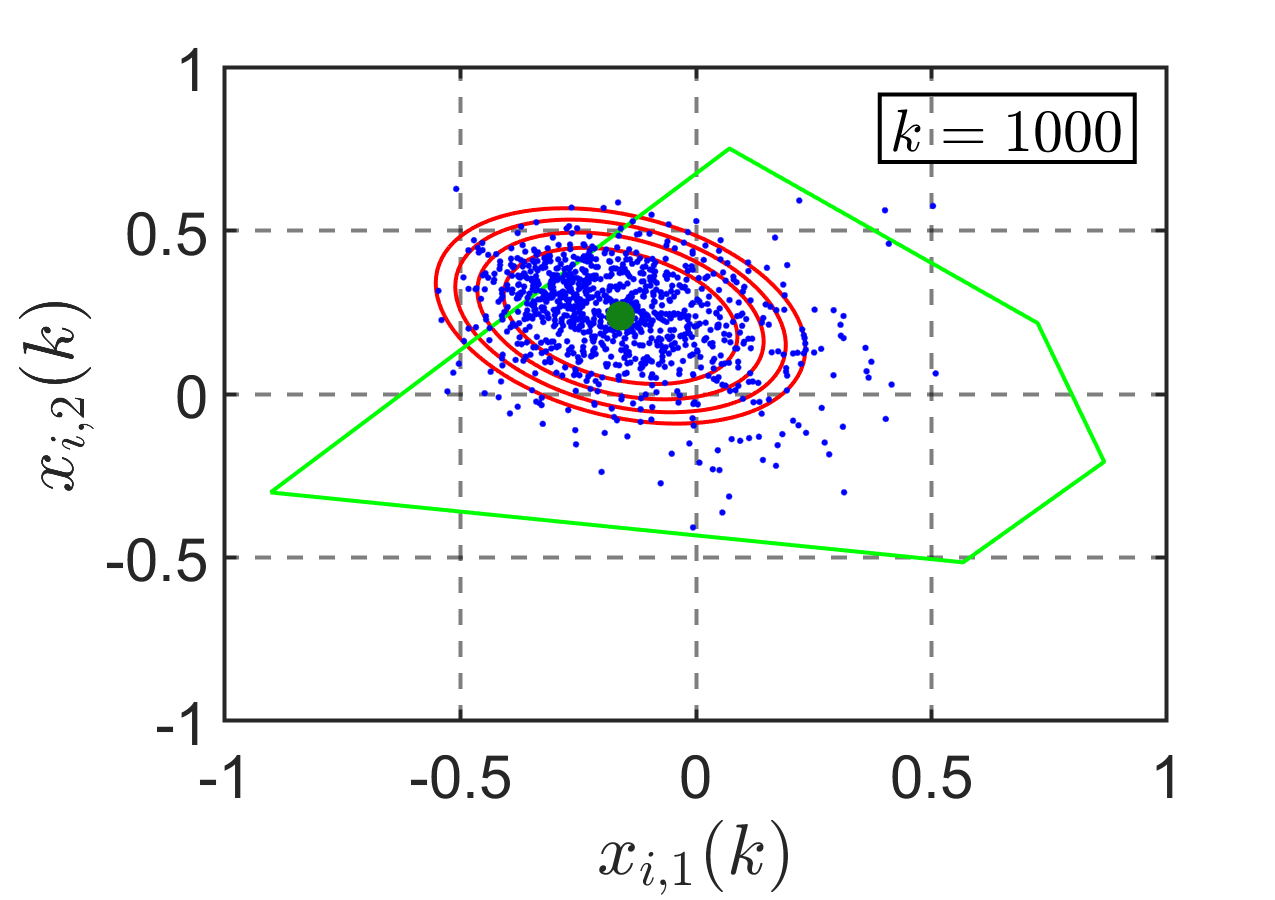}
		\caption{$\lambda=2.0,\upsilon=0.75$}
		\label{fig:pr1_gau}
	\end{subfigure}
	\begin{subfigure}{0.4\linewidth}
		\centering
		\includegraphics[width=1.0\linewidth]{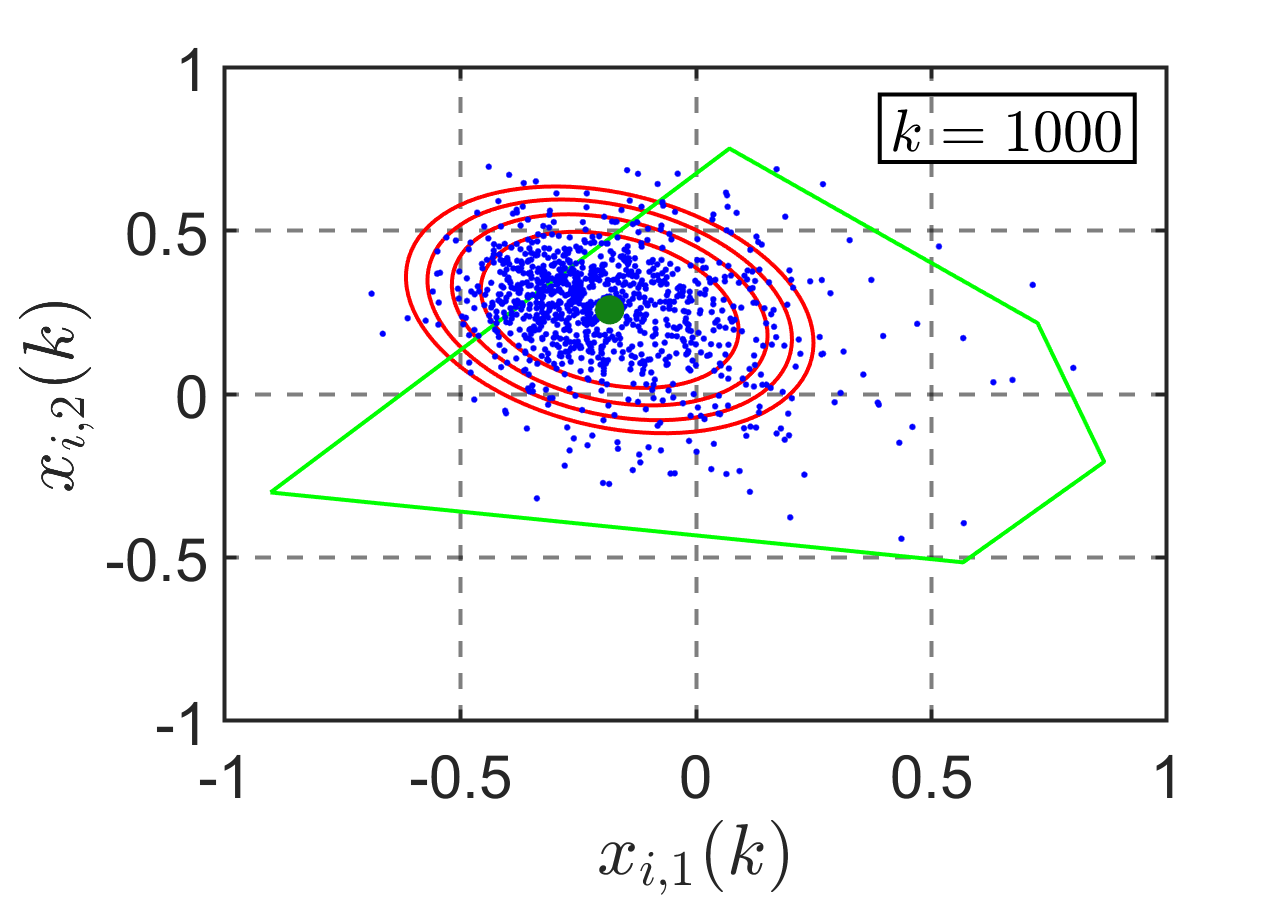}
		\caption{$\lambda=2.5,\upsilon=0.75$}
		\label{fig:pr2_gau}
	\end{subfigure}
	\begin{subfigure}{0.4\linewidth}
		\centering
		\includegraphics[width=1.0\linewidth]{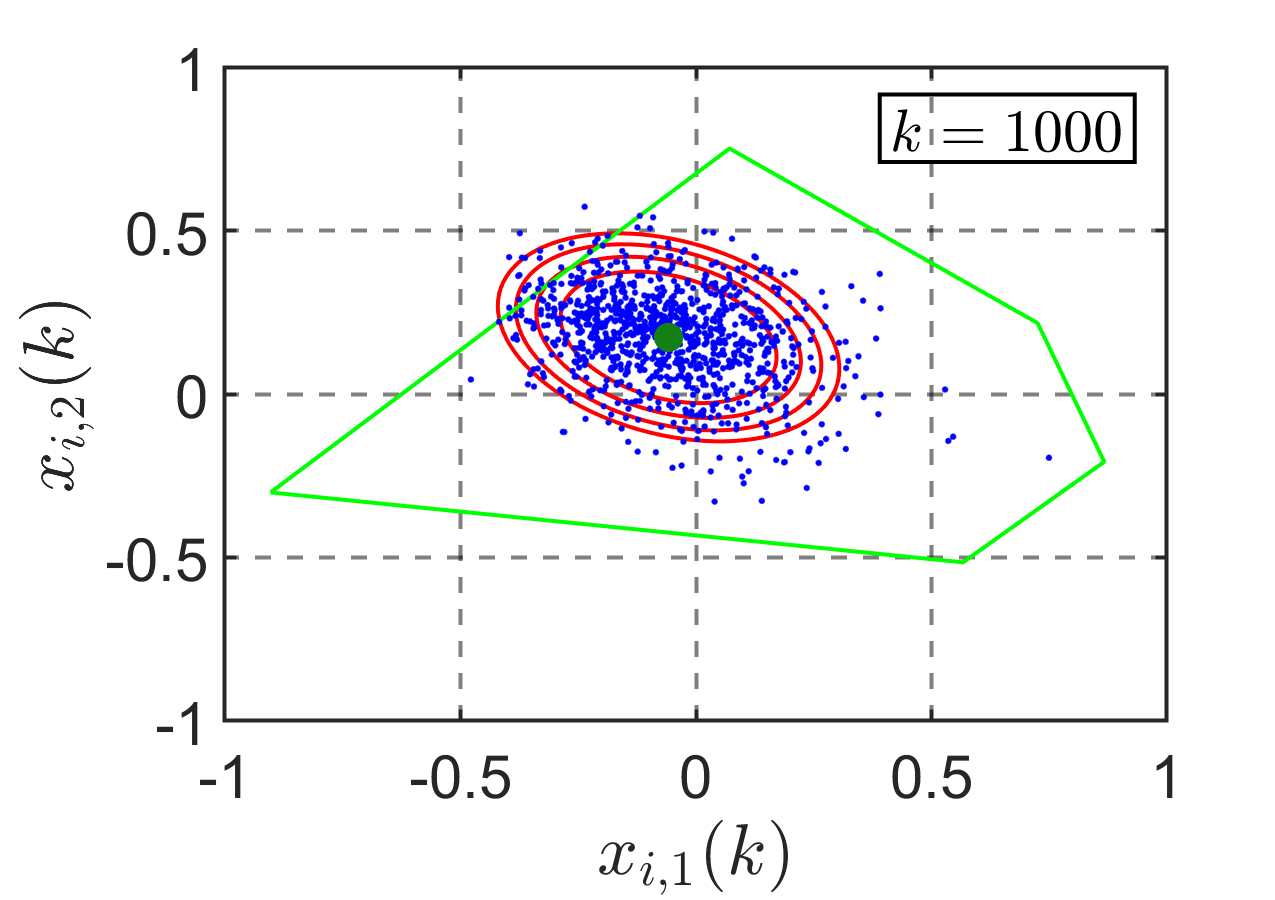}
		\caption{$\lambda=2.0,\upsilon=0.65$}
		\label{fig:pr3_gau}
	\end{subfigure}
	\begin{subfigure}{0.4\linewidth}
		\centering
		\includegraphics[width=1.0\linewidth]{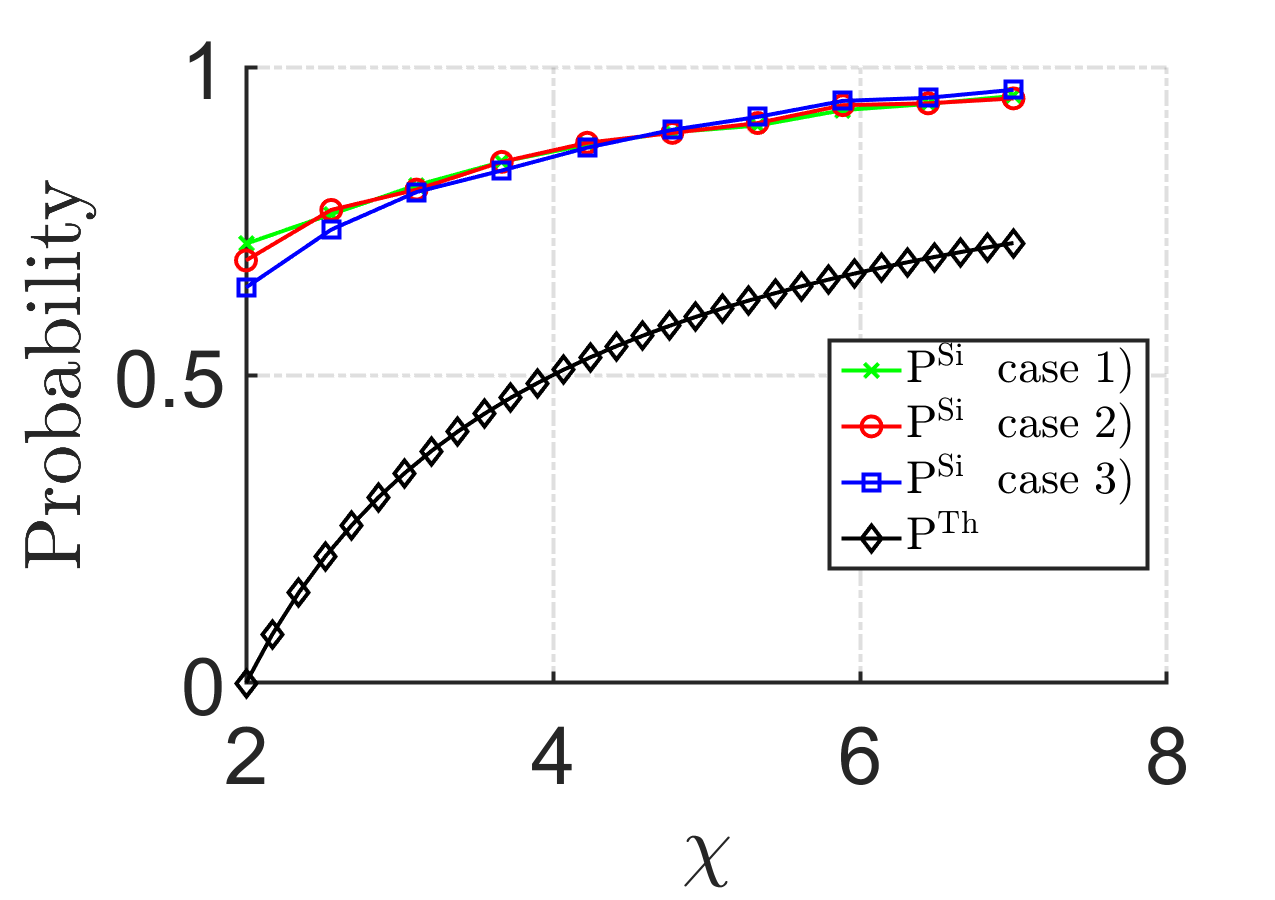}
		\caption{$\mathrm{P}\{[D_M(\xi)]^2\leq \chi\}$}
		\label{fig:pr_compare_gau}
	\end{subfigure}
\vspace{-2pt}
\caption{$2$-dimensional Mahalabinos Distance Results}
\label{fig:madis}
\end{figure}
\subsection{$3$-dimensional case}
\par For the $3$-dimensional case, according to \cite{har2020improved}, although we can find a centerpoint with depth of $\frac{n}{d+1}=\frac{n}{4}$, but the running time is $O(n^{2})$, make the algorithm inefficient. Therefore, Har-Peled \emph{et al.} proposed a compromised algorithm that achieves a depth of $\frac{n}{6}$ for the centerpoint with a runtime of $O(n\operatorname{log}n)$, which is used in our simulations here. We consider multi-robot rendezvous, where there are $12$ robots with a time-varying directed network, consisting of $2$ faulty robots and $10$ normal robots. Each normal robot interacts with one randomly chosen faulty robot and other normal robots, where sufficient conditions for convergence are always guaranteed. The initial states are shown in Fig. \ref{fig:3dinitial}, where the gray polyhedron represents the convex hull formed by the normal robots. We set  $\gamma_i(t) = 0.8$, $\lambda = 2.0$, and $\upsilon = 0.75$, consistent with the $2$-dimensional case. We run the algorithm $1000$ times, each with $1000$ iterations, and obtain the result in Fig. \ref{fig:convergence_3d}. The blue points represent the final values each time, and the green point $(0.051,0.250,-0.312)$ is the sample mean of the final values. We can observe that the expectation falls in the convex hull of the initial states of the normal robots.
\begin{figure}[htbp]
	\centering
	\begin{subfigure}{0.4\linewidth}
		\centering
		\includegraphics[width=1.0\linewidth]{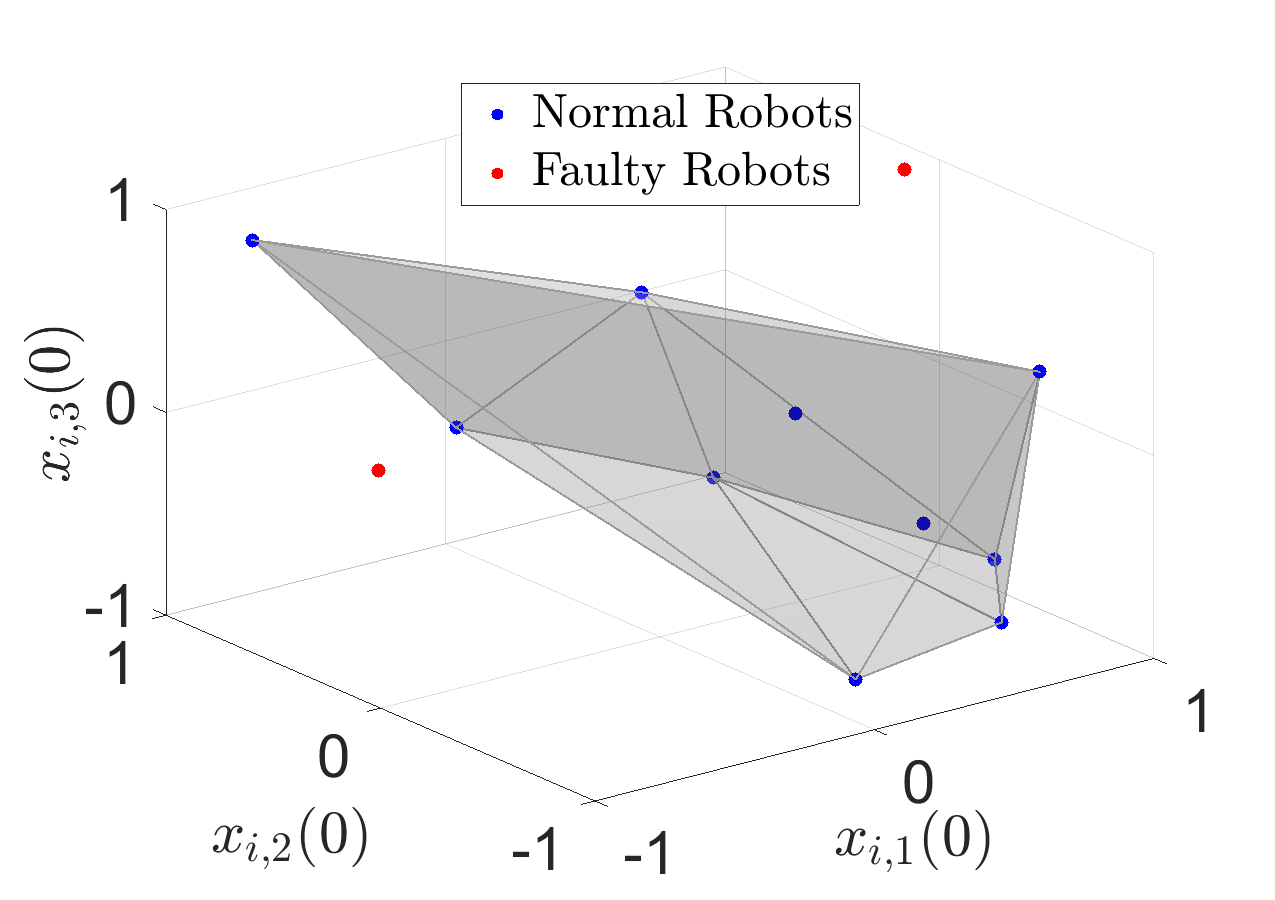}
		\caption{Initial states}
            \label{fig:3dinitial}
	\end{subfigure}
	\begin{subfigure}{0.4\linewidth}
		\centering
		\includegraphics[width=1.0\linewidth]{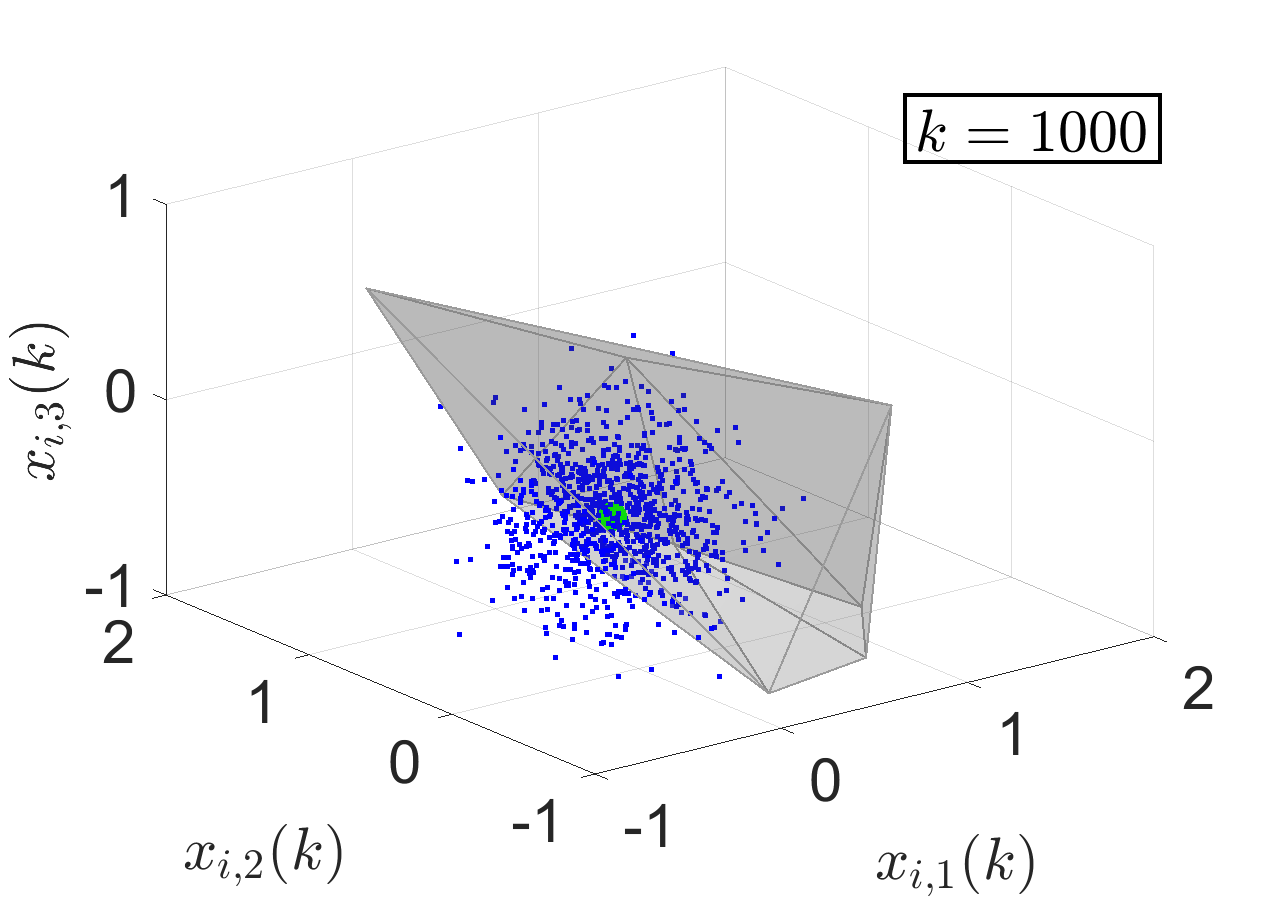}
		\caption{Final value}
            \label{fig:convergence_3d}
	\end{subfigure}
\caption{$3$-dimensional case}
\label{fig:3dmadis}
\vspace{-5pt}
\end{figure}
\par Then, similar to the $2$-dimensional case, we only add noise to the first dimension to illustrate the convex hull change using Hausdorff distance. There are three different cases of noise parameters (case 1): $\lambda=3.0,\upsilon=0.80$, case 2): $\lambda=3.0,\upsilon=0.85$, and case 3): $\lambda=3.5,\upsilon=0.85$). The results shown in Fig. \ref{fig:three_hausdorff} illustrate our results in Theorem \ref{th:distance}. The yellow convex hull represents $\mathcal{C}$, with $r_1=0.3$. The variances of the first dimension in order are $0.307$, $0.520$, and $0.737$, respectively.  When the distance satisifies $r_1=0.3$, we have $\delta(\mathcal{A},\mathcal{C})=1.255$ and $\mu(\mathcal{A})=2.75$, which illustrates our theoretical result  $\delta(\mathcal{A},\mathcal{C})\leq \sqrt{\frac{d}{2}}\mu(\mathcal{A})+r_1$.
\begin{figure}[htbp]
	\centering
	\begin{subfigure}{0.4\linewidth}
		\centering
		\includegraphics[width=1.0\linewidth]{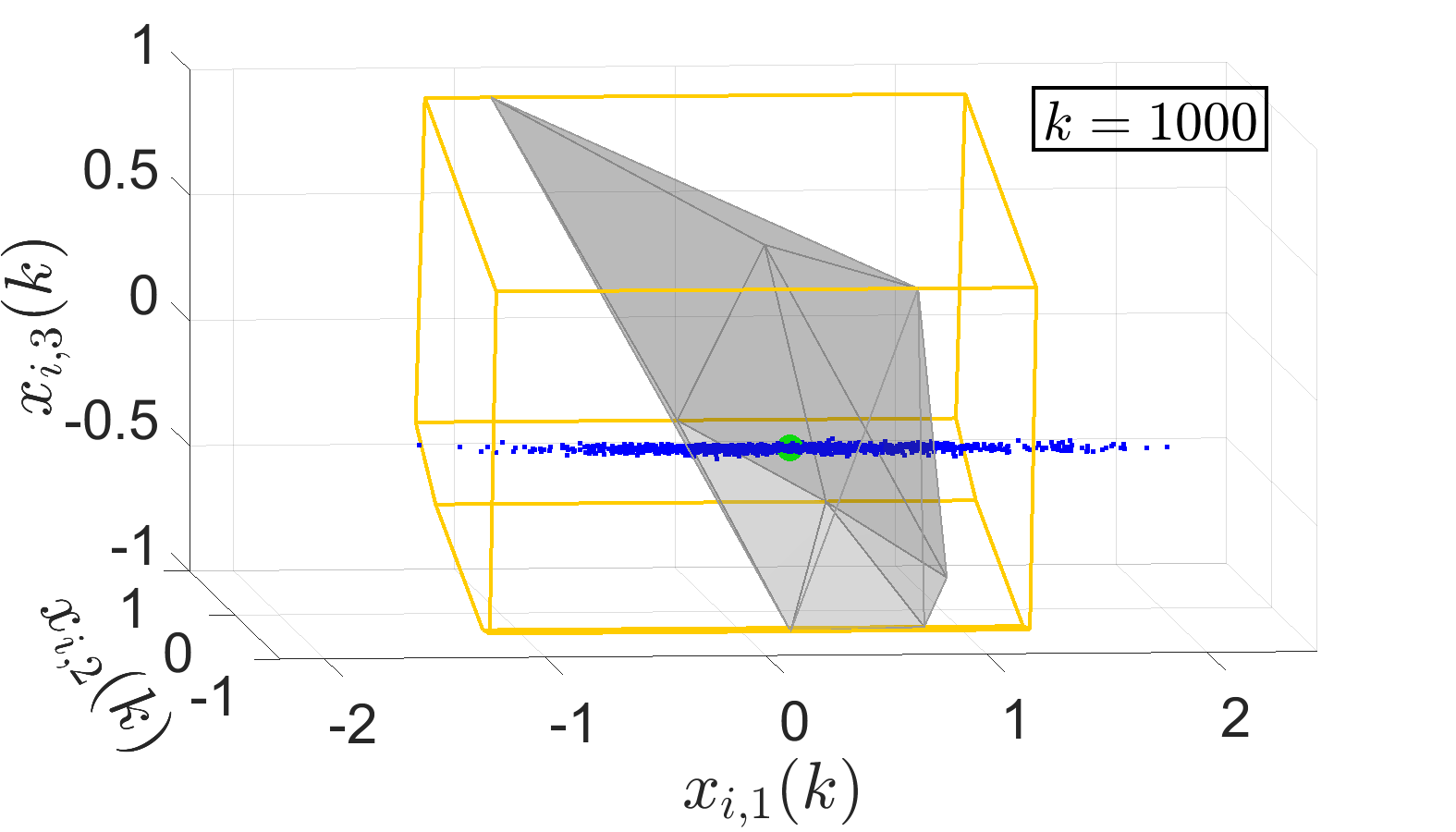}
		\caption{$\lambda=3.0,\upsilon=0.80$}
		\label{fig:hausdorff_3d1}
	\end{subfigure}
	\begin{subfigure}{0.4\linewidth}
		\centering
		\includegraphics[width=1.0\linewidth]{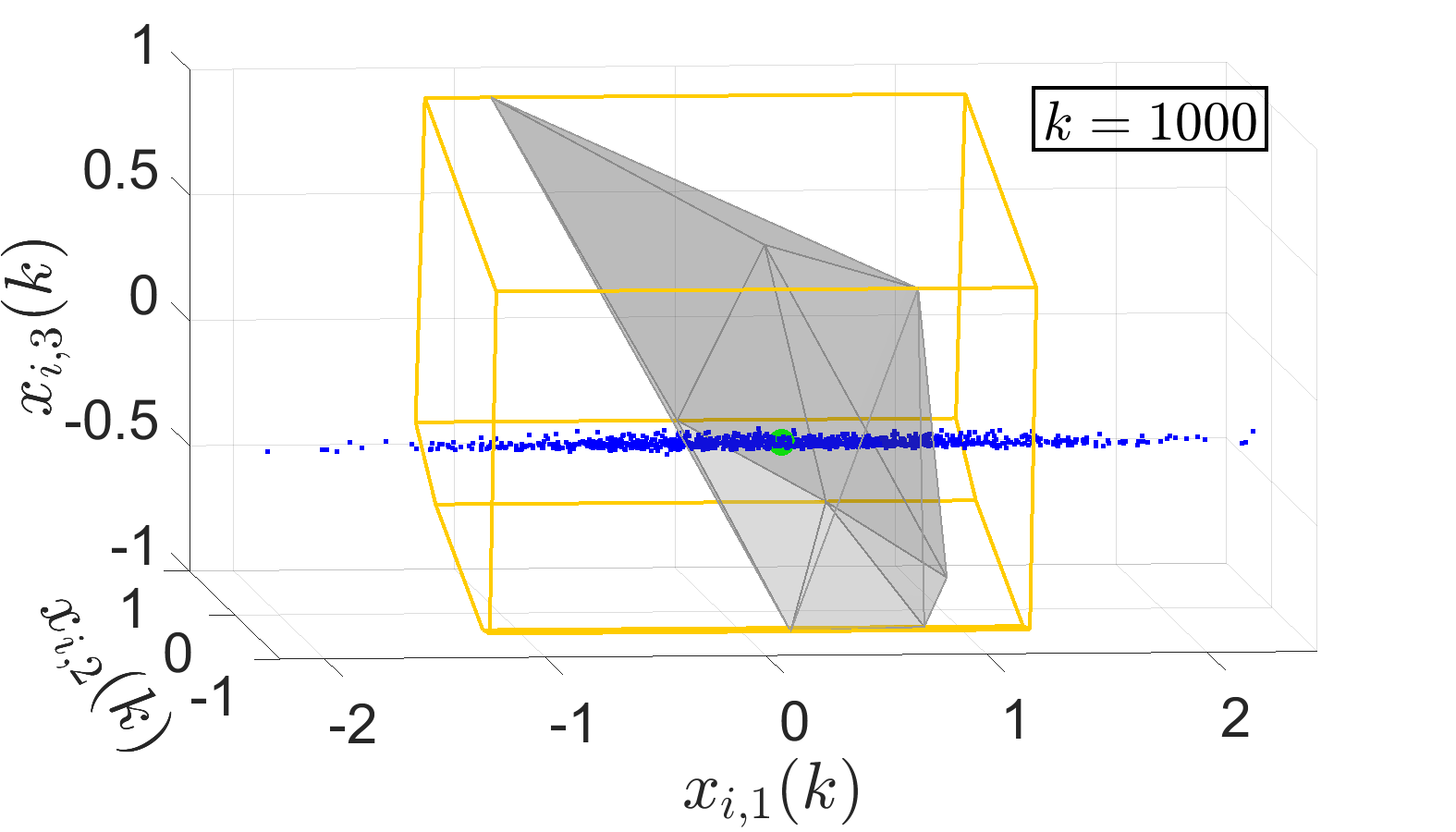}
		\caption{$\lambda=3.0,\upsilon=0.85$}
		\label{fig:hausdorff_3d2}
	\end{subfigure}
	\begin{subfigure}{0.4\linewidth}
		\centering
		\includegraphics[width=1.0\linewidth]{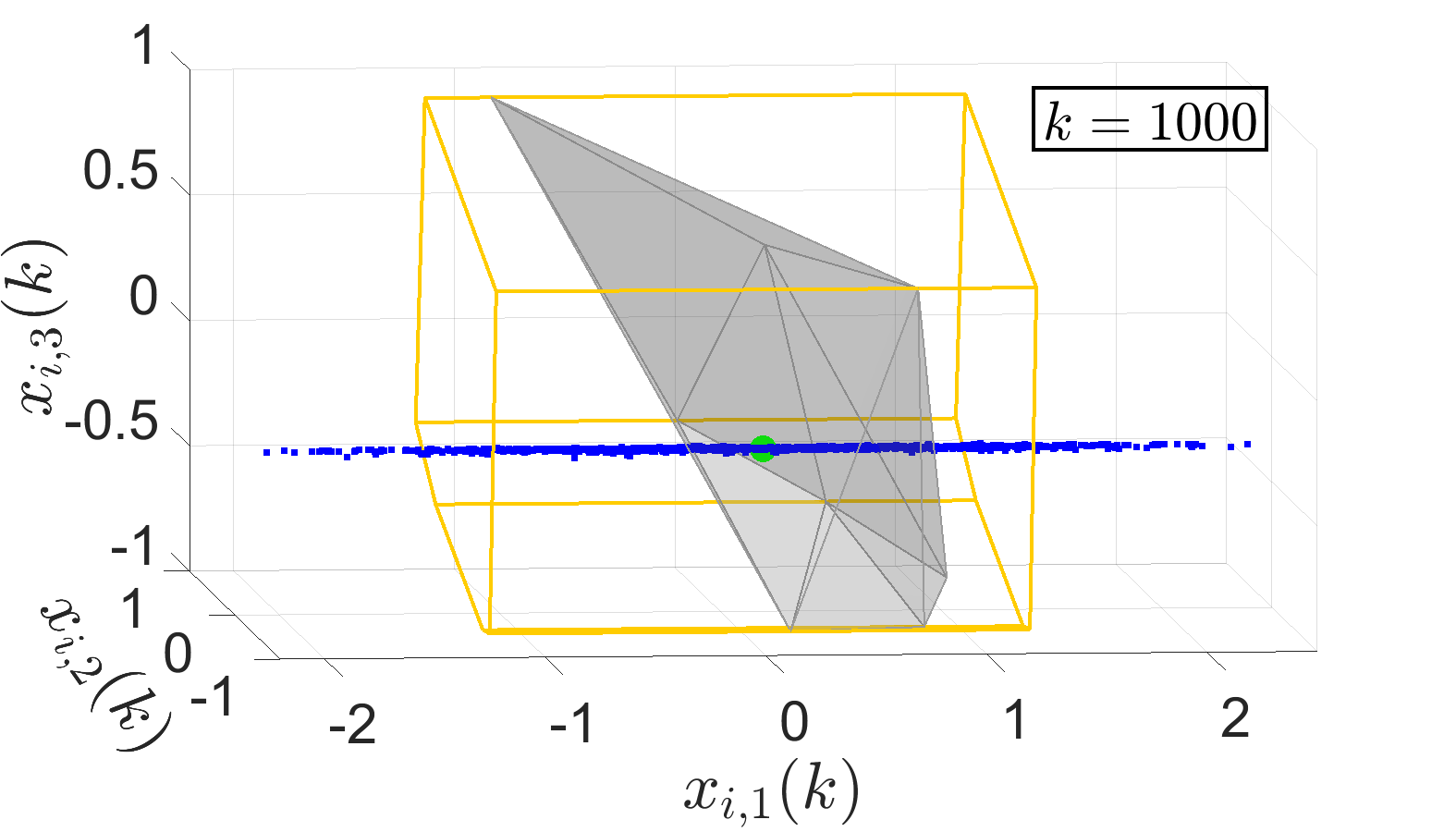}
		\caption{$\lambda=3.5,\upsilon=0.85$}
		\label{fig:hausdorff_3d3}
	\end{subfigure}
	\begin{subfigure}{0.4\linewidth}
		\centering
		\includegraphics[width=1.0\linewidth]{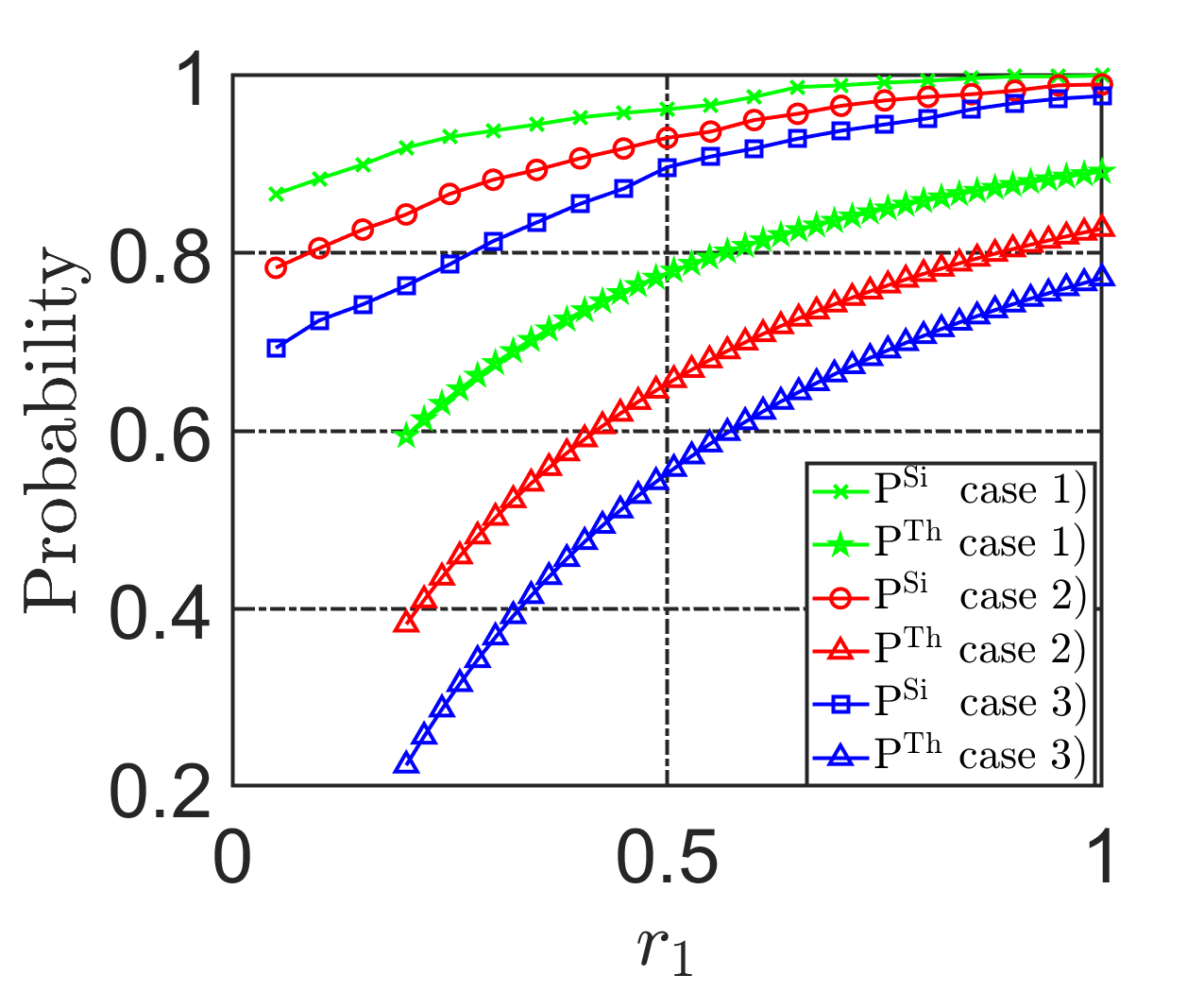}
		\caption{$\mathrm{P}\{\xi\}\in \mathcal{C}\}$}
		\label{fig:hausdorff_3d_pr}
	\end{subfigure}
\vspace{-5pt}
\caption{$3$-dimensional Hausdorff Distance Results}
\label{fig:three_hausdorff}
\end{figure}
\par Finally, we conduct the simulation to illustrate the final value distribution using
Mahalanobis distance in the $3$-dimensional case shown in Fig. \ref{fig:three_mahalabinos}. The noise parameters are identical to those in the $2$-dimensional case. For the gray hyperspheroids shown in the figure, we choose $\chi=3$. From Fig. \ref{fig:mahalanobis_3d_pr}, we observe that $\mathrm{P}^{\mathrm{Si}}$ are greater than the theoretical one $\mathrm{P}^{\mathrm{Th}}$, demonstrating out theoretical results in Theorem \ref{th:madis}.

\begin{figure}[htbp]
	\centering
	\begin{subfigure}{0.4\linewidth}
		\centering
		\includegraphics[width=1.0\linewidth]{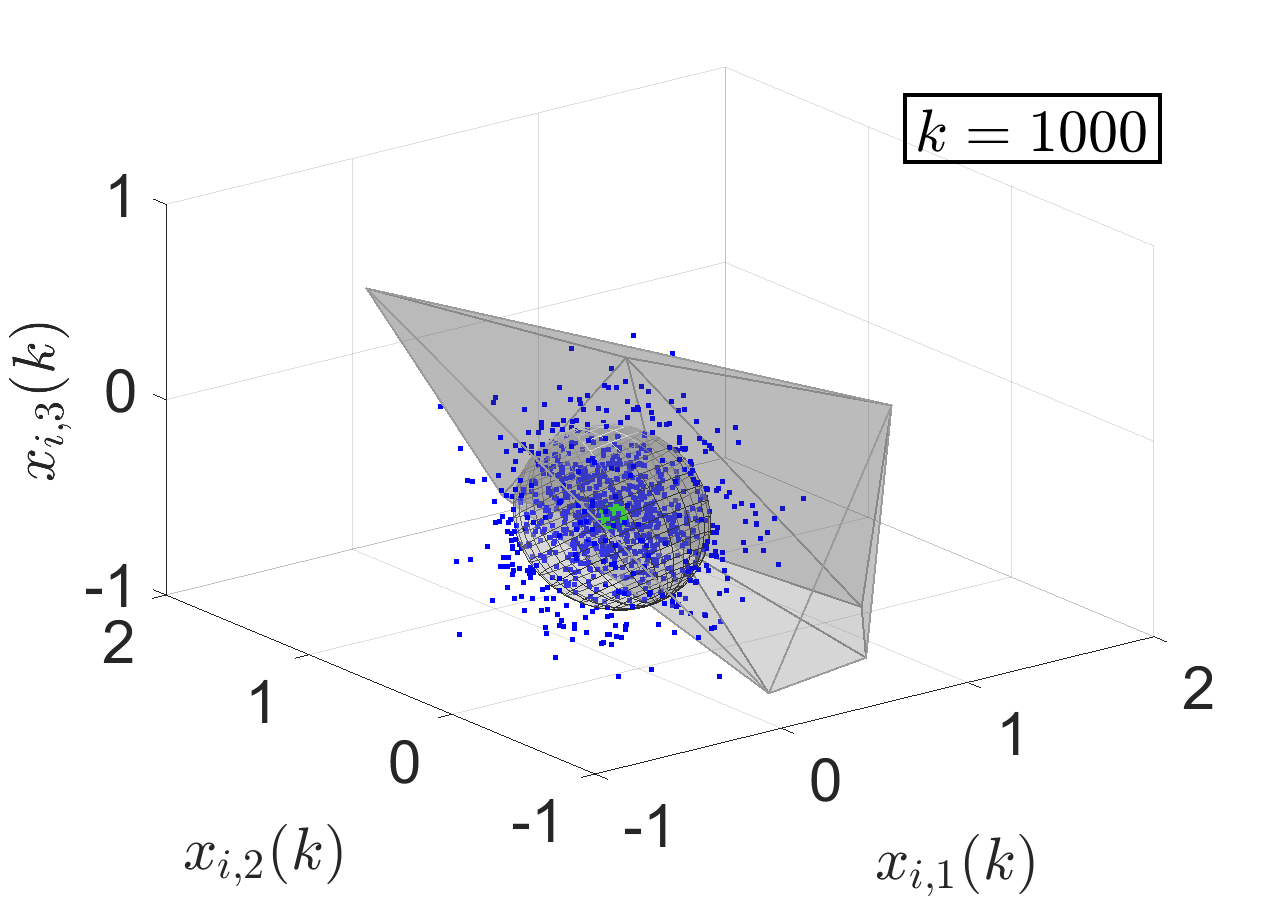}
		\caption{$\lambda=2.0,\upsilon=0.75$}
		\label{fig:mahalanobis_3d1}
	\end{subfigure}
	\begin{subfigure}{0.4\linewidth}
		\centering
		\includegraphics[width=1.0\linewidth]{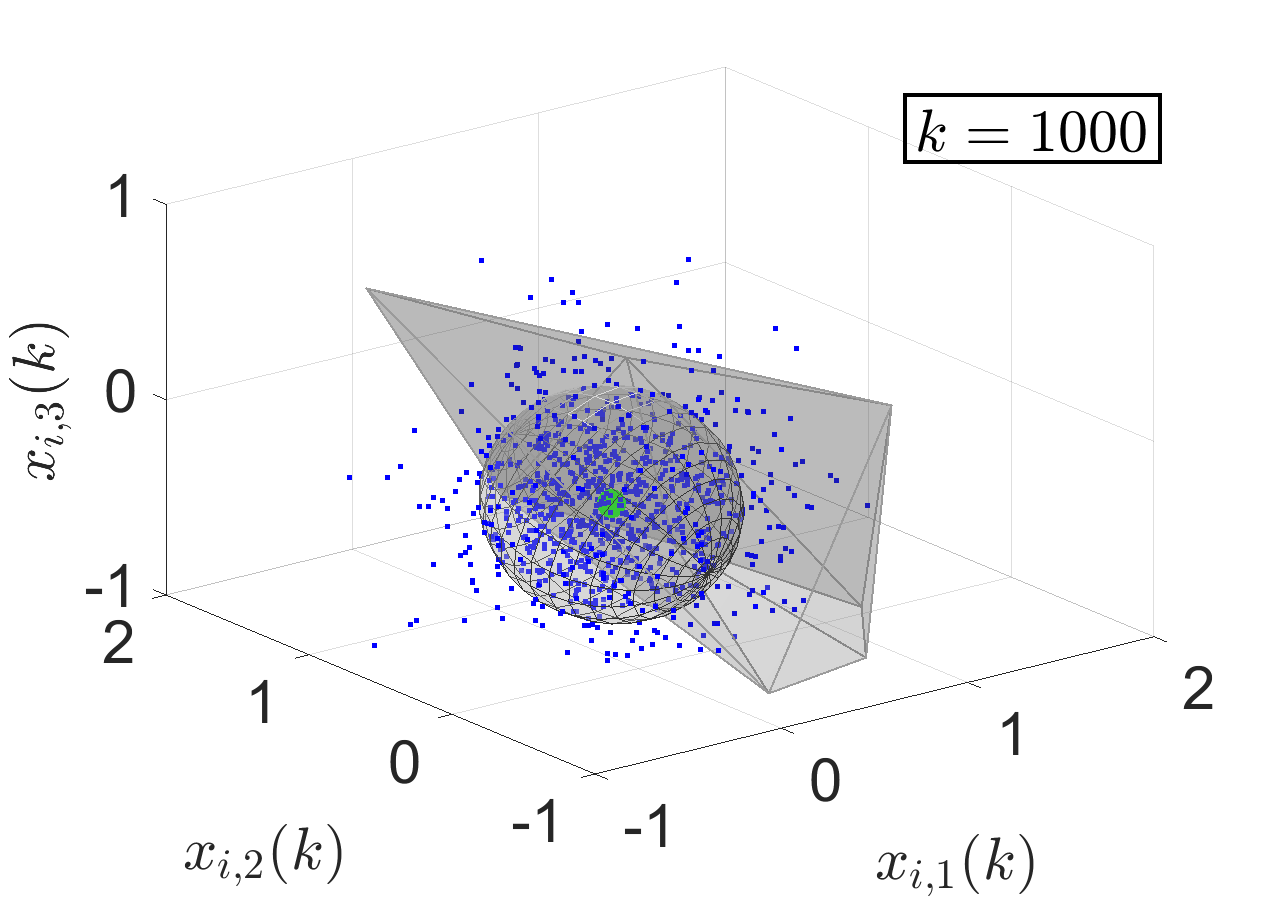}
		\caption{$\lambda=2.5,\upsilon=0.75$}
		\label{fig:mahalanobis_3d2}
	\end{subfigure}
	\begin{subfigure}{0.4\linewidth}
		\centering
		\includegraphics[width=1.0\linewidth]{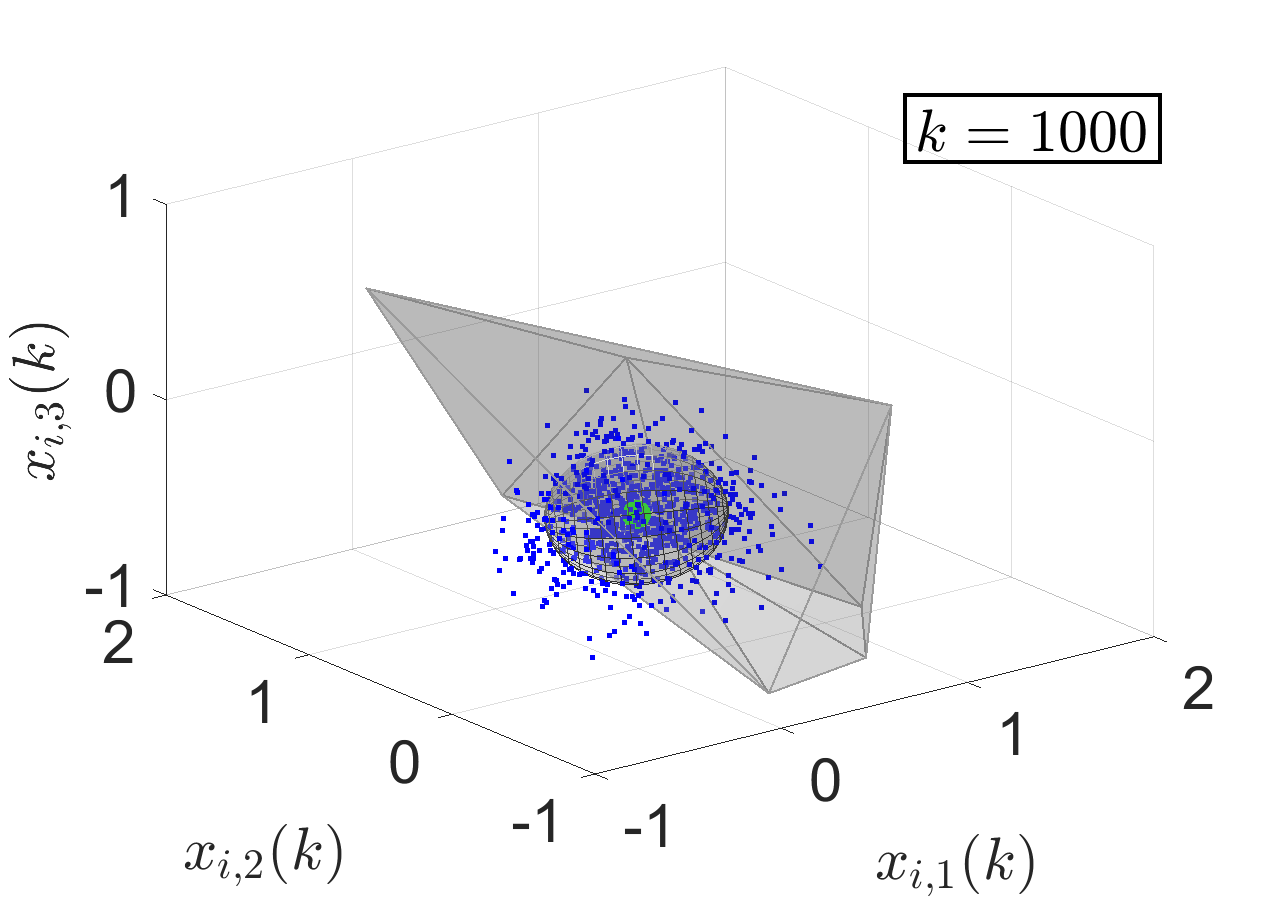}
		\caption{$\lambda=2.0,\upsilon=0.65$}
		\label{fig:mahalanobis_3d3}
	\end{subfigure}
	\begin{subfigure}{0.4\linewidth}
		\centering
		\includegraphics[width=1.0\linewidth]{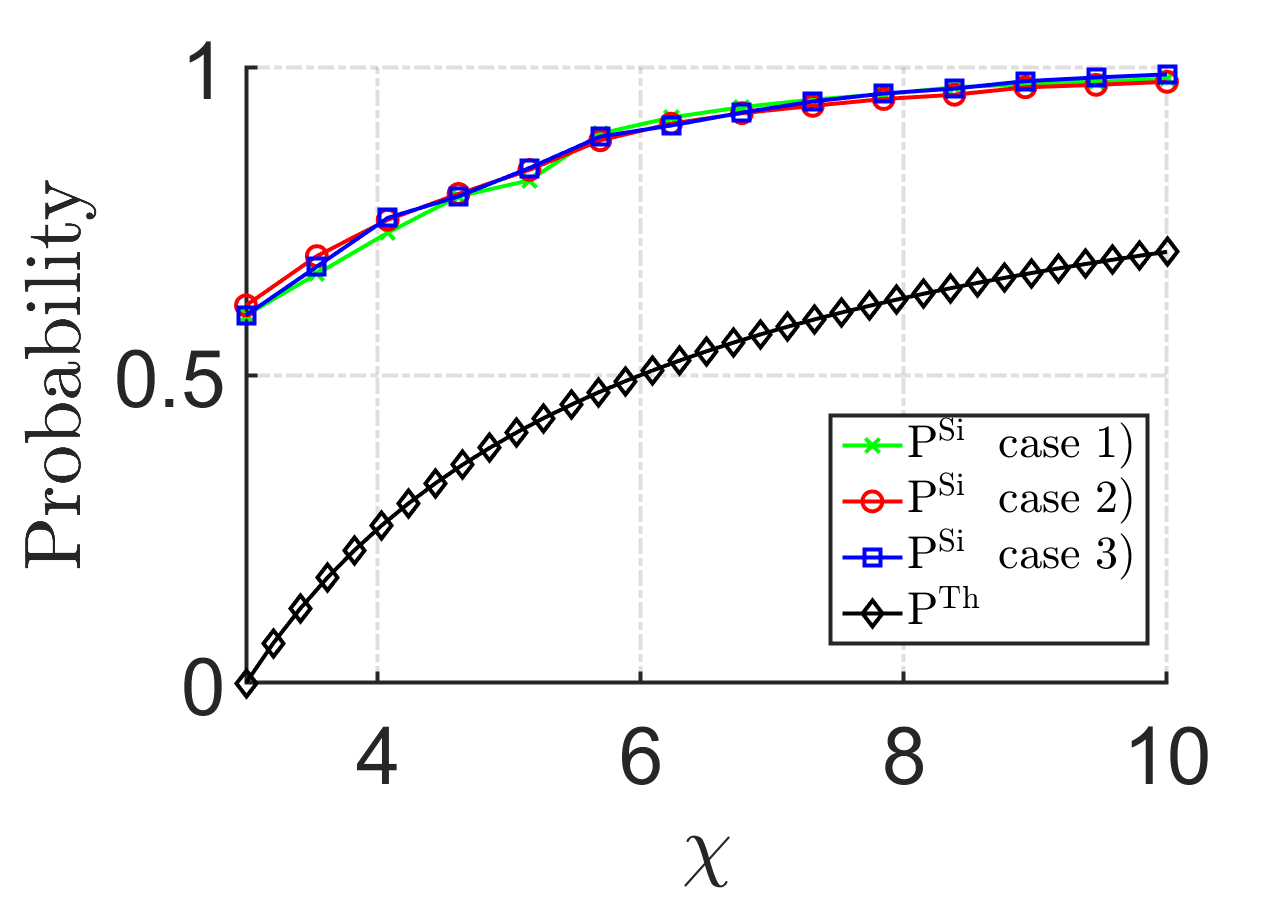}
		\caption{$\mathrm{P}\{[D_M(\xi)]^2\leq \chi\}$}
		\label{fig:mahalanobis_3d_pr}
	\end{subfigure}
\vspace{-1pt}
\caption{$3$-dimensional Mahalabinos Distance Results}
\label{fig:three_mahalabinos}
\end{figure}

\subsection{Privacy preserving perfomance}
\par We consider a system with five normal robots and one faulty robot in a complete graph. The faulty robot sends arbitrary, different messages to each normal robot, randomly chosen from $[-0.7,-0.4] \times [0.4, 0.7]$. Each normal robot exchanges state information with all others. The parameters are set as $\lambda = 2.0$, $\upsilon = 0.75$, and $\gamma_l = 0.4$. The algorithm is executed $1000$ times, each with $1000$ iterations, for two initial states shown in Fig. \ref{fig:initial_states_compare}. The initial state distance is $\mathrm{dist}(\overline{x}_0, \overline{x}_0') = 0.2236$. Normal robots in group $1$ start at $\overline{x}_0$, while those in group $2$ start at $\overline{x}_0'$. The results are presented in Fig. \ref{fig:final_states_compare}. Blue circles indicate the final values of group $1$, with the green circle as their sample mean. Blue crosses denote the final values of group $2$, and the green triangle represents their sample mean.

\begin{figure}[H]
  \begin{subfigure}{0.48\linewidth}
    \centerline{\includegraphics[width=0.9\linewidth]{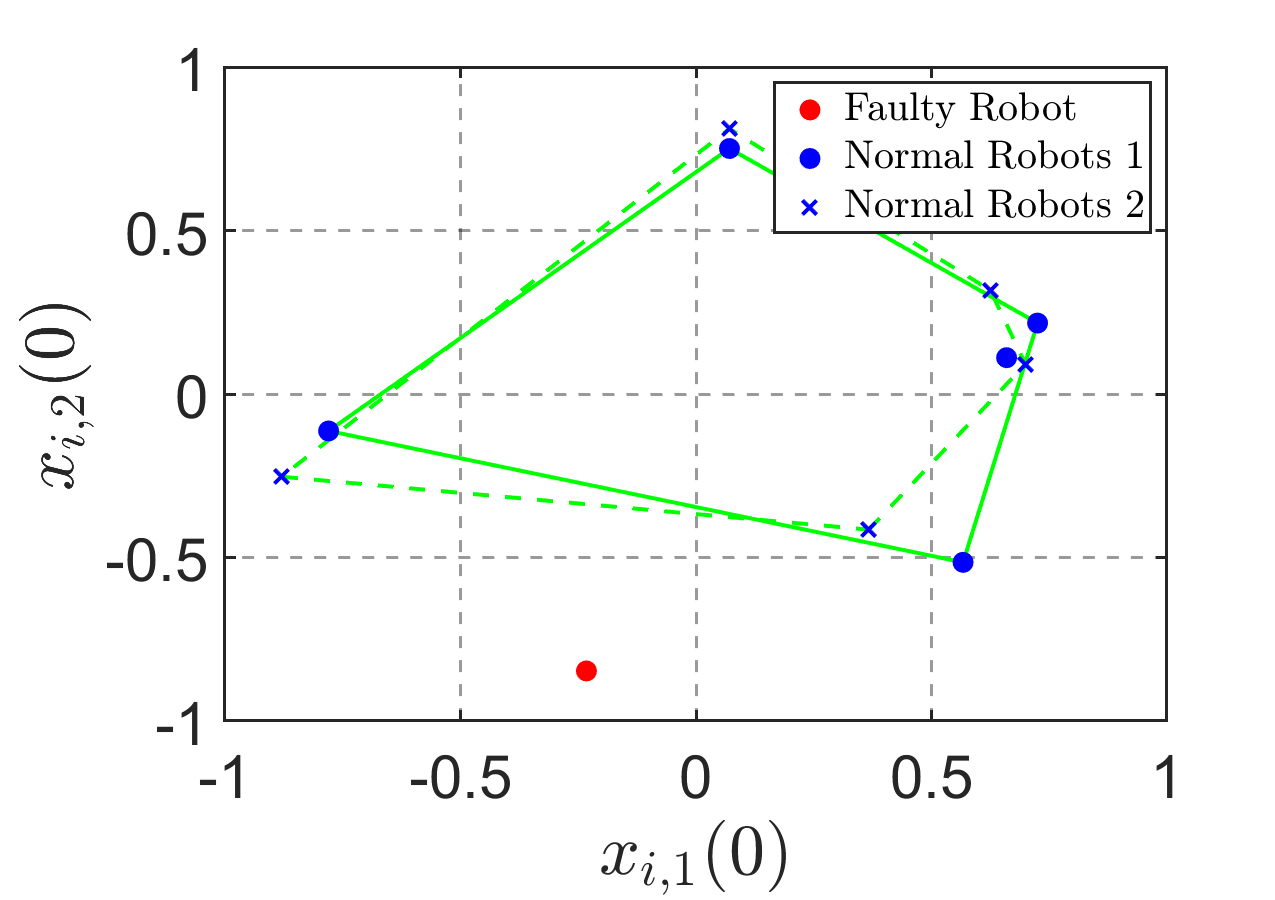}}
    \vspace{-5pt}
    \caption{Comparison of $\overline{x}_0$ and $\overline{x}_0'$}  
    \label{fig:initial_states_compare}  
  \end{subfigure}
  \hfill
  \begin{subfigure}{0.48\linewidth}
    \centerline{\includegraphics[width=0.9\linewidth]{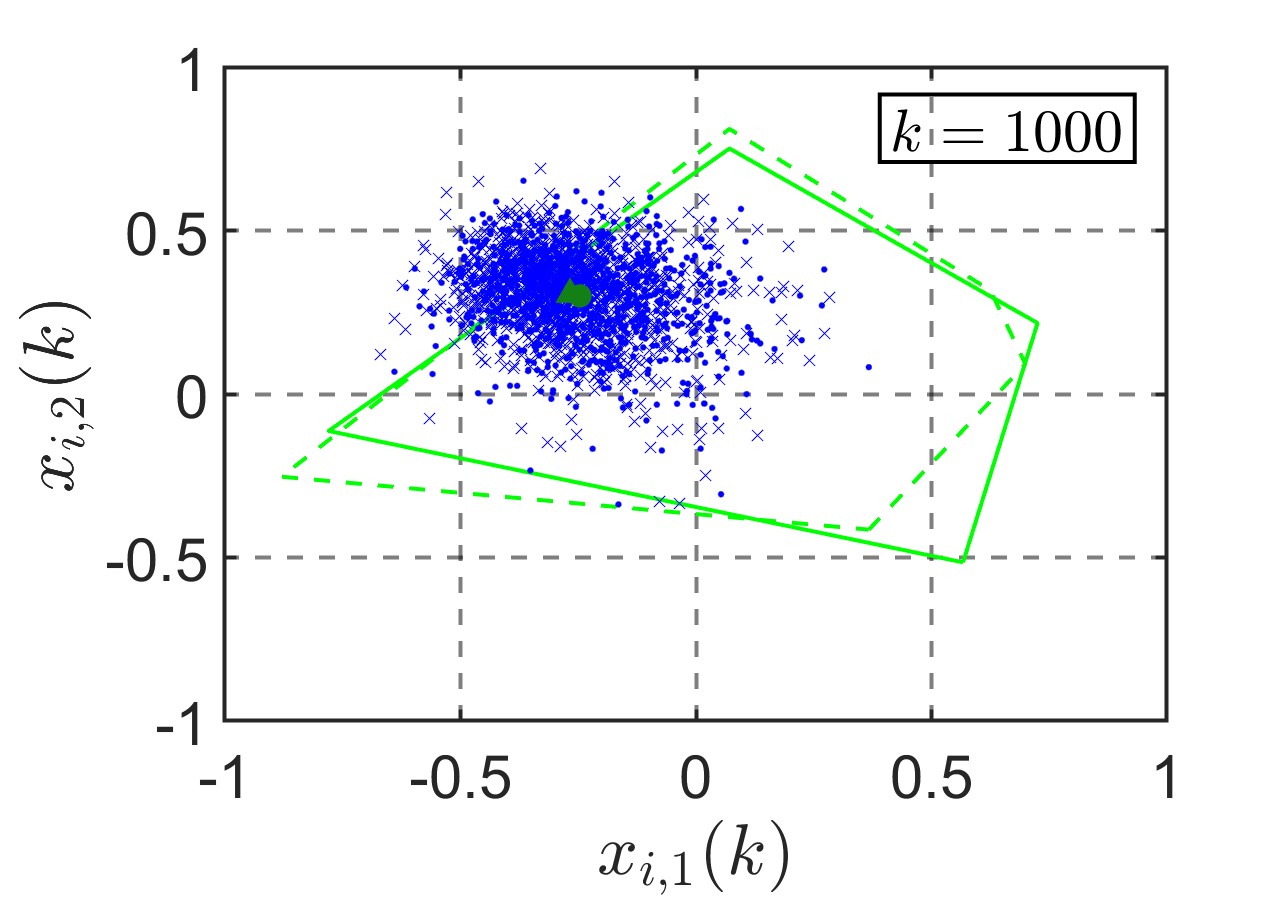}}
    \vspace{-5pt}
    \caption{Comparison of final values}
    \label{fig:final_states_compare}  
  \end{subfigure}
  \vspace{-2pt}
  \caption{Simulations for initial states $\overline{x}_0$ and $\overline{x}_0'$}
\end{figure}

Substituting the parameters into the result of Theorem \ref{th:cgp_rvs}, we obtain the following expression for the R{\'e}nyi divergence.
\begin{equation}
D_\alpha(\mathbf{y}_{\overline{\mathbf{x}}_0}(\mathbf{N})\|\mathbf{y}_{\overline{\mathbf{x}}_0'}(\mathbf{N}))\leq0.9374 \alpha.
\end{equation}
We collect all the data transmitted by the normal robots over $1000$ iterations and repeat this process $1000$ times to fit the distribution. Finally, by setting $\alpha = 2.0$, we obtain a Rényi divergence of $D_\alpha(\mathbf{y}_{\overline{\mathbf{x}}_0}(\mathbf{N})\|\mathbf{y}_{\overline{\mathbf{x}}_0'}(\mathbf{N}))=1.2863$ between the two distributions, which is smaller than $0.9374\times 2=1.8748$.

\section{Conclusion}\label{sec:conclusion}
\par In this paper, we studied private resilient vector consensus in multi-agent systems with faulty agents by adding Gaussian noise. We established resilient vector consensus in expectation and analyzed convergence accuracy using the Mahalanobis distance between the final value and its expectation and the Hausdorff distance between the two convex hulls with and without noise. We then proved $\rho$-CGP for initial states and compared it with $(\varepsilon,\delta)$-differential privacy. Numerical simulations validated our results. Future work will focus on identifying necessary and sufficient conditions for resilient consensus, finding the explicit lower bound for the nonzero entr1ies of $\mathbf{M}(t)$, deriving the exact distribution of the final value, and optimizing noise selection for CGP under different scenarios.

\section*{References}

\bibliographystyle{ieeetr}
\bibliography{ref}

\begin{thebibliography}{10}

\bibitem{du2017distributed}
H.~Du, W.~Zhu, G.~Wen, Z.~Duan, and J.~L{\"u}, ``Distributed formation control of multiple quadrotor aircraft based on nonsmooth consensus algorithms,'' {\em IEEE Transactions on Cybernetics}, vol.~49, no.~1, pp.~342--353, 2017.

\bibitem{park2017fault}
H.~Park and S.~A. Hutchinson, ``Fault-tolerant rendezvous of multirobot systems,'' {\em IEEE Transactions on Robotics}, vol.~33, no.~3, pp.~565--582, 2017.

\bibitem{angrisani2012distributed}
G.~Angrisani, C.~Roselli, and M.~Sasso, ``Distributed microtrigeneration systems,'' {\em Progress in Energy and Combustion Science}, vol.~38, no.~4, pp.~502--521, 2012.

\bibitem{he2020almost}
W.~He, F.~Qian, Q.-L. Han, and G.~Chen, ``Almost sure stability of nonlinear systems under random and impulsive sequential attacks,'' {\em IEEE Transactions on Automatic Control}, vol.~65, no.~9, pp.~3879--3886, 2020.

\bibitem{he2021secure}
W.~He, W.~Xu, X.~Ge, Q.-L. Han, W.~Du, and F.~Qian, ``Secure control of multiagent systems against malicious attacks: A brief survey,'' {\em IEEE Transactions on Industrial Informatics}, vol.~18, no.~6, pp.~3595--3608, 2022.

\bibitem{french1981consensus}
S.~French, ``Consensus of opinion,'' {\em European Journal of Operational Research}, vol.~7, no.~4, pp.~332--340, 1981.

\bibitem{xia2022asynchronous}
W.~Xia, H.~Liang, and M.~Ye, ``Asynchronous expressed and private opinion dynamics on influence networks,'' {\em IEEE Transactions on Control of Network Systems}, vol.~10, no.~2, pp.~544--555, 2022.

\bibitem{kasiviswanathan2011can}
S.~P. Kasiviswanathan, H.~K. Lee, K.~Nissim, S.~Raskhodnikova, and A.~Smith, ``What can we learn privately?,'' {\em SIAM Journal on Computing}, vol.~40, no.~3, pp.~793--826, 2011.

\bibitem{liang2023concentrated}
Y.~Liang and K.~Yi, ``Concentrated geo-privacy,'' in {\em Proceedings of the 2023 ACM SIGSAC Conference on Computer and Communications Security}, pp.~1934--1948, 2023.

\bibitem{silvestre2017stochastic}
D.~Silvestre, P.~Rosa, J.~P. Hespanha, and C.~Silvestre, ``Stochastic and deterministic fault detection for randomized gossip algorithms,'' {\em Automatica}, vol.~78, pp.~46--60, 2017.

\bibitem{ramos2023discrete}
G.~Ramos, D.~Silvestre, and C.~Silvestre, ``A discrete-time reputation-based resilient consensus algorithm for synchronous or asynchronous communications,'' {\em IEEE Transactions on Automatic Control}, vol.~69, no.~1, pp.~543--550, 2023.

\bibitem{kieckhafer1994reaching}
R.~M. Kieckhafer and M.~H. Azadmanesh, ``Reaching approximate agreement with mixed-mode faults,'' {\em IEEE Transactions on Parallel and Distributed Systems}, vol.~5, no.~1, pp.~53--63, 1994.

\bibitem{vaidya2012iterative}
N.~H. Vaidya, L.~Tseng, and G.~Liang, ``Iterative approximate byzantine consensus in arbitrary directed graphs,'' in {\em Proceedings of the 2012 ACM Symposium on Principles of Distributed Computing}, pp.~365--374, 2012.

\bibitem{leblanc2013resilient}
H.~J. LeBlanc, H.~Zhang, X.~Koutsoukos, and S.~Sundaram, ``Resilient asymptotic consensus in robust networks,'' {\em IEEE Journal on Selected Areas in Communications}, vol.~31, no.~4, pp.~766--781, 2013.

\bibitem{zhang2012robustness}
H.~Zhang and S.~Sundaram, ``Robustness of information diffusion algorithms to locally bounded adversaries,'' in {\em 2012 American Control Conference}, pp.~5855--5861, IEEE, 2012.

\bibitem{zhang2012simple}
H.~Zhang and S.~Sundaram, ``A simple median-based resilient consensus algorithm,'' in {\em 2012 50th Annual Allerton Conference on Communication, Control, and Computing}, pp.~1734--1741, IEEE, 2012.

\bibitem{vaidya2013byzantine}
N.~H. Vaidya and V.~K. Garg, ``Byzantine vector consensus in complete graphs,'' in {\em Proceedings of the 2013 ACM Symposium on Principles of Distributed Computing}, pp.~65--73, 2013.

\bibitem{yan2022resilient}
J.~Yan, X.~Li, Y.~Mo, and C.~Wen, ``Resilient multi-dimensional consensus in adversarial environment,'' {\em Automatica}, vol.~145, p.~110530, 2022.

\bibitem{mendes2013multidimensional}
H.~Mendes and M.~Herlihy, ``Multidimensional approximate agreement in byzantine asynchronous systems,'' in {\em Proceedings of the Forty-Fifth Annual ACM Symposium on Theory of Computing}, pp.~391--400, 2013.

\bibitem{abbas2022resilient}
W.~Abbas, M.~Shabbir, J.~Li, and X.~Koutsoukos, ``Resilient distributed vector consensus using centerpoint,'' {\em Automatica}, vol.~136, p.~110046, 2022.

\bibitem{hou2023privacy}
J.~Hou, J.~Wang, M.~Zhang, Z.~Jin, C.~Wei, and Z.~Ding, ``Privacy-preserving resilient consensus for multi-agent systems in a general topology structure,'' {\em ACM Transactions on Privacy and Security}, vol.~26, no.~3, pp.~1--22, 2023.

\bibitem{zhang2022privacy}
Y.~Zhang, Z.~Peng, G.~Wen, J.~Wang, and T.~Huang, ``Privacy preserving-based resilient consensus for multiagent systems via state decomposition,'' {\em IEEE Transactions on Control of Network Systems}, vol.~10, no.~3, pp.~1172--1183, 2022.

\bibitem{tu2023privacy}
H.~Tu, Y.~Du, H.~Yu, X.~Lu, and S.~Lukic, ``Privacy-preserving robust consensus for distributed microgrid control applications,'' {\em IEEE Transactions on Industrial Electronics}, vol.~71, no.~4, pp.~3684--3697, 2023.

\bibitem{fiore2019resilient}
D.~Fiore and G.~Russo, ``Resilient consensus for multi-agent systems subject to differential privacy requirements,'' {\em Automatica}, vol.~106, pp.~18--26, 2019.

\bibitem{liu2024trade}
B.~Liu and C.~Zhao, ``Trade-off between privacy and accuracy in resilient vector consensus,'' in {\em 2024 American Control Conference}, pp.~1807--1812, IEEE, 2024.

\bibitem{dwork2014algorithmic}
C.~Dwork and A.~Roth, ``The algorithmic foundations of differential privacy,'' {\em Foundations and Trends{\textregistered} in Theoretical Computer Science}, vol.~9, no.~3--4, pp.~211--407, 2014.

\bibitem{renyi1961measures}
A.~R{\'e}nyi, ``On measures of entropy and information,'' in {\em Proceedings of the Fourth Berkeley Symposium on Mathematical Statistics and Probability, volume 1: Contributions to the Theory of Statistics}, vol.~4, pp.~547--562, University of California Press, 1961.

\bibitem{wang2019privacy}
Y.~Wang, ``Privacy-preserving average consensus via state decomposition,'' {\em IEEE Transactions on Automatic Control}, vol.~64, no.~11, pp.~4711--4716, 2019.

\bibitem{nozari2017differentially}
E.~Nozari, P.~Tallapragada, and J.~Cort{\'e}s, ``Differentially private average consensus: Obstructions, trade-offs, and optimal algorithm design,'' {\em Automatica}, vol.~81, pp.~221--231, 2017.

\bibitem{mo2016privacy}
Y.~Mo and R.~M. Murray, ``Privacy preserving average consensus,'' {\em IEEE Transactions on Automatic Control}, vol.~62, no.~2, pp.~753--765, 2016.

\bibitem{jadhav1993computing}
S.~Jadhav and A.~Mukhopadhyay, ``Computing a centerpoint of a finite planar set of points in linear time,'' in {\em Proceedings of the Ninth Annual Symposium on Computational geometry}, pp.~83--90, 1993.

\bibitem{chan2004optimal}
T.~M. Chan, ``An optimal randomized algorithm for maximum tukey depth.,'' in {\em Proceedings of the Twenty-Fifth Annual Symposium on Computational Geometry}, vol.~4, pp.~430--436, 2004.

\bibitem{miller2009approximate}
G.~L. Miller and D.~R. Sheehy, ``Approximate center points with proofs,'' in {\em Proceedings of the Twenty-Fifth Annual Symposium on Computational Geometry}, pp.~153--158, 2009.

\bibitem{cherapanamjeri2024computing}
Y.~Cherapanamjeri, ``Computing approximate centerpoints in polynomial time,'' in {\em 2024 IEEE 65th Annual Symposium on Foundations of Computer Science}, pp.~1654--1668, IEEE, 2024.

\bibitem{huang2012stochastic}
M.~Huang, ``Stochastic approximation for consensus: a new approach via ergodic backward products,'' {\em IEEE Transactions on Automatic Control}, vol.~57, no.~12, pp.~2994--3008, 2012.

\bibitem{seneta2006non}
E.~Seneta, {\em Non-negative matrices and Markov chains}.
\newblock Springer Science \& Business Media, 2006.

\bibitem{nedic2016convergence}
A.~Nedi{\'c} and J.~Liu, ``On convergence rate of weighted-averaging dynamics for consensus problems,'' {\em IEEE Transactions on Automatic Control}, vol.~62, no.~2, pp.~766--781, 2016.

\bibitem{de2000mahalanobis}
R.~De~Maesschalck, D.~Jouan-Rimbaud, and D.~L. Massart, ``The mahalanobis distance,'' {\em Chemometrics and Intelligent Laboratory Systems}, vol.~50, no.~1, pp.~1--18, 2000.

\bibitem{chen2007new}
X.~Chen, ``A new generalization of chebyshev inequality for random vectors,'' {\em arXiv preprint arXiv:0707.0805}, 2007.

\bibitem{duda2006pattern}
R.~O. Duda, P.~E. Hart, {\em et~al.}, {\em Pattern Classification}.
\newblock John Wiley \& Sons, 2006.

\bibitem{holton2003value}
G.~A. Holton, {\em Value-at-risk}.
\newblock Academic Press, 2003.

\bibitem{abbas2020interplay}
W.~Abbas, M.~Shabbir, J.~Li, and X.~Koutsoukos, ``Interplay between resilience and accuracy in resilient vector consensus in multi-agent networks,'' in {\em 2020 IEEE Conference on Decision and Control}, pp.~3127--3132, IEEE, 2020.

\bibitem{durrett2019probability}
R.~Durrett, {\em Probability: theory and examples}, vol.~49.
\newblock Cambridge University Press, 2019.

\bibitem{murtagh2015complexity}
J.~Murtagh and S.~Vadhan, ``The complexity of computing the optimal composition of differential privacy,'' in {\em Theory of Cryptography Conference}, pp.~157--175, Springer, 2015.

\bibitem{har2020improved}
S.~Har-Peled and T.~Zhou, ``Improved approximation algorithms for tverberg partitions,'' {\em arXiv preprint arXiv:2007.08717}, 2020.

\end{thebibliography}

\begin{IEEEbiography}[{\includegraphics[width=1in,height=1.25in,clip,keepaspectratio]{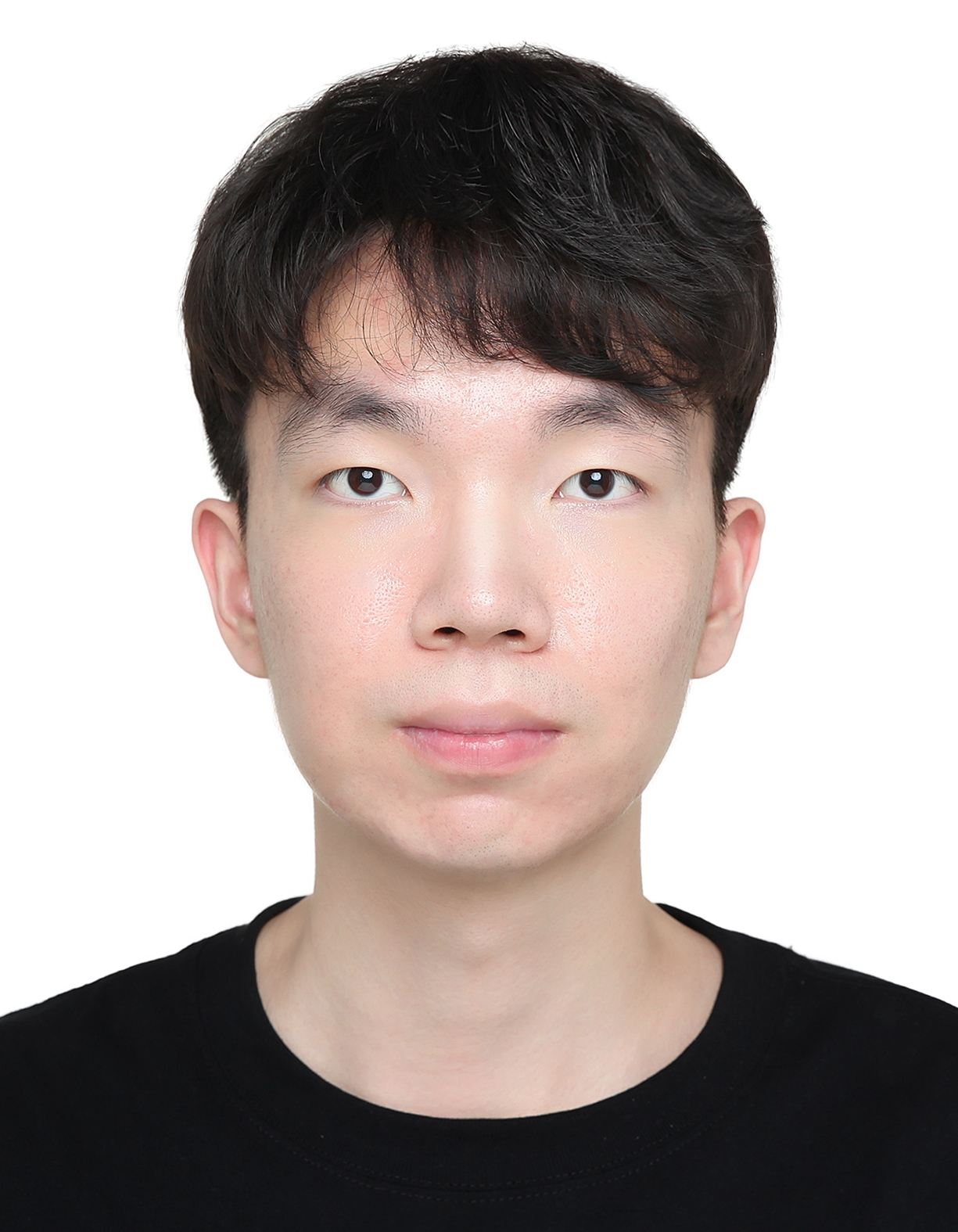}}]{Bing Liu} received the B.Sc. degree in Automation from Tongji University, Shanghai, China, in 2023. He is currently studying for his doctor's degree with the College of Control Science and Engineering, Zhejiang University, Hangzhou, China.
His current research interests include multi-agent systems and differential privacy.
\end{IEEEbiography}
\begin{IEEEbiography}[{\includegraphics[width=1in,height=1.25in,clip,keepaspectratio]{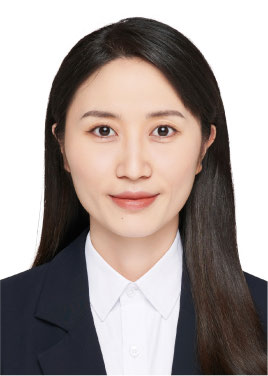}}]{Chengcheng Zhao} (Member, IEEE) received the
B.Sc. degree in measurement and control technology
and instruments from Hunan University, Changsha,
China, in 2013, and the Ph.D. degree in control
science and engineering from Zhejiang University,
Hangzhou, China, in 2018. She was a PostDoctoral Fellow with the College of Control
Science and Engineering, Zhejiang University,
from 2018 to 2021. She is currently a Researcher
with the College of Control Science and Engineering, Zhejiang University. Her research interests
include consensus and distributed optimization, and security and privacy in
networked systems. She received the IEEE PESGM 2017 Best Conference
Papers Award, and one of her papers was shortlisted in the IEEE
ICCA 2017 Best Student Paper Award Finalist. She is an Editor of \emph{Wireless
Networks} and \emph{IET Cyber-Physical Systems: Theory and Applications}.
\end{IEEEbiography}
\begin{IEEEbiography}[{\includegraphics[width=1in,height=1.25in,clip,keepaspectratio]{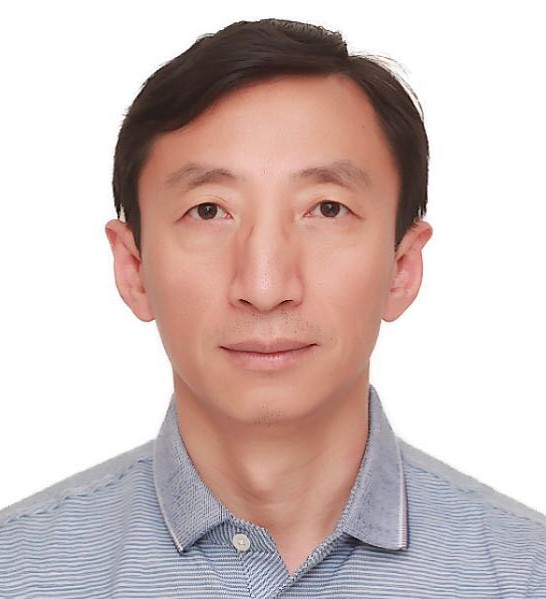}}]{Li Chai} (Member, IEEE) received the B.Sc. degree in applied mathematics and the M.S. degree in control science and engineering from Zhejiang University, China, in 1994 and 1997, respectively, and the Ph.D degree in electrical engineering from the Hong Kong University of Science and Technology, Hong Kong, in 2002. From August 2002 to December 2007, he was at Hangzhou Dianzi University, China. He worked as a professor at Wuhan University of Science and Technology, China, from 2008 to 2022. In August 2022, he joined Zhejiang University, China, where he is currently a professor at the College of Control Science and Engineering. He has been a postdoctoral researcher or visiting scholar at Monash University, Newcastle University, Australia and Harvard University, USA. His research interests include stability analysis, distributed optimization, filter banks, graph signal processing, and networked control systems. Professor Chai is the recipient of the Distinguished Young Scholar of the National Science Foundation of China. He has published over 100 fully refereed papers in prestigious journals and leading conferences. He serves as the Associate Editor of IEEE Transactions on Circuit and Systems II: Express Briefs, Control and Decision and Journal of Image and Graphs.
\end{IEEEbiography}
\begin{IEEEbiography}[{\includegraphics[width=1in,height=1.25in,clip,keepaspectratio]{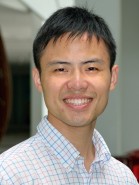}}]{Peng Cheng} (Member, IEEE) received the B.Sc. and Ph.D. degrees in control science and engineering from Zhejiang University, Hangzhou, China, in 2004 and 2009, respectively. He is currently a Professor and Dean of the College of Control Science and Engineering, Zhejiang University. He has been awarded the 2020 Changjiang Scholars Chair Professor. He has received State Science and Technology Progress Award, and MOE Natural Science Award. He serves as Associate Editors for the IEEE Transactions on Control of Network Systems. He also serves/served as Guest Editors for IEEE Transactions on Automatic Control and IEEE Transactions on Signal and Information Processing over Networks. His research interests include networked sensing and control, cyber-physical systems, and control system security.
\end{IEEEbiography}
\begin{IEEEbiography}[{\includegraphics[width=1in,height=1.25in,clip,keepaspectratio]{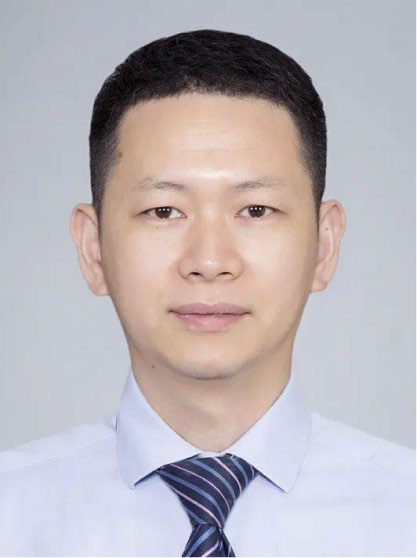}}]{Jiming Chen} (Fellow, IEEE) received the B.Sc. and Ph.D. degrees in control science and engineering from Zhejiang University, Hangzhou, China, in 2000 and 2005, respectively. He is currently a Professor with the College of Control Science and Engineering and the Deputy Director of the State Key Laboratory of Industrial Control Technology, Zhejiang University. He is also the President of Hangzhou Dianzi University, Hangzhou. He serves as an Associate Editor for the IEEE Transactions on Automatic Control. His research interests include the Internet of Things, sensor networks, networked control, and control system security.
\end{IEEEbiography}

\end{document}